\newcommand{\p}{\mathbb{P}}
\newcommand{\e}{\mathbb{E}}
\newcommand{\D}{\mathrm{d}}
\newcommand{\bs}{\boldsymbol}
\renewcommand{\a}{{\alpha}}
\newcommand{\bt}{{\bs t}}
\newcommand{\ba}{{\bs\a}}
\newcommand{\R}{\mathbb{R}}
\theoremstyle{plain}
\newtheorem{theorem}{Theorem}
\newtheorem{lemma}[theorem]{Lemma}
\newtheorem{proposition}[theorem]{Proposition}
\newtheorem{corollary}[theorem]{Corollary}
\newtheorem{example}[theorem]{Example}
\theoremstyle{remark}
\newtheorem{remark}[theorem]{Remark}
\begin{document}

\author[J. Ivanovs]{Jevgenijs Ivanovs$^*$}
\thanks{$^*$Department of Mathematics, Aarhus University,
Ny Munkegade 118, DK-8000 Aarhus C, Denmark. Email: jevgenijs.ivanovs@math.au.dk. Supported by Sapere Aude Starting Grant 8049-00021B
`Distributional Robustness in Assessment of Extreme Risk'.}

\author[K. Yamazaki]{Kazutoshi Yamazaki$^{\dagger}$}
\thanks{$^{\dagger}$School of 
Mathematics and Physics, The University of Queensland, St Lucia,
Brisbane, QLD 4072, Australia. Email: k.yamazaki@uq.edu.au. In part supported by MEXT KAKENHI grant no.\ 19H01791 and 20K03758 and JSPS Open Partnership Joint Research
Projects grant no.\ JPJSBP120209921.  }

\date{}

\title[A series expansion formula of the scale matrix]{{
A series expansion formula of the scale matrix with applications in change-point detection}}

\begin{abstract}
 
 We introduce a new  L\'evy fluctuation theoretic method to analyze the cumulative sum (CUSUM) procedure in sequential change-point detection. When observations are phase-type distributed and the post-change distribution is given by exponential tilting of its pre-change distribution, the first passage analysis of the CUSUM statistic is reduced to that of a certain Markov additive process. We develop a  novel series expansion formula of the scale matrix for Markov additive processes of finite activity, and apply it to  derive exact expressions of the average run length, average detection delay, and false alarm probability under the CUSUM procedure. 

%
%
%
%
%
%
%
%
%
%
%

\noindent \small{\noindent  {AMS 2020} Subject Classifications: 
60G51, 
60G40,
62M05.
\\
\textbf{Keywords:} L\'evy processes, Markov additive processes, scale matrices, phase-type distributions, hidden Markov models, CUSUM.}
\end{abstract}

\maketitle

\section{Introduction}

%

Sequential change-point detection is a classical sequential decision problem where the aim is to identify changes in an unobservable system through indirect observations quickly and accurately. This has applications in all fields of engineering as well as in natural and social sciences. Classical applications of change-point detection include quality control \cite{bissell1969cusum, lucas1982combined, woodall1985multivariate}, signal processing \cite{andre1988new, jones1970change}, seismology \cite{nikiforov1985application},
 finance/economics \cite{shiryaev2002quickest}, and epidemiology \cite{baron2004early}.
 For the technological developments toward automation and unmanned operation, efficient detection schemes are becoming increasingly important. Cyber- and bio-security are emerging fields where mathematical modeling for efficient detection is essential for saving people’s lives, intellectual property, and the economy. We refer the reader to \cite{poor2013introduction,poor2008quickest, MR799155,tartakovsky2014sequential}  for books on change-point detection and related sequential analysis problems.

 The cumulative sum (CUSUM) procedure, originally developed by Page  \cite{page1954continuous}, is one of the most used detection rules. It is also theoretically important because of its optimality in the sense of minimizing the \emph{Lorden detection measure} in the \emph{minimax formulation}. Lorden \cite{lorden1971procedures}  first proved its asymptotic optimality, and Moustakides \cite{moustakides1986optimal}  showed its exact optimality. One major reason for its popularity is its implementability. Its alarm time is concisely given by the first passage time of the CUSUM statistic, which is, in the terms of probability theory, the \emph{reflected process} of the \emph{log-likelihood ratio (LLR) process}. Wald’s approximation and renewal theoretic methods are popular tools for its analysis. However, they usually lead to approximate or asymptotic results. Exact computation of the performance measures, such as the average run length, average detection delay, and false alarm probability, is rarely achieved. Contrary to the continuous-time model, where analytical tools such as martingale methods and It\^o  calculus are available (see, e.g., \cite{peskir2006optimal, peskir2000sequential}), exact computation of the first passage identities for discrete-time processes tends to be infeasible by traditional methods. For this reason, research on the CUSUM statistic, which is a discrete-time process, has focused on pursuing approximate results. We refer the reader to, e.g., \cite[Ch.\ 8]{tartakovsky2014sequential} for a detailed review and numerical methods for the CUSUM stopping rule.



In this paper, we develop a new series expansion formula of the so-called scale matrix and apply it in obtaining
 \emph{exact expressions} of the performance measures of the CUSUM procedure, focusing on a case where observations are phase-type (PH) distributed and the post-change distribution is given by exponential tilting of its pre-change distribution. A PH distribution is given as the absorption time distribution of a finite-state continuous-time Markov chain comprising transient states and a single absorbing state. Examples of PH distributions include (hyper-)exponential, Erlang, and Coxian distributions. The class of PH distributions is dense in the class of all positive distributions in the sense of weak convergence. Hence, in principle, any positive distribution can be approximated by PH distributions. An array of fitting algorithms is provided, for example, in \cite{asmussen1996fitting,horvath2002phfit,  okamura2011refined, thummler2006novel}. In particular, when a distribution has a completely monotone density, there are fitting algorithms using hyper-exponential distributions such as  \cite{feldmann1998fitting}, which are guaranteed to converge to the true distribution. We refer the reader to \cite{APQ}  and references therein for a comprehensive review of the PH distribution, and recent results on matrix-exponential distributions such as \cite{bean2022rap,cheung2022multivariate}. 

The scale matrix is defined for 
\emph{Markov additive processes (MAPs)}, whose research is receiving much attention but is still under development and rarely applied in the literature on sequential testing. 
A MAP is a bivariate Markov process $(X,J)$, where the increments of $X$, called the \emph{ordinator}, are governed by a continuous-time Markov chain $J$, called the \emph{modulator}. Conditionally on $J = i$, the ordinator $X$ evolves as some  L\'evy process, say $X^{(i)}$, until $J$ changes its state to some $j$, at which instant an independent jump specified by the pair $(i,j)$ is introduced into $X$.
 As obtained, for example in \cite{IP12}, many first-passage identities can be expressed in terms of the scale matrix. However, its applications have been severely limited due to the difficulty of computing the scale matrix. In this paper, we develop an exact and computationally feasible method of computing the scale matrix for a wide subset of MAPs, including those required for the analysis of CUSUM.


The connection between the LLR process and the MAP is established as follows.

 Given two probability distributions, $F_0$ and $F_1$, on $(0,\infty)$ with their respective densities $f_0$ and $f_1$,  the LLR process under a sequence of observations $\zeta = (\zeta_1, \zeta_2, \ldots)$ is given by
\begin{align}
L_n :=  \sum_{i=1}^n \log  \frac {f_1(\zeta_i)} {f_0(\zeta_i)}, \quad n \geq 1. \label{LLR}
\end{align}
When $\zeta$ are independent and identically distributed (i.i.d.), $L$ becomes a random walk. However, the observations $\zeta$ are, in general, not necessarily  $F_0$- or $F_1$-distributed.
When $f_1$ is given by exponential tilting of $f_0$, i.e.,
$f_1(x)=  {\exp(\theta x)f_0(x) } / \int_0^\infty \exp(\theta y) F_0(\D y )
 $, $x > 0$, 
for some $\theta$,
then the LLR process is reduces to 
\begin{align*}
L_n =  \sum_{i=1}^n (\theta  \zeta_i - \kappa(\theta) 
), \quad n \geq 1,
\end{align*}
where $\kappa(\theta) := \log \int_0^\infty \exp(\theta y) F_0(\D y )$.
The idea  is to consider a continuous-time process, say $X = (X_t)_{t \geq 0}$, with a constant drift $\theta$ and jumps of constant size 
$\kappa(\theta)$
with interarrival times $\zeta$ so that the $n$-th post-jump location of process $X$ coincides with $L_n$, i.e.,\ the LLR after the $n$-th observation. Furthermore, when $\zeta$ are PH distributed,  $X$ can be modeled as a MAP, using (a slight modification of) the Markov chain that describes the PH distribution as its modulator. Similar methods are used in \cite{asmussen2004russian} to analyze  PH L\'evy processes.


To the best of our knowledge, Albrecher et al.\ \cite{AAI14} is the only existing work that uses the fluctuation theory of MAPs in sequential analysis problems. In \cite{AAI14}, they focus on Wald's sequential probability ratio test (SPRT) in classical  binary sequential hypothesis testing, where observations $\zeta$ are $F_0$ or $F_1$-distributed. Hence, the LLR process $L$ becomes a random walk. In this simple i.i.d.\ setting,  the ordinator $X$ of the MAP, used instead of $L$, is reduced to 
a \emph{Sparre-Andersen process}, a generalization of the compound Poisson process with jump times given by a renewal process. The authors obtain exact expressions of the expected sample size and Type I and Type II error probabilities written in terms of the scale matrix. 

This paper is concerned with the CUSUM statistic
\begin{align}
R_n := \max_{1 \leq k \leq n} \sum_{i=k}^n \log  \frac {f_1(\zeta_i)} {f_0(\zeta_i)}, \quad n \geq 1,  \label{CUSUM}
\end{align}
which is obtained by reflecting the LLR process $L$ at a lower boundary of zero.
The CUSUM procedure triggers an alarm at the first moment this reflected process up-crosses a fixed threshold. Contrary to the study of the classical SPRT  \cite{AAI14}, \emph{the observations $\zeta$ fail to be i.i.d}. Thus, novel and more flexible approaches are required to tackle this problem. 


To enjoy the theory of MAPs, a connection between the CUSUM statistic and the reflection of a MAP must be established first, like that between the LLR and the MAP described above. In other words, we need to construct a continuous-time process $X$ so that the post-jump points of its reflected path coincide with 
 the CUSUM statistic for general observations $\zeta$, which are not necessarily $F_0$ or $F_1$-distributed. Contrary to the SPRT case  \cite{AAI14}, symmetry is lost when considering reflection at a one-sided boundary. For this reason, we require two  approaches, depending on the sign of the tilting parameter $\theta$. When $\theta > 0$, we use $X$ with a positive drift and negative jumps; when $\theta < 0$, we use $X$ with a negative drift and positive jumps.  Because the discrete-time stochastic process $R$ is expressed by means of a continuous-time stochastic process $X$, careful pathwise analysis is required.  The starting point and the (alarm-triggering) barrier must be adjusted, depending on the sign of $\theta$ (see Figure \ref{fig_LLR_CUSUM}).

We then conduct the first-passage analysis of the constructed continuous-time process $X$ by writing it as a MAP, by suitably choosing the modulator $J$.  The biggest challenge is to attain analytical and explicit results without losing the generality of the law of the observation process $\zeta$ and the change point. To present our mathematical derivation efficiently, we take two steps.

\begin{enumerate}
\item We first consider the case the observations $\zeta$ are i.i.d. This is required for computing the optimal barrier in the minimax formulation. That is the barrier where the average run length, when $\zeta$ is independent and $F_0$-distributed, equals a given parameter. In this case, as in \cite{AAI14}, the process $X$ becomes an ordinary Sparre-Andersen process, represented as a MAP modulated by a modification of the Markov chain for the PH distribution of $\zeta$.  Using the first passage identity of its reflected process  given in \cite{IP12}, the moment generating function (and thus the first moment as well) of the average run length can be written in terms of the corresponding scale matrix, for which a new series expansion formula is derived.

\item We then extend it to the case with a change point. While it is not common to be pursued in non-Bayesian formulations because of its difficulty via traditional methods, we carry out an exact computation of the performance measures of CUSUM under a set of change-point distributions. In particular, we provide an exact computation of the average run length, average detection delay, and false alarm probability when the change-point is \emph{discrete-time PH distributed}. This enables us to study, for example, the cases of geometric, 
negative binomial, and mixed geometric distributions, which are special cases of the discrete-time PH distribution (see, e.g., \cite{neuts1994matrix}). To this end, we introduce another (this time, discrete-time) Markov chain, say $Z$, which changes its state after each observation. The change-point is modeled by the first time it enters a certain subset of its state space. 
The analysis of the CUSUM statistic in this general setting is made possible by generalizing the modulator $J$ so that it keeps track of the evolution of the Markov chain for $\zeta$ and for $Z$.
\end{enumerate}



In fact, this can be further generalized by allowing the observation distribution to depend on the Markov chain $Z$. This is particularly important because, while for the design of the CUSUM rule two distributions $F_0$ and $F_1$ must be specified a priori, the true observations can be non-i.i.d.\ with distributions other than $F_0$ and $F_1$. By considering observation distributions perturbed from  $F_0$ and $F_1$,  the robustness of the designed CUSUM procedure can be evaluated analytically. This generalization can be seen as a type of \emph{hidden Markov model} of sequential change-point detection. 
See \cite{Dayanik2009,  Dayanik2021, fuh2003sprt,fuh2004asymptotic,fuh2018asymptotic} for various hidden Markov models.
It is noted that the CUSUM procedure in our Markov-modulated generalization is still defined by the simple LLR function between $f_0$ and $f_1$ (with general non-i.i.d.\ observations $\zeta$). Many non-i.i.d.\ models consider more complex LLR functions by assuming to know the conditional density given past observations (i.e.,\ $f_i(\cdot | \mathcal{X})$ for every history $\mathcal{X}$ of observations for $i = 0,1$), and they pursue only approximate results. However, in practice, conditional densities are often too complex to be calibrated, and  simpler rules are often preferred.
Here, we stick to this simple form of the LLR function and obtain exact and concrete results. 

The above procedures reduce the computation for the CUSUM procedure to that of the scale matrix. 
Hence, it is essential to develop a way to compute the scale matrix to carry out the introduced techniques in practice.  \emph{This is an important component missing in \cite{AAI14},} where the expression of the scale matrix was obtained only for the case $\zeta$ are Erlang distributed. In this paper, we derive a new series expansion formula for the scale matrix for MAPs with a constant drift and general finite-activity one-sided jumps (Theorem \ref{thm:W}).  This generalizes the series expansion formula of the scale function obtained in \cite[Thm.\ 2.2]{LW20} for the Cram\'er-Lundberg process.   In particular, the scale matrix required for the above computation for CUSUM can be explicitly and analytically written as a sum of matrix exponentials. This enables us to conduct exact computations of the performance measures for the CUSUM procedure, even for the general case modulated by $Z$. 

Our series expansion formula of the scale matrix is important in its own right. While the research on MAPs and the scale matrix is relatively new, many quantities of interest are already known to be expressible by means of the scale matrix (see, e.g., \cite{breuer2008first,ccauglar2021optimal, feng2014potential, IP12,kyprianou2008fluctuations}). This is analogous to how the scale function is used for the ordinary L\'evy process (see, e.g., \cite{bertoin1996levy, kuznetsov2012theory, kyprianou2014fluctuations}). 
However, different from the scale function that can be computed by straightforward Laplace inversion, the computation of the scale matrix is challenging and thus it has been a major obstacle to its practical applications. This new formula derived in this paper can be directly used for the study of SPRT \cite{AAI14} and its generalizations.  In addition, it has a direct contribution to the study of the Sparre-Andersen process as in \cite{borovkov2008ruin, cheung2010structural,  cheung2011orderings,gerber2005time}. These, besides the applications in sequential analysis as in this paper,  have broad applications in, e.g., insurance mathematics and queueing analysis.

To confirm the analytical results and computational feasibility, we conduct numerical experiments. We consider both simple and complex cases with non-i.i.d.\ observations. We test the results using the introduced scale matrix approach against those approximated by Monte Carlo simulation, confirming the accuracy and efficiency of the proposed method.

The rest of the paper is organized as follows. In Section \ref{section_setting}, we review the CUSUM procedure and construct a continuous-time process whose reflected path coincides with 
 the CUSUM statistic. In Section \ref{sec_fluctuations_SA}, we consider the case where $\zeta$ are i.i.d. We review  the fluctuation theory and the scale matrix, and write the average run length in terms of the scale matrix.  In Section \ref{section_series_expansion},  we derive our series expansion of the scale matrix,  with which the scale matrix required for the analysis of the CUSUM statistic is written explicitly. In Section \ref{section_non_iid}, we generalize the results by introducing a discrete-time Markov chain, and obtain the average run length, average detection delay, and false alarm probability for non-i.i.d.\ cases. We conclude the paper with numerical results in Section \ref{sec_numerics}. Some proofs are deferred to the appendix.

\section{Preliminaries} \label{section_setting}

In this section, we review the classical sequential change-point detection problem and the CUSUM procedure.  We then focus on the case the post-change distribution is given by exponential tilting of the pre-change distribution, and construct a generalization of the Sparre-Andersen process so that the post-jump points of its reflected path coincide with the CUSUM statistic.

\subsection{Change-point detection and CUSUM} \label{subsection_change_point}
To describe the classical CUSUM procedure, we first consider the classical setting where the observations are i.i.d.\ before and after the change, conditionally given the change point.   

Suppose a sequence $\zeta = (\zeta_1, \zeta_2, \ldots)$ of independent random variables are observed sequentially. At an unobservable disorder time $\nu \geq 0$, it changes its distribution from $F_0$ to $F_1$. In other words, conditionally given $\nu$, random variables $\zeta_i$ and $\zeta_j$ are independent for $i \neq j$ and $\zeta_1, \zeta_2, \ldots, \zeta_{\nu} \sim F_0$ and $\zeta_{\nu+1}, \zeta_{\nu+2}, \ldots \sim F_1$ (we follow the convention that the disorder is triggered \emph{immediately after} $\nu$-th observation). Here, we allow $\nu$ to take zero with a positive probability; in this scenario, the observation $\zeta$ is $F_1$-distributed from the first observation. We also allow $\nu$ to be infinity (and hence $\nu$ is $F_0$-distributed at all times) with a positive probability.


The objective of sequential change-point detection is to identify the disorder as quickly as possible and as accurately as possible. A strategy 
 $T$ is selected from the set $\mathcal{T}$ consisting of all stopping times with respect to the filtration generated by the observation $\zeta$, namely $\mathbb{F} := (\mathcal{F}_n)_{n \geq 0}$ with $\mathcal{F}_0$ the trivial $\sigma$-algebra and $\mathcal{F}_n := \sigma(\zeta_1, \ldots, \zeta_n)$ for $n \geq 1$. For each constant $k \geq 0$, let $\mathrm{P}_k$ be the conditional probability under which $\nu = k$ and $\mathrm{E}_k$ the corresponding expectation. In particular, under $\mathrm{P}_0$ (resp.\ $\mathrm{P}_\infty$), $\zeta$ are independent and $F_1$- (resp.\ $F_0$-)distributed. As is commonly assumed in the literature, let $F_0$ and $F_1$ admit densities  $f_0$ and $f_1$, respectively, with respect to some baseline measure.

The  CUSUM statistic after $n$ observations is given by \eqref{CUSUM}.
For convenience sake, we also let  $R_0 := 0$.
The CUSUM procedure, parameterized by a constant $A > 0$, triggers an alarm at the first time \eqref{CUSUM} exceeds $A$, namely
\begin{align}
T_A := \inf \{ n \geq 1: R_n > A \}. \label{T_A}
\end{align}
It is known that \eqref{CUSUM} admits a recursive relation (see, e.g., \cite[Eq.\ (2.6)]{moustakides2009numerical}):
\[
R_n = \max \Big(0, R_{n-1} +  \log  \frac {f_1(\zeta_n)} {f_0(\zeta_n)} \Big), \quad n \geq 1.
\]
With the LLR process defined in \eqref{LLR} and its (capped) running minimum process
\begin{align*}
\underline L_n := \Big(\min_{1 \leq k \leq n} L_k \Big) \wedge 0, \quad n \geq 0,
\end{align*}
we can write
\begin{align}
R_n=L_n-\underline L_n, \quad n \geq 0. \label{R_n}
\end{align}
In particular, under $\mathrm{P}_\infty$ and $\mathrm{P}_0$ with i.i.d.\ observations, $L$ reduces to a random walk and $R$ is its reflected process.
%
%

The CUSUM procedure is well-known for its optimality properties in the minimax formulation.  From \cite{moustakides1986optimal}, it is known that, given a parameter $\beta \geq 1$ selected by the decision maker, the CUSUM procedure \eqref{T_A} with the selection of the barrier $A_\beta$  satisfying
\begin{align}
 \mathrm{E}_\infty (T_{A_\beta}) = \beta \label{lorden_beta}
\end{align}
 is optimal in the sense that it minimizes the Lorden detection measure \cite{lorden1971procedures}:
\begin{align}
C(T) := \sup_{k \geq 0} \mathrm{ess} \sup \mathrm{E}_k \big((T-k)^+ | \mathcal{F}_k \big) \label{lorden_measure}
\end{align}
over the set of strategies
\[
\Delta_\beta  = \{ T \in \mathcal{T}: \mathrm{E}_\infty (T) \geq \beta \}.
\]

The Lorden detection measure \eqref{lorden_measure} only evaluates the worst-case performance and is often not suitable in real applications. It is thus important to consider other measures as well to evaluate a detection strategy. Popular measures, often used in Bayesian formulations,  are the average run length, average detection delay and false alarm probability, respectively given by: \begin{align}
\mathrm{ARL}(T) &:= \mathrm{E} (T), \label{def_ARL} \\
\mathrm{ADD}(T) &:= \mathrm{E} \big((T-\nu)^+\big),  \label{def_ADD} \\
\mathrm{PFA}(T) &:= \mathrm{P} (T \leq \nu).  \label{def_PFA}
\end{align}
For the above probability and expectations to make sense, the law $\mathrm{P}$ of the change point $\nu$ and the observation process $\zeta$ must be completely specified.  For example, for the computation of the optimal barrier satisfying \eqref{lorden_beta}, $\mathrm{ARL}(T_A) =  \mathrm{E}_\infty (T_A)$ with $\mathrm{P} = \mathrm{P}_\infty$. It is also of interest to consider the case of  $\mathrm{P} = \mathrm{P}_k$ where $\nu = k$ a.s.\ for $0 \leq k < \infty$.

\subsection{Our assumption}


We assume both $F_0$ and $F_1$ (that define the LLR and CUSUM statistic) are positive distributions (with support $(0,\infty)$) and their densities satisfy
\[
f_1(x)= \frac {e^{\theta x}f_0(x) } {
 \int_0^\infty e^{\theta y} F_0(\D y) 
}, \quad x > 0,
\]
for some known parameter $\theta\in (-\infty,\overline\theta) \backslash \{0\}$ where $\overline{\theta} := 
\sup \{ \theta \geq 0: \int_0^\infty e^{\theta y} F_0(\D y) 
< \infty \}$. 
With the cumulant 
\[
\kappa(\theta) := \log  \Big( \int_0^\infty e^{\theta y} F_0(\D y) \Big), 
\]
 the LLR function becomes
\[\log \frac{f_1(x)}{f_0(x)} = \theta x-\kappa(\theta), \quad x > 0. \]
 In particular, under $\mathrm{P}_0$ and $\mathrm{P}_\infty$,
the LLR process $L$  as in \eqref{LLR} becomes a random walk with i.i.d.\ increments $(\theta \zeta_n -\kappa(\theta))_{n \geq 1}$.

Well-known examples satisfying this exponential tilting assumption are two exponential densities and two Erlang densities with fixed shape parameter. As is shown in \cite{asmussen1989exponential}, an exponentially tilted distribution of the PH distribution is again PH (see \cite[Eq.\ (10)]{AAI14} for the formula).  Explicit results for the classical SPRT were obtained for the exponential case in \cite{TvA86} and the Erlang case in \cite{AAI14}. However, beyond these results, exact expressions of the performance measures are rarely obtained  in sequential analysis, even with the assumption of exponential tilting.

\subsection{Alternative expression of the CUSUM statistic}  We shall now express the CUSUM statistic in terms of a reflected path of a certain continuous-time process.
 While the definition of the CUSUM  procedure \eqref{CUSUM} requires the densities $f_0$ and $f_1$ to be specified, the process  \eqref{CUSUM} is well-defined even when $\zeta$ are non-i.i.d.\ with distributions other than $F_0$ and $F_1$.  In the subsequent discussions, let   $\zeta$ be \emph{any strictly positive sequence}.
 
%
%

Let $N = (N_t)_{t \geq 0}$ be a counting process with $N_0 = 0$ with its $l$-th jump time given by the sum of the first $l$ observations
\[
\eta_l =\zeta_1 + \zeta_2 + \cdots + \zeta_l, \quad l \geq 0.
\]
We then introduce a continuous-time process
\begin{align}
X_t^{(x)}= x + \theta t-\kappa(\theta) N_t, \quad t \geq 0, \label{Sparre-andersen_CUSUM}
\end{align}
started at $x \in \R$.
In particular, when $\zeta$ are i.i.d., $N$ reduces to  an ordinary renewal process and hence the process $X$ falls in the class of what is called Sparre-Andersen processes in actuarial science. We refer the readers to, e.g., \cite{borovkov2008ruin, cheung2010structural,  cheung2011orderings,gerber2005time} for existing research on the Sparre-Andersen process. For the rest of the paper, let us call \eqref{Sparre-andersen_CUSUM} a \emph{generalized Sparre-Andersen process} to include the cases $\zeta$ are non-i.i.d.

%

Our key observation is the equivalence of the first passage time of the CUSUM statistic \eqref{CUSUM} and that of the reflected process of $X^{(x)}$ defined by 
\begin{align}
Y_t^{(x)} := X_t^{(x)} - \underline{X}_t^{(x)} \wedge 0, \quad t \geq 0, \label{reflected_process}
\end{align}
where $\underline{X}_t^{(x)} := \inf_{0 \leq s \leq t} X_s^{(x)}$.
We denote the first passage time of \eqref{reflected_process} by
\begin{align}
\tau_a^{(x)} :=\inf\{t\geq 0: Y_t^{(x)}> a\},\qquad  a > 0. \label{tau_B}
\end{align}
For simplicity, we drop the superscript when $x = 0$ and write $X = X^{(0)}$, $Y = Y^{(0)}$, and $\tau_a = \tau_a^{(0)}$. 

By construction, we have $\log (f_1(\zeta_l) / f_0(\zeta_l)) = \theta \zeta_l-\kappa(\theta) = X_{\eta_l} - X_{\eta_{l-1}} $ for $l \geq 1$, and hence 
\begin{align}
L_n = X_{\eta_n} \quad \textrm{and} \quad \underline L_n = \min_{1 \leq k \leq n} X_{\eta_k} \wedge 0, \quad n \geq 0. \label{connection_S_X}
\end{align}

We deal with the cases $\theta$ is positive and negative separately because the behavior of $X^{(x)}$ differs depending on the sign of $\theta$ as in the following remark.
\begin{remark} \label{remark_sign}
Because 
\[
\mathrm{sgn} (\theta) = \mathrm{sgn} (\kappa(\theta)),
\]
\begin{enumerate}
\item when $\theta > 0$, $X^{(x)}$ has a constant positive drift with negative jumps;
\item when $\theta < 0$, $X^{(x)}$ has a constant negative drift with positive jumps.
\end{enumerate}
Following the terminology of the theory of L\'evy processes, we call $X^{(x)}$ \emph{spectrally negative} when $\theta > 0$ and \emph{spectrally positive} when $\theta < 0$.
\end{remark}

Our approach is to cast the problem for $R$ into that of $Y^{(x)}$ by using the relation shown below.  See also Figure \ref{fig_LLR_CUSUM} for graphical illustrations of the link between $R$ and $Y^{(x)}$.
\begin{figure}[htbp]
\begin{center}
\begin{minipage}{1.0\textwidth}
\centering
\begin{tabular}{cc}
 \includegraphics[scale=0.5]{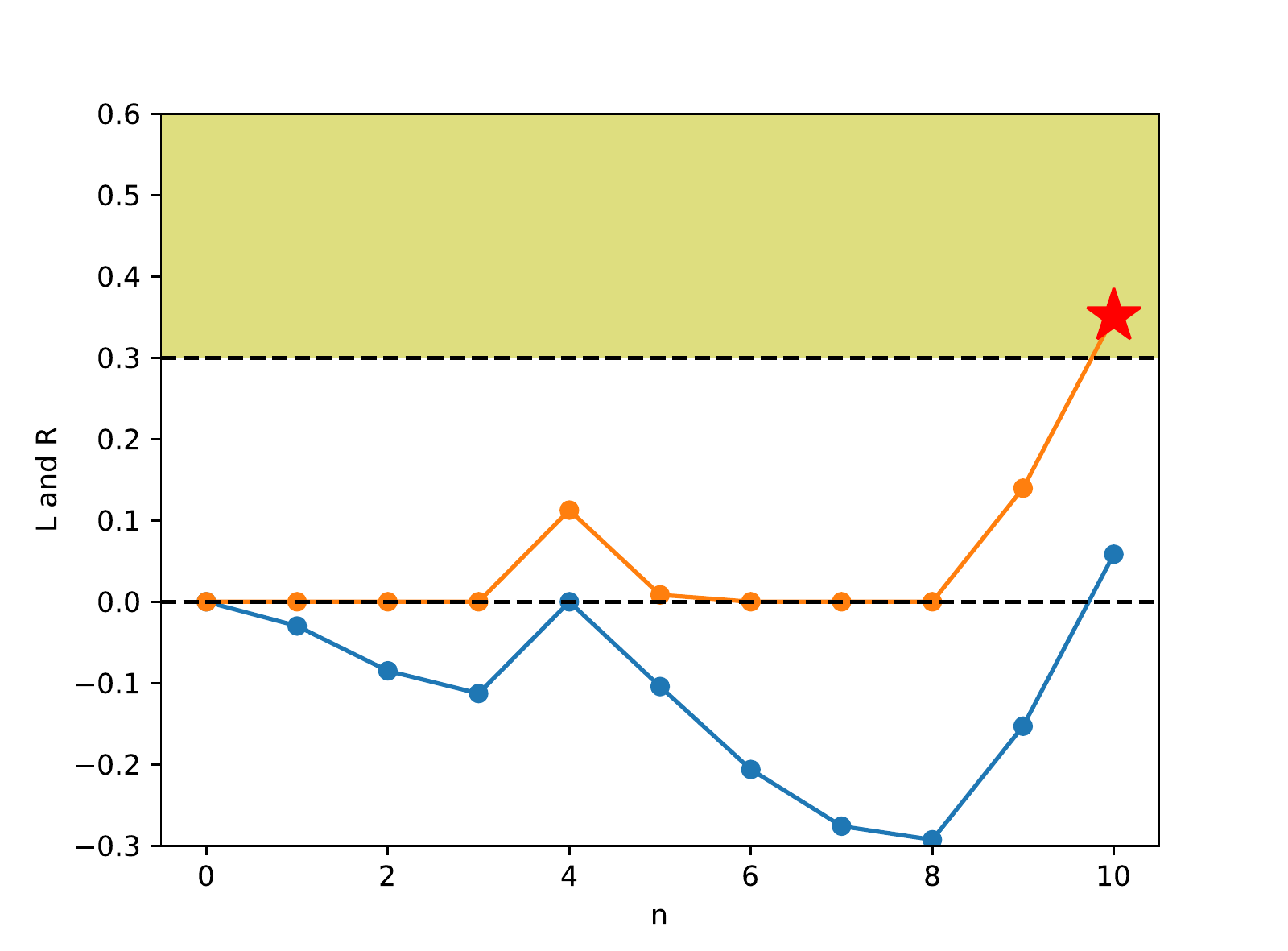} & \includegraphics[scale=0.5]{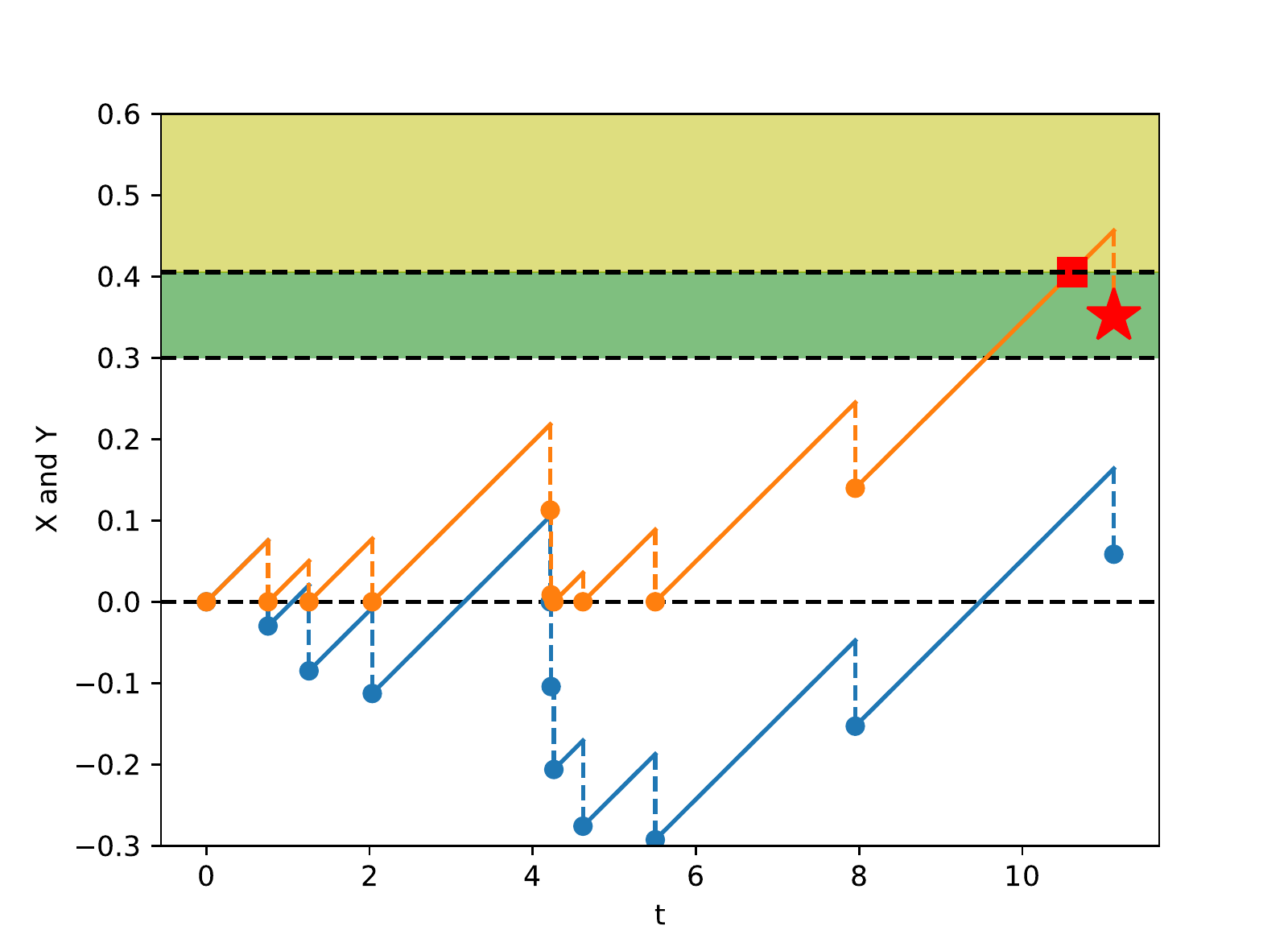}  \\
$n \mapsto L_n, R_n$ when $\theta > 0$ &  $t \mapsto X_t, Y_t$ when $\theta > 0$ \\
 \includegraphics[scale=0.5]{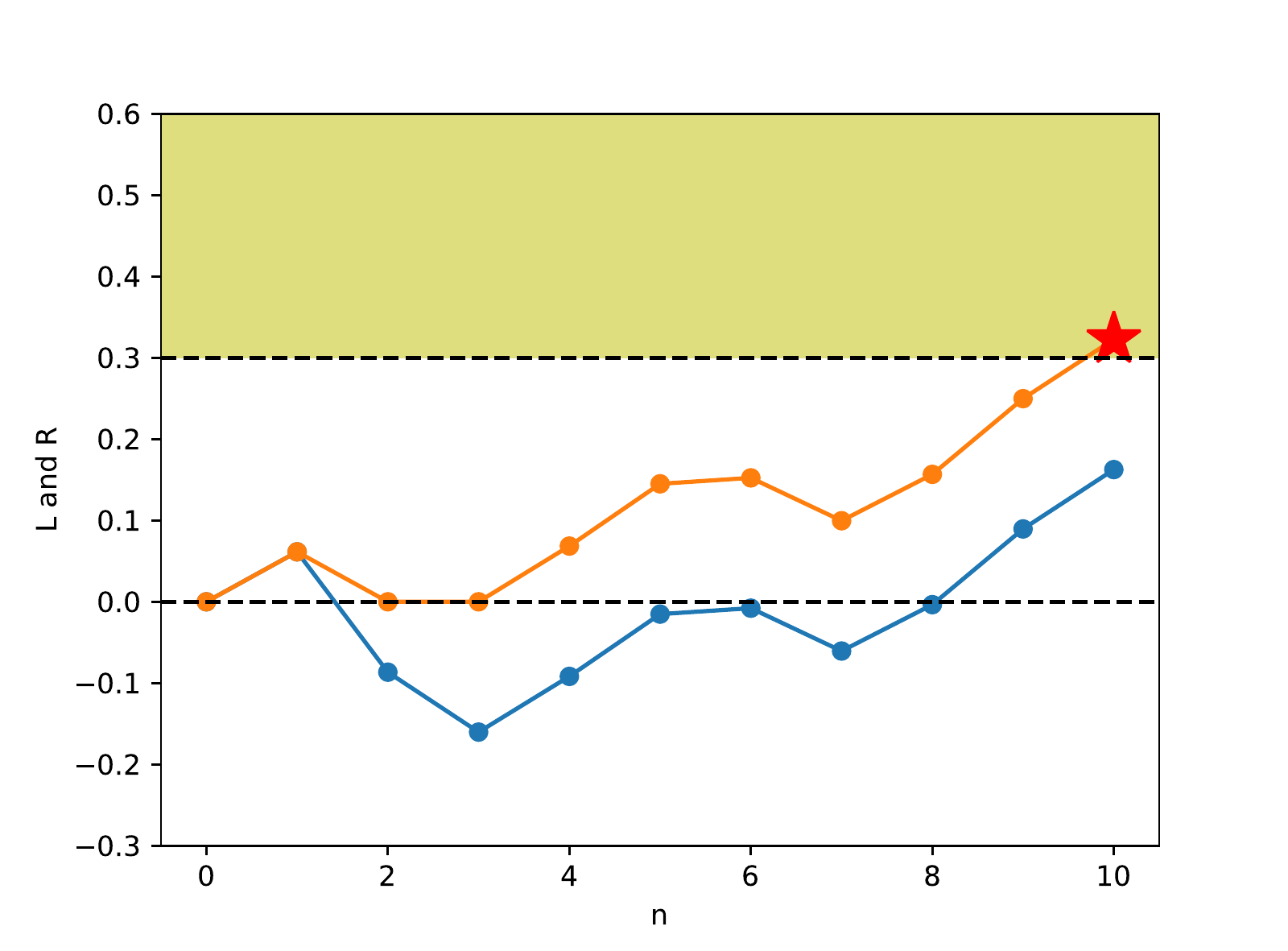} & \includegraphics[scale=0.5]{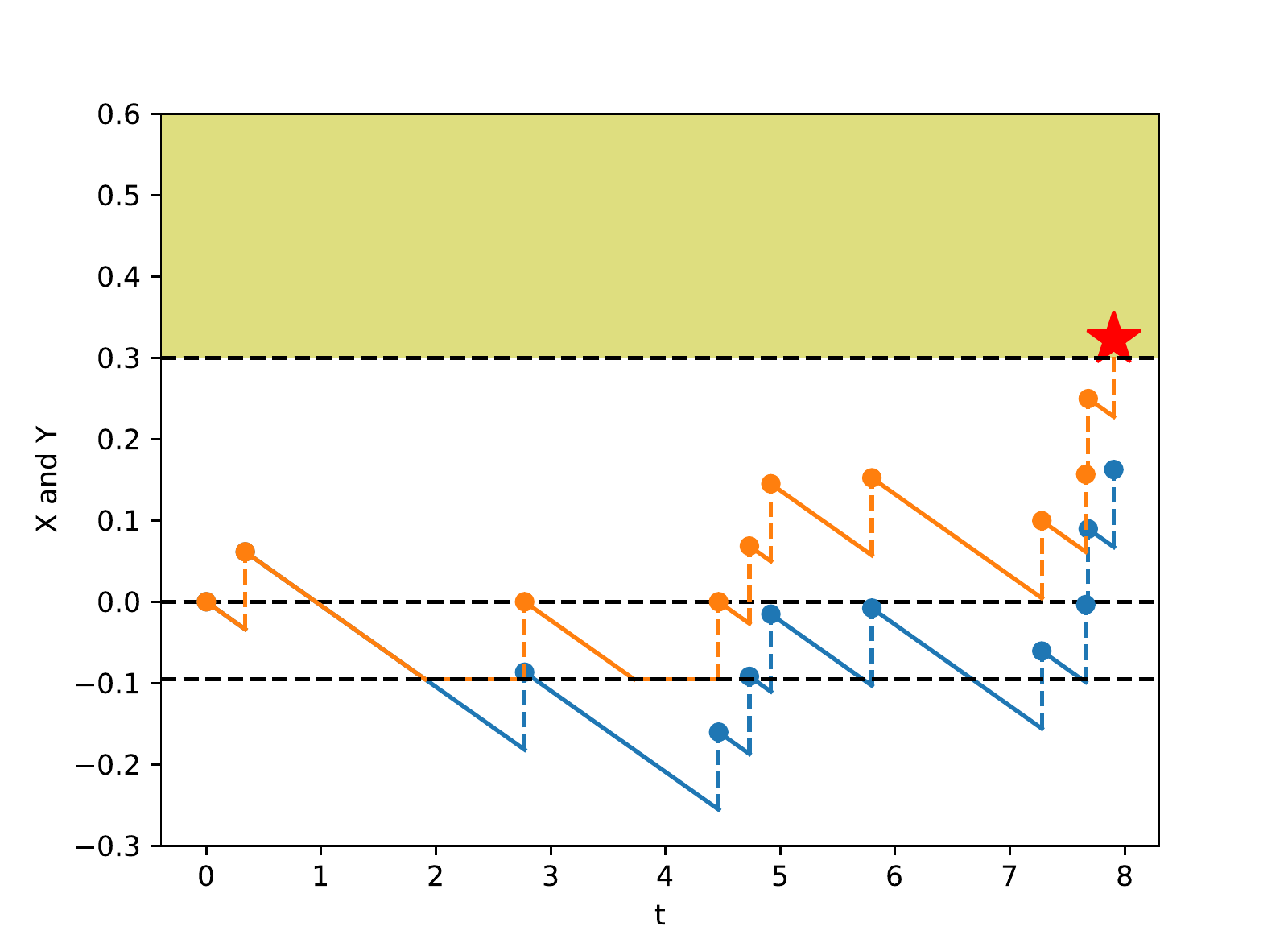}  \\
$n \mapsto L_n, R_n$ when $\theta < 0$ &    $t \mapsto X_t^{(|\kappa(\theta)|)} - |\kappa(\theta)|, Y_t^{(|\kappa(\theta)|)} - |\kappa(\theta)|$ when $\theta < 0$
 \end{tabular}
\end{minipage}
\caption{\small Sample paths of LLR (blue) and CUSUM (orange) and the corresponding generalized Sparre-Andersen process (blue) and its reflected process (orange) for $\theta  > 0$  and $\theta < 0$ when $A = 0.3$. (Top)
The LLR process $L$ and the CUSUM statistic $R$ on the left plot and the corresponding generalized Sparre-Andersen process $X$ and its reflected process $Y$ on the right plot for the case $\theta > 0$.   The star indicates the time when $R$ exceeds $A$ for the first time.  In the right plot, the square shows the first time $Y$ goes above $A + \kappa(\theta) = 0.40536$ and the star shows the first jump time afterwards. 
Note that $R_n = Y_{\eta_n}$ for all $n \geq 0$.
(Bottom) Similar plots for the case $\theta < 0$, except that
the right figure shows $X^{(|\kappa(\theta)|)} - |\kappa(\theta)|, Y^{(|\kappa(\theta)|)} - |\kappa(\theta)|$, which are translations of $X^{(|\kappa(\theta)|)}$ and $Y^{(|\kappa(\theta)|)}$. The time $\tau^{(|\kappa(\theta)|)}_{A+ |\kappa(\theta)|}$ indicated by the star is the same as the time $Y^{(|\kappa(\theta)|)} - |\kappa(\theta)|$ crosses $A$.  Note that $R_n 
= Y_{\eta_n}^{(|\kappa(\theta)|)} - |\kappa(\theta)|$ for $n \geq 0$. 
}  \label{fig_LLR_CUSUM}
\end{center}
\end{figure}

\begin{proposition} \label{proposition_equivalence}
Fix $A > 0$ and let $\zeta$ be any strictly positive sequence.  The following holds a.s.
\begin{enumerate}
\item
When $\theta > 0$, $R_n = Y_{\eta_n}$ for $n \geq 0$ and
$T_A
= 1 + N_{\tau_{A+ \kappa(\theta)}}$.
\item When $\theta < 0$,
$R_n 
= Y_{\eta_n}^{(|\kappa(\theta)|)} - |\kappa(\theta)|$ for $n \geq 0$ and
$T_A = N_{\tau^{(|\kappa(\theta)|)}_{A+ |\kappa(\theta)|}}$.
\end{enumerate} 
\end{proposition}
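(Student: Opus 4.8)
The plan is to treat the two items separately, establishing in each the sample-path identity relating $R$ to the reflected process and then the stopping-time identity; everything is a pathwise argument, valid for each fixed strictly positive sequence $\zeta$. The structural fact driving everything is that on each inter-jump interval $[\eta_{k-1},\eta_k)$ the process $X^{(x)}$ is monotone --- strictly increasing when $\theta>0$ (the spectrally negative case of Remark~\ref{remark_sign}) and strictly decreasing when $\theta<0$ (the spectrally positive case) --- while its jumps, occurring at the times $\eta_1,\eta_2,\dots$, are one-sided of constant size $|\kappa(\theta)|$.

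For the pathwise identity I would start from $R_n=L_n-\underline L_n$ in \eqref{R_n} together with \eqref{connection_S_X}. When $\theta>0$, monotonicity forces $\underline X_t$ to be constant on each $[\eta_{k-1},\eta_k)$, so the running infimum over $[0,\eta_n]$ is attained among the post-jump values $\{X_0=0,X_{\eta_1},\dots,X_{\eta_n}\}$; hence $\underline X_{\eta_n}\wedge 0=\bigl(\min_{1\leq k\leq n}X_{\eta_k}\bigr)\wedge 0=\underline L_n$, and therefore $Y_{\eta_n}=X_{\eta_n}-\underline X_{\eta_n}\wedge 0=L_n-\underline L_n=R_n$. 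When $\theta<0$ the running infimum over $[0,\eta_n]$ is instead attained among the \emph{pre-jump} values, and the point of starting $X$ at $|\kappa(\theta)|$ is precisely that $X^{(|\kappa(\theta)|)}_{\eta_k^-}=|\kappa(\theta)|+\theta\eta_k-\kappa(\theta)(k-1)=\theta\eta_k-\kappa(\theta)k=L_k$ (using $|\kappa(\theta)|=-\kappa(\theta)$ from Remark~\ref{remark_sign}); together with $X^{(|\kappa(\theta)|)}_{\eta_n}=L_n+|\kappa(\theta)|$ and $X^{(|\kappa(\theta)|)}_0=|\kappa(\theta)|>L_1$, this gives $\underline X^{(|\kappa(\theta)|)}_{\eta_n}=\min_{1\leq k\leq n}L_k$ for $n\geq 1$, so $Y^{(|\kappa(\theta)|)}_{\eta_n}=(L_n+|\kappa(\theta)|)-\underline L_n$ and $Y^{(|\kappa(\theta)|)}_{\eta_n}-|\kappa(\theta)|=L_n-\underline L_n=R_n$. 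The cases $n=0$ follow directly from $R_0=0$ and $\eta_0=0$.

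For the stopping-time identity with $\theta>0$: since $\underline X$ is constant between jumps, $Y$ rises continuously with slope $\theta$ on each inter-jump interval and can only drop at the times $\eta_k$, where the post-jump value obeys $Y_{\eta_k}=\max\bigl(0,\,Y_{\eta_k^-}-\kappa(\theta)\bigr)$. Hence $A+\kappa(\theta)$ is first reached by a continuous up-crossing, so with $m:=N_{\tau_{A+\kappa(\theta)}}$ we have $\eta_m<\tau_{A+\kappa(\theta)}<\eta_{m+1}$ and $Y_{\tau_{A+\kappa(\theta)}}=A+\kappa(\theta)$. For $k\leq m$ we have $\eta_k<\tau_{A+\kappa(\theta)}$, so $Y_{\eta_k^-}\leq A+\kappa(\theta)$ and $R_k=Y_{\eta_k}\leq\max(0,A)=A$; while $Y$ keeps increasing on the nonempty interval $(\tau_{A+\kappa(\theta)},\eta_{m+1})$, so $Y_{\eta_{m+1}^-}>A+\kappa(\theta)>\kappa(\theta)$, the drop is not absorbed by the reflection, and $R_{m+1}=Y_{\eta_{m+1}}=Y_{\eta_{m+1}^-}-\kappa(\theta)>A$. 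Thus $T_A=m+1=1+N_{\tau_{A+\kappa(\theta)}}$. For $\theta<0$ the roles reverse: $Y^{(|\kappa(\theta)|)}$ is non-increasing on each inter-jump interval (decreasing with slope $\theta$, then held at $0$) and jumps up by $|\kappa(\theta)|$ at each $\eta_l$, so it can first exceed $A+|\kappa(\theta)|$ only at a jump time, i.e.\ $\tau^{(|\kappa(\theta)|)}_{A+|\kappa(\theta)|}=\eta_m$ for some $m$ with $N_{\eta_m}=m$; combined with $R_n=Y^{(|\kappa(\theta)|)}_{\eta_n}-|\kappa(\theta)|$ this gives $R_k\leq A$ for $k<m$ and $R_m>A$, so $T_A=m=N_{\tau^{(|\kappa(\theta)|)}_{A+|\kappa(\theta)|}}$.

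I expect the main obstacle to be the stopping-time bookkeeping: making precise why the extra ``$+1$'' appears when $\theta>0$ but not when $\theta<0$ --- the barrier-crossing time is a continuity point of the spectrally negative reflected process but a jump time of the spectrally positive one --- and checking that the barrier shift by $\kappa(\theta)$, and for $\theta<0$ also the starting-point shift by $|\kappa(\theta)|$, are calibrated exactly so that ``$Y$ first up-crosses the shifted barrier'' is equivalent to ``$R$ first exceeds $A$ at the matching observation index''. This requires treating the reflection at $0$ carefully when evaluating $Y_{\eta_k}$ (distinguishing whether a downward jump is absorbed), and disposing of the degenerate case $\tau=\infty$, in which $T_A=\infty$ as well; both are handled by the same monotonicity observations. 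As a sanity check of the pathwise identity, the reflection rule yields $Y_{\eta_n}=\max(0,\,Y_{\eta_{n-1}}+\theta\zeta_n-\kappa(\theta))$, which is exactly the known CUSUM recursion $R_n=\max(0,\,R_{n-1}+\log(f_1(\zeta_n)/f_0(\zeta_n)))$.
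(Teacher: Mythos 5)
Your proposal is correct and follows essentially the same route as the paper's proof: locate the running infimum at the post-jump points when $\theta>0$ and at the pre-jump points when $\theta<0$ (where the shift of the starting point by $|\kappa(\theta)|$ makes $X^{(|\kappa(\theta)|)}_{\eta_k-}=L_k$), and then observe that the shifted barrier is first crossed continuously in the spectrally negative case but at a jump time in the spectrally positive case, which is exactly the paper's explanation of the extra ``$+1$''. Your version of the jump-counting step (fixing $m=N_{\tau}$ and checking $R_k\le A$ for $k\le m$ and $R_{m+1}>A$ directly) is just a slightly more explicit rendering of the paper's argument.
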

\begin{proof}



(1) Suppose $\theta > 0$  (and then $\kappa(\theta) > 0$). In view of Remark \ref{remark_sign}(1), the running infimum process
$\underline X$ is updated only immediately after (negative) jumps and hence $\min_{1 \leq k \leq n} X_{\eta_k} \wedge 0 = \underline{X}_{\eta_n}$ implying, together with \eqref{connection_S_X},  $\underline{L}_n =  \underline{X}_{\eta_n}$ for all $n \geq 0$.
Therefore \eqref{R_n} becomes
\[
R_n =  X_{\eta_n} - \underline{X}_{\eta_n}  = Y_{\eta_n}, \quad n \geq 0.
\]
If $R_n = Y_{\eta_n} > 0$ then no reflection is made at $\eta_n$ (jump size is exactly $-\kappa(\theta)$) and necessarily  $R_n = Y_{\eta_n}= Y_{\eta_n-} - \kappa(\theta) > 0$. 
Therefore, for $A > 0$, using that $X$ has only negative jumps,
\begin{align*}
T_A
&= \inf \{ n \geq 1: Y_{\eta_n-} - \kappa(\theta) > A \} 
\\ &= 1+ \# \textrm{\{jumps of $X$ (or $N$) before the instance $Y$ exceeds $A + \kappa(\theta)$\}} \\
&= 1 + N_{\tau_{A+ \kappa(\theta)}}
\end{align*}
where the addition of $1$ is needed because we also need to count the last observation, which is the jump occurring after $Y$ crosses $A + \kappa(\theta)$ upward (and then lands on somewhere above $A$); see Figure \ref{fig_LLR_CUSUM}. 


(2)  Suppose $\theta < 0$ (and then $\kappa(\theta) < 0$). 
Because $X_{\eta_n} = X_{\eta_n-} - \kappa(\theta) =  X_{\eta_n-} + |\kappa(\theta)| = X_{\eta_n-}^{|\kappa(\theta)|}$, by \eqref{connection_S_X},
\[
\underline L_n = \min_{1 \leq k \leq n} X_{\eta_k} \wedge 0 
 =  \min_{1 \leq k \leq n} (X_{\eta_k-}^{(|\kappa(\theta)|)}) \wedge 0.
\]
 In view of Remark \ref{remark_sign}(2), $X$ has a negative drift with positive jumps and hence
$\min_{1 \leq k \leq n} X_{\eta_k-}^{(|\kappa(\theta)|)} =  \underline{X}_{\eta_n}^{(|\kappa(\theta)|)}$ and therefore
$\underline L_n = 
 \underline{X}_{\eta_n}^{(|\kappa(\theta)|)} \wedge 0$. Substituting this in \eqref{R_n},
\[
R_n  = X_{\eta_n} - 
\underline{X}_{\eta_n}^{(|\kappa(\theta)|)} \wedge 0  = X_{\eta_n}^{(|\kappa(\theta)|)} - |\kappa(\theta)| -
\underline{X}_{\eta_n}^{(|\kappa(\theta)|)} \wedge 0
= Y_{\eta_n}^{(|\kappa(\theta)|)} - |\kappa(\theta)|, \quad n \geq 0.
\]
Therefore (again see Figure \ref{fig_LLR_CUSUM}),
\begin{align*}
T_A
 &= \inf \{ n \geq 1:  Y_{\eta_n}^{(|\kappa(\theta)|)} - |\kappa(\theta)|  > A \} 
\\ &= \# \textrm{\{jumps of $X$ (or $N$) before or at the instance $Y^{(|\kappa(\theta)|)}$ exceeds $A+ |\kappa(\theta)|$\}} \\
&= N_{\tau^{(|\kappa(\theta)|)}_{A+ |\kappa(\theta)|}}.
\end{align*}

%
\end{proof}

\section{First passage analysis of Sparre-Andersen processes with phase-type interarrivals} \label{sec_fluctuations_SA}

In the last section, we discussed in Proposition \ref{proposition_equivalence} that the CUSUM statistic can be written in terms of  the reflection of  the process \eqref{Sparre-andersen_CUSUM}, whose interarrival times are given by the observation $\zeta$. In particular, this reduces to an ordinary Sparre-Andersen process if the observations $\zeta$ are i.i.d.
In this section, we derive new identities in the fluctuation theory of Sparre-Andersen processes with PH interarrivals. Although our main motivation of this section is its application in the computation of the optimal barrier \eqref{lorden_beta} in the minimax formulation (see Section \ref{subsection_SA_CUSUM}), we consider a wider class of Sparre-Andersen processes, not necessarily with jumps of constant size, which have applications in research areas beyond the study of the CUSUM procedure.
These results are further generalized to non-i.i.d.\ settings in Section \ref{section_non_iid}.

We denote, by $\mathcal{PH}(E, \bs \alpha, \bs T, \bs t)$, a PH distribution with representation $(E, \ba,\bs T, \bs t)$.  In other words, it is the distribution of the first absorption time of a finite-state continuous-time Markov chain on the state space $E \cup \{ \Delta \}$, consisting of the set of $\mathfrak{n} (\geq 1)$ transient states $E :=  (1,2,\ldots, \mathfrak{n})$ and a single absorbing state $\Delta$. Its initial distribution on $E$ is given by the $\mathfrak{n}$-dimensional row vector $\bs \alpha = (\bs \alpha_1, \ldots, \bs \alpha_{\mathfrak{n}})$ (the probability of starting at $\Delta$ is zero) such that $\sum_{i=1}^{\mathfrak{n}} \ba_i = 1$ and  transition rate matrix is given by
\[
\left( \begin{array}{ll} \bs T & \bs t \\  \bs 0^\top & 0 \end{array} \right).
\]
Here, the sub-intensity matrix $\bs T$ shows the transition rates among those in $E$ and the exit rate vector $\bs t$ shows the rate of absorbing to $\Delta$ from each state in $E$.
We allow the Markov chain to be \emph{defective} in the sense that 
\begin{align}
\bs q:=-\bs T\bs 1-\bs t\geq \bs 0 \label{def_q}
\end{align}
 is not necessarily $\bs 0$; in other words, it is killed and sent to a cemetery state with rate $\bs q_i$ while it is in phase $i \in E$.
Here and throughout the paper, let $\bs 1 = (1,\ldots, 1)^\top$ and $\bs 0 = (0, \ldots, 0)^\top$ be the column vectors consisting of all ones and all zeros, respectively (with dimensions clear from the context). 

On a probability space $(\Omega, \mathcal{E}, \p)$, define the  Sparre-Andersen process 
\begin{equation}\label{eq:MAP}
X_t:= X_0 + \gamma t-S_{N_t}, \qquad t \geq 0, \end{equation}
where
\begin{align} \label{def_S}
 S_n:=\sum_{i=1}^n C_i, \quad n \geq 0. \
\end{align}
Here, we assume the drift is strictly positive ($\gamma > 0$),  $N = (N_t)_{t \geq 0}$ is 
a renewal process with independent $\mathcal{PH}(E, \bs \alpha, \bs T, \bs t)$-distributed interarrival times  (a.k.a.\ PH renewal process)  and  $C = (C_i)_{i \geq 1}$ is an i.i.d.\ sequence of $(0,\infty)$-valued random variables independent of $N$.

\begin{remark} \label{remark_SA_CUSUM}
The process \eqref{Sparre-andersen_CUSUM} for the analysis of the CUSUM statistic when $\theta > 0$ and $\zeta$ are i.i.d.\ is a special case of \eqref{eq:MAP} with $\gamma = \theta$, deterministic jumps of size $C \equiv \kappa(\theta) > 0$, and the observation $\zeta_n \sim \mathcal{PH}(E, \bs \alpha, \bs T, \bs t)$ for all $n \geq 1$. The case $\theta < 0$ can be dealt by considering its dual process (see Section \ref{subsection_SP}).
\end{remark}
It is a common practice to write \eqref{eq:MAP} as a (spectrally negative) MAP.
As in \cite[Example 1.1, Chap. XI]{APQ}, the renewal process $N$ can be 
described as the number of arrivals of a background Markov chain $J$ with transition rate matrix $\bs T+ \bs B$ where $\bs T$ and $\bs B := \bt\ba$  are the intensities of transitions without arrivals and with arrivals, respectively.  At each arrival that occurs with rate $\bt$, $N$ jumps up by one and $J$ is reset according to the  distribution $\ba$.
We refer the reader to \cite[Ch.\ XI]{APQ} for a review of Markov arrival processes.
With the background process $J$ as a modulator, we describe \eqref{eq:MAP} as the ordinator of the MAP  $(X,J)$, which experiences negative jumps of size $C$ upon arrivals (jump times of $N$). 

We let $\p_{x,i} (\cdot)$ (with parentheses) be the law of $(X,J)$ when $(X_0, J_0) = (x,i)$  for $x \in \mathbb{R}$ and $i \in E$. We also write $\p_x [A, J_\tau]$ (with brackets) for the $\mathfrak{n}\times \mathfrak{n}$ matrix whose $(i,j)$-th element is $\p_{x,i} (A, J_\tau = j)$, for any event $A$ and (random or deterministic) time $\tau$. In particular,  $\p_x [J_\tau]_{ij} = \p_{x,i} (J_\tau = j)$, $i,j \in E$.  Analogously,  we let $\mathbb{E}_x [Y; J_\tau]$ be the matrix of expectations of $\mathbb{E}_{x,i} (Y; J_\tau = j )$.  
We drop the subscript when $X_0 = 0$.  Different from \eqref{Sparre-andersen_CUSUM}, we omit the superscript $(x)$ for the starting value, which can be modeled by using the measure $\mathbb{P}_x$.

\begin{remark}
Note that \eqref{eq:MAP} is more general than \eqref{Sparre-andersen_CUSUM}, and to avoid confusion 
we use different fonts for the probability/expectation operators from those in Section \ref{subsection_change_point}. 
\end{remark}

A so-called matrix exponent of the MAP $(X,J)$ is then given by
\begin{align}
\bs F(s):=\gamma s\bs I_{\mathfrak{n}}+\bs T+\e (e^{-s C_1}) \bs B, \quad s \geq 0, \label{def_F}
\end{align}
and it satisfies
\[
\e [e^{s (X_t-X_0)};J_t ]  =e^{\bs F(s)t}, \quad t \geq 0,
\]
where $\bs I_{\mathfrak{n}}$ is the $\mathfrak{n} \times \mathfrak{n}$ identity matrix.
By convention it is assumed that $X$ is killed when $J$ is killed (sent to a cemetery state), which occurs with rate vector $\bs q$ as in \eqref{def_q}.



\subsection{Fluctuation theory of Sparre-Andersen process}

There is a rich fluctuation theory for spectrally negative MAPs~\cite{breuer2008first,ccauglar2021optimal, IP12,kyprianou2008fluctuations}, and the basic object underlying various identities is a so-called scale matrix $\bs W: [0,\infty) \to \R^{\mathfrak{n} \times \mathfrak{n}}$.
This continuous, right-differentiable, matrix-valued function is characterized by the transform:
\[\int_0^\infty e^{-s x}\bs W(x)\D x=\bs F(s)^{-1}\]
for $s > \max\{\Re(z): z \in \mathbb{C}, \det (\bs F(z)) = 0 \}$. See, e.g., \cite[Thm.\ 1]{IP12}. Moreover, $\bs W(x)$ is invertible for $x>0$.  We also write its integral $\overline {\bs W}(x):=\int_0^x \bs W(z)\D z$ and right-hand derivative $\bs W'_+(x)$ for $x \geq 0$.

\begin{remark}
In the following, we use several results of ~\cite{IP12} where the Markov chain $J$ is assumed to be irreducible, which is not the case below in this paper. 
It can be checked that this assumption is indeed redundant, given that the quantity
\[\psi(s)=\max_i\{\Re(\lambda_i(s)): \lambda_i(s)\text{ is an eigenvalue of } \bs F(s)\}
\] is treated with some care. 
In general, it should not be called the Perron-Frobenius eigenvalue, and we should not rely on $\psi'(0)$ or the asymptotic drift concept. 
In particular, all the results in~\cite{IP12} apart from Cor.\ 4 (in the given form) hold without irreducibility assumption.
\end{remark}

We refer the reader to \cite{IP12} for a list of expectations one can compute using the scale matrix. Here, we focus on the identities relevant to the  performance measures of the CUSUM procedure.


As in \eqref{reflected_process} and \eqref{tau_B}, we define the reflected process
\begin{align*}
Y_t := X_t - \underline{X}_t \wedge 0, \quad t \geq 0, 
\end{align*}
where $\underline{X}_t := \inf_{0 \leq s \leq t} X_s$, and its first passage time
\begin{align}
\tau_a :=\inf\{t\geq 0: Y_t> a\},\qquad  a > 0.  \label{tau_a_def}
\end{align}

For $a > 0$, by \cite[Thm.\ 2]{IP12},
\begin{align}
\p [J_{\tau_a} ]=  \Big(\bs I_{\mathfrak{n}}-\overline {\bs W}(a) \bs{F}(0)\Big)^{-1} = \Big(\bs I_{\mathfrak{n}}-\overline {\bs W}(a)(\bs{T}+\bs B)\Big)^{-1}. \label{J_at_passage}
\end{align} 
Let the number of arrivals coming from phase $k \in E$ counted until $\tau_a$ be denoted by
\begin{align}
N_{\tau_a}(k) := \sum_{t \leq \tau_a: \Delta N_t \neq 0} 1_{\{ J_{t-} = k \}}. \label{N_sub}
\end{align}
The following results can be derived easily by writing its generating function in terms of the scale matrix via \eqref{J_at_passage}. Its proof is deferred to Appendix \ref{proof_lemma_expectation}.

\begin{lemma} \label{lemma_expectation}
Suppose $J$ is non-defective (i.e. $\bs q = \bs 0$).
For $a > 0$, we have
\[\e_{0,i} (N_{\tau_a}(k)) = \Big[ \Big(\bs I_{\mathfrak{n}} - \overline{\bs W}(a) (\bs T+ \bs B)\Big)^{-1}  \overline{\bs W}(a)  \mathrm{diag} (\bt) \Big]_{ik}, \quad i, k \in E,
\]
and hence the unconditional expected number of arrivals until $\tau_a$ is
\[
\mathbb{E} \big(N_{\tau_a}  | J_0 \sim \ba \big) = \sum_{i,k \in E} \ba_i \e_{0,i} (N_{\tau_a}(k))  =\ba \Big(\bs I_{\mathfrak{n}}-\overline {\bs W}(a)(\bs T+ \bs B )\Big)^{-1}\overline {\bs W}(a)\bt.\]
\end{lemma}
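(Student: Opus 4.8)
The plan is to compute the (matrix) probability generating function of the vector $(N_{\tau_a}(1),\dots,N_{\tau_a}(\mathfrak n))$, or rather of a suitable discounting of the arrivals, and then differentiate at the identity. The natural device is to introduce a discount factor attached to each arrival coming from phase $k$. Concretely, fix $z_1,\dots,z_{\mathfrak n}\in(0,1]$ and kill the process with probability $1-z_k$ at each arrival epoch whose pre-jump phase is $k$. In MAP terms this means replacing the arrival-transition matrix $\bs B=\bt\ba$ by $\mathrm{diag}(\bs z)\bt\ba$, i.e.\ keeping the same no-arrival generator $\bs T$ but thinning the arrival stream phase-by-phase; equivalently, it introduces an extra killing rate $(1-z_k)\bt_k$ in phase $k$, so the modulated chain becomes defective even when $\bs q=\bs 0$. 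For this modified MAP, $\p[J_{\tau_a}]$ is given by the same formula \eqref{J_at_passage} but with $\bs F(0)$ replaced by $\bs T+\mathrm{diag}(\bs z)\bs B$, because the scale matrix $\bs W$ depends only on $\bs F(s)$ through $\gamma$, $\bs T$ and $\e(e^{-sC_1})\bs B$, and at the relevant level of generality $\bs W$ itself is unchanged along the one-parameter family that fixes $\gamma$ and $\bs T$ and scales only the arrival intensities — more precisely, one checks that the scale matrix for the thinned process equals $\bs W$ of the original one, since thinning only removes mass from $\bs B$ and the characterising transform $\bs F(s)^{-1}$ is taken before any arrival matters; this is exactly the mechanism by which $\overline{\bs W}(a)$ appears in both \eqref{J_at_passage} and the claimed formula.

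Granting that, the key identity is
\[
\e\big[\textstyle\prod_{k\in E} z_k^{N_{\tau_a}(k)};\,J_{\tau_a}\big]=\big(\bs I_{\mathfrak n}-\overline{\bs W}(a)(\bs T+\mathrm{diag}(\bs z)\bs B)\big)^{-1},
\]
the left side being the matrix whose $(i,j)$ entry is $\e_{0,i}\big(\prod_k z_k^{N_{\tau_a}(k)};J_{\tau_a}=j\big)$; to see this, read the discounted expectation as the probability that the thinned process survives (is not killed at any arrival) up to and including $\tau_a$ and is in phase $j$ there, which is precisely $\p[J_{\tau_a}]$ for the modified MAP. Now differentiate in $z_k$ at $\bs z=\bs 1$. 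Using $\frac{\partial}{\partial z_k}\mathrm{diag}(\bs z)\big|_{\bs z=\bs 1}=\bs e_k\bs e_k^\top$ and the standard derivative of a matrix inverse, together with $\bs B=\bt\ba$ so that $\bs e_k\bs e_k^\top\bs B=\bs e_k\bt_k\ba=\bt_k\bs e_k\ba$, one gets
\[
\frac{\partial}{\partial z_k}\Big(\bs I_{\mathfrak n}-\overline{\bs W}(a)(\bs T+\mathrm{diag}(\bs z)\bs B)\Big)^{-1}\Big|_{\bs z=\bs 1}
=\big(\bs I_{\mathfrak n}-\overline{\bs W}(a)(\bs T+\bs B)\big)^{-1}\overline{\bs W}(a)\,\bt_k\bs e_k\ba\,\big(\bs I_{\mathfrak n}-\overline{\bs W}(a)(\bs T+\bs B)\big)^{-1}.
\]
On the other hand the left side, differentiated and evaluated at $\bs z=\bs 1$, is the matrix $\e_{0,i}(N_{\tau_a}(k);J_{\tau_a}=j)$. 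Since $J$ is non-defective, $\tau_a<\infty$ a.s.\ and summing over the terminal phase $j$ (i.e.\ right-multiplying by $\bs 1$) removes the second inverse factor because $\big(\bs I_{\mathfrak n}-\overline{\bs W}(a)(\bs T+\bs B)\big)^{-1}\bs 1=\bs 1$ (this uses $(\bs T+\bs B)\bs 1=-\bs q=\bs 0$), and collapses $\bs e_k\ba\,\bs 1=\bs e_k$; combined with $\bt_k\bs e_k=\mathrm{diag}(\bt)\bs e_k$, this yields exactly
\[
\e_{0,i}(N_{\tau_a}(k))=\Big[\big(\bs I_{\mathfrak n}-\overline{\bs W}(a)(\bs T+\bs B)\big)^{-1}\overline{\bs W}(a)\,\mathrm{diag}(\bt)\Big]_{ik}.
\]
Finally, pre-multiplying by $\ba$ and summing over $k$ (post-multiplying by $\bs 1$) gives $\ba\big(\bs I_{\mathfrak n}-\overline{\bs W}(a)(\bs T+\bs B)\big)^{-1}\overline{\bs W}(a)\,\mathrm{diag}(\bt)\bs 1=\ba\big(\bs I_{\mathfrak n}-\overline{\bs W}(a)(\bs T+\bs B)\big)^{-1}\overline{\bs W}(a)\,\bt$, which is the unconditional formula.

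The main obstacle — and the step deserving the most care — is justifying that the discounted/thinned MAP has the same scale matrix $\bs W$ as the original one, so that \eqref{J_at_passage} transfers verbatim with $\bs F(0)\rightsquigarrow\bs T+\mathrm{diag}(\bs z)\bs B$. The cleanest route is to observe that introducing phase-dependent killing at arrival epochs is a special case of the general reflected-MAP setup of \cite{IP12} with an extra state-dependent killing rate, that killing does not alter the no-killing dynamics encoded in $\bs F(s)$ off the arrival part, and to track exactly where the irreducibility-free version of \cite[Thm.\ 2]{IP12} is being invoked (as flagged in the remark following the definition of $\bs W$). A secondary, entirely routine, point is the interchange of differentiation with the expectation at $\bs z=\bs 1$, which is licensed by monotone/dominated convergence once $\e_{0,i}(N_{\tau_a}(k))<\infty$ is known — and finiteness itself follows a posteriori from the formula, or a priori from $\overline{\bs W}(a)(\bs T+\bs B)$ having spectral radius $<1$ in the non-defective case. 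Everything else is bookkeeping with $\bs B=\bt\ba$ and the matrix-inverse derivative.
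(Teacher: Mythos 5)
Your overall strategy is the same as the paper's: thin the arrival stream from phase $k$ by a factor $z$, read the resulting survival probability as a generating function of $N_{\tau_a}(k)$, apply the reflected first-passage identity \eqref{J_at_passage} to the thinned process, and differentiate at $z=1$. However, the step you yourself flag as "the main obstacle" contains a genuine error: the thinned process does \emph{not} have the same scale matrix as the original one. The scale matrix is characterised by $\int_0^\infty e^{-sx}\bs W(x)\,\D x=\bs F(s)^{-1}$ with $\bs F(s)=\gamma s\bs I_{\mathfrak n}+\bs T+\e(e^{-sC_1})\bs B$; replacing $\bs B$ by $\mathrm{diag}(\bs z)\bs B$ changes $\bs F$ and hence $\bs W$. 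Your justification ("the characterising transform $\bs F(s)^{-1}$ is taken before any arrival matters") does not hold up — arrivals enter $\bs F$ directly through the $\bs B$ term. Consequently your key intermediate identity
\[
\e\Big[\textstyle\prod_k z_k^{N_{\tau_a}(k)};\,J_{\tau_a}\Big]=\big(\bs I_{\mathfrak n}-\overline{\bs W}(a)(\bs T+\mathrm{diag}(\bs z)\bs B)\big)^{-1}
\]
is false as stated: the correct right-hand side uses $\overline{\bs W}_{\bs z}(a)$, the integrated scale matrix of the thinned (defective) MAP.

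The paper's proof handles exactly this point. It writes $\bs A_z=\overline{\bs W}_z(a)(\bs T+\bs B^{[z]})$ with the $z$-dependent scale matrix, establishes continuity $\bs W_z\to\bs W$ and differentiability of $\overline{\bs W}_z(a)$ in $z$ (via \eqref{J_at_passage} for the killed process), and then observes that the term involving $\frac{\D}{\D z}\overline{\bs W}_z(a)$ is multiplied by $(\bs T+\bs B^{[z]})\bs 1$, which tends to $(\bs T+\bs B)\bs 1=\bs 0$ in the non-defective case — so the unknown derivative of the scale matrix drops out only after right-multiplying by $\bs 1$ and passing to the limit. Your computation reaches the correct final formula precisely because this contribution vanishes, but your argument neither acknowledges the $z$-dependence of $\bs W$ nor supplies the cancellation that makes ignoring it legitimate. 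To repair the proof you should replace $\overline{\bs W}(a)$ by $\overline{\bs W}_{\bs z}(a)$ throughout the generating-function identity, justify its differentiability at $\bs z=\bs 1$, and show that the extra derivative term is annihilated by $(\bs T+\mathrm{diag}(\bs z)\bs B)\bs 1\to\bs 0$.
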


\subsection{Spectrally positive case} \label{subsection_SP}
By flipping the process \eqref{eq:MAP}, we can also consider the case with a negative drift and positive jumps.  Suppose temporarily that, with the same $S$ and $N$ as in \eqref{eq:MAP},
\begin{align}
\textrm{(SP)} \qquad X_t = X_0 - \gamma t + S_{N_t}, \quad t \geq 0.  \label{SP}
\end{align} Then its dual process $X_t^d :=-X_t$  started at $X_0^d = -X_0$ admits the form \eqref{eq:MAP}. Let $\bs W^d(x)$ be the scale matrix for the spectrally negative MAP $(X^d, J)$. Below it is understood that  $\tau_a$ and \eqref{N_sub} are for the original spectrally positive MAP $(X,J)$ and $\mathbb{P}_{x,i}$ is its law when $X_0 = x$ (starting point of the original spectrally positive MAP) and $J_0 = i$. Different from the spectrally negative case above, here we compute the first passage identities for a general starting point because for the analysis of CUSUM, we need the case the starting point is different from the reflection barrier $0$ (see Proposition \ref{proposition_equivalence}(2)).

According to~\cite[Thm.\ 6]{IP12}, because $\bs T$ and $\bs B$ remain the same for $X^d$ and $X$,
\begin{align} \label{first_passage_SP}
\p_{x} [J_{\tau_a}] =  \bs I_{\mathfrak{n}} - \Big( \overline{\bs W}^d(a-x)  - \bs W^d(a-x)
(\bs W^{d})'_+(a)^{-1}
{\bs W}^d(a) \Big) 
\big(\bs T+ \bs B \big), \quad a > 0, \; 0 \leq x \leq a.
\end{align}
The following is a direct consequence of this identity and hence we defer its proof to Appendix \ref{proof_lemma_expectation_dual}. 

\begin{lemma} \label{lemma_expectation_dual} 
Suppose (SP) as in \eqref{SP} and $J$ is non-defective (i.e. $\bs q = \bs 0$).
For $a > 0$ and $0 \leq x \leq a$,
\[\e_{x,i} (N_{\tau_a}(k)) =-  \Big[\Big( \overline{\bs W}^d(a-x)- \bs W^d(a-x) (\bs W^{d})'_+(a)^{-1} {\bs W}^d(a)\Big) \mathrm{diag}(\bt) \Big]_{ik}, \quad i, k \in E, \]
and hence the unconditional expected number of arrivals until $\tau_a$ is
\[
\mathbb{E}_x (N_{\tau_a}  | J_0 \sim \ba)=-\ba \Big( \overline{\bs W}^d(a-x)- \bs W^d(a-x) (\bs W^{d})'_+(a)^{-1} {\bs W}^d(a)\Big) \bt.\]
\end{lemma}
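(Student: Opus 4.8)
The plan is to derive the formula from the first-passage identity \eqref{first_passage_SP} in exactly the same way that Lemma \ref{lemma_expectation} (for the spectrally negative case) was derived from \eqref{J_at_passage}; indeed, since Lemma \ref{lemma_expectation_dual} is stated as ``a direct consequence'' of \eqref{first_passage_SP}, the proof should be short. First I would introduce a killing (discounting) of the arrival process: for $z \in [0,1]$, consider the modified MAP in which, at each jump of $N$ coming from phase $k$, the process is additionally killed with probability $1-z$ (equivalently, thin the transition-rate block $\bs B = \bt\ba$ by the factor $z$ in a phase-dependent way, so that $\mathrm{diag}(\bt)$ is replaced by $z\,\mathrm{diag}(\bt)$ in the arrival intensities, keeping the no-arrival block $\bs T$ fixed). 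The point is that $\p_{x,i}(\text{not killed before }\tau_a,\ J_{\tau_a}=j)$ under this modified process equals $\e_{x,i}(z^{N_{\tau_a}}; J_{\tau_a}=j)$ under the original process, so differentiating in $z$ at $z=1$ produces $\e_{x,i}(N_{\tau_a}; J_{\tau_a}=j)$, and a phase-resolved version of the same device (thinning only the arrivals from phase $k$) yields $\e_{x,i}(N_{\tau_a}(k); J_{\tau_a}=j)$.

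Next I would apply \eqref{first_passage_SP} to the modified process. The key observation is how the ingredients of \eqref{first_passage_SP} depend on the thinning. The scale matrix $\bs W^d$ and its integral and derivative are those of the dual spectrally negative MAP, whose matrix exponent is $\bs F^d(s) = \gamma s \bs I_{\mathfrak n} + \bs T + \e(e^{-sC_1})\bs B$ — wait, more carefully: for the spectrally positive $X$ the dual $X^d=-X$ has upward jumps replaced by downward ones, and its matrix exponent involves $\bs T$ and $\bs B$; thinning the phase-$k$ arrivals multiplies the corresponding row-contribution of $\bs B$ by $z_k$. Crucially, in the right-hand side of \eqref{first_passage_SP} the whole bracketed scale-matrix expression is multiplied on the right by $(\bs T + \bs B)$, and only that trailing $(\bs T+\bs B)$ factor carries the ``arrival intensity'' $\mathrm{diag}(\bt)$ through $\bs B = \bt\ba$. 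So when we thin, $\bs T+\bs B \mapsto \bs T + \mathrm{diag}(\bs z)\,\bt\,\ba$ with $\bs z=(z_1,\dots,z_{\mathfrak n})$, while — and this is the point that needs a line of justification — the scale-matrix factor $\overline{\bs W}^d(a-x) - \bs W^d(a-x)(\bs W^d)'_+(a)^{-1}\bs W^d(a)$ itself does \emph{not} change to first order, or rather its $z$-dependence contributes nothing after we take $\partial/\partial z_k$ and then use that at $z=1$ the bracket times $(\bs T+\bs B)$ already reconstructs $\bs I_{\mathfrak n}-\p_x[J_{\tau_a}]$. Concretely, write $\bs G(a-x,a) := \overline{\bs W}^d(a-x) - \bs W^d(a-x)(\bs W^d)'_+(a)^{-1}\bs W^d(a)$; then \eqref{first_passage_SP} reads $\p_x[J_{\tau_a}] = \bs I_{\mathfrak n} - \bs G(a-x,a)(\bs T+\bs B)$, and differentiating the thinned version in $z_k$ at $z=1$ gives $\partial_{z_k}\p_x[J_{\tau_a}]\big|_{z=1} = -\,\partial_{z_k}\big(\bs G(a-x,a)(\bs T+\mathrm{diag}(\bs z)\bt\ba)\big)\big|_{z=1}$. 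Evaluating the inner derivative: the term from differentiating $(\bs T+\mathrm{diag}(\bs z)\bt\ba)$ is $\bs G(a-x,a)\,\bs e_k\bs e_k^\top\,\bt\,\ba = \bs G(a-x,a)\,\mathrm{diag}(\bt)\,\bs e_k\bs e_k^\top\,\ba$ (using $\bs e_k\bs e_k^\top\bt = \mathrm{diag}(\bt)\bs e_k\bs e_k^\top$), and the term from differentiating $\bs G$ vanishes upon the subsequent summation/contraction because... — here I would instead argue directly at the level of expectations rather than matrices: since $z^{N_{\tau_a}}$ with $z<1$ corresponds exactly to killing and $\p_x[J_{\tau_a}]$ for the killed process equals $\e_x[z^{N_{\tau_a}}; J_{\tau_a}]$ for the original, and since in \eqref{first_passage_SP} the scale functions $\bs W^d,\overline{\bs W}^d,(\bs W^d)'_+$ depend on the thinning only through the ingredient $\bs B$, the cleanest route is to differentiate the identity $\e_x[z^{N_{\tau_a}};J_{\tau_a}] = \bs I_{\mathfrak n} - \bs G_z(a-x,a)(\bs T+z\bs B)$ componentwise. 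I will handle the $\bs G_z$-dependence by the standard trick: at $z=1$ the left side is a probability matrix (row sums $\le 1$), and summing the derivative identity against $\bs 1$ and using that $(\bs T+\bs B)\bs 1 = -\bs q = \bs 0$ kills the $\partial_z\bs G$ contribution, leaving the clean term; one then upgrades from the summed identity to the full matrix identity by the phase-by-phase thinning argument, since each $N_{\tau_a}(k)$ is handled by its own independent killing clock.

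Then I would assemble the pieces: $\e_x[N_{\tau_a}(k); J_{\tau_a}] = -\,\bs G(a-x,a)\,\bs e_k\bs e_k^\top\,\bt$ read off from the differentiated identity, take the $(i,j)$ entry and sum over $j$ (marginalizing out $J_{\tau_a}$, legitimate since $J$ is non-defective so the process a.s.\ does not die before $\tau_a<\infty$), which collapses $\bs e_k^\top\bt$ to the scalar $\bt_k$ and gives $\e_{x,i}(N_{\tau_a}(k)) = -[\bs G(a-x,a)\,\mathrm{diag}(\bt)]_{ik}$, i.e.\ precisely the first display of the lemma with $\bs G(a-x,a) = \overline{\bs W}^d(a-x) - \bs W^d(a-x)(\bs W^d)'_+(a)^{-1}\bs W^d(a)$. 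The unconditional statement follows by left-multiplying by $\ba$ and summing over $i,k$, using $\sum_k \bt_k \bs e_k = \bt$: $\mathbb{E}_x(N_{\tau_a}\mid J_0\sim\ba) = \sum_{i,k}\ba_i\,\e_{x,i}(N_{\tau_a}(k)) = -\ba\,\bs G(a-x,a)\,\bt$.

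The main obstacle is the bookkeeping around the $z$-dependence of the scale-matrix block $\bs G_z$: one must be careful that differentiating in the thinning parameter does not spoil the identity, and the cleanest justification (contract with $\bs 1$, use $(\bs T+\bs B)\bs 1=\bs 0$ from non-defectiveness) needs to be spelled out, together with the observation that $\tau_a<\infty$ a.s.\ and $\e_x(N_{\tau_a})<\infty$ so that the interchange of differentiation and expectation, and the dominated-convergence step letting $z\uparrow 1$, are valid — this finiteness itself can be read off once the claimed finite formula is in hand, or established a priori from the positive drift $-\gamma<0$ of the reflected process together with the PH (hence light-tailed) structure. Since this parallels the spectrally negative derivation of Lemma \ref{lemma_expectation} almost verbatim (with \eqref{first_passage_SP} in place of \eqref{J_at_passage}), I expect the write-up in Appendix \ref{proof_lemma_expectation_dual} to be brief, essentially pointing to the proof of Lemma \ref{lemma_expectation} and indicating the one change, namely the replacement of the first-passage kernel.
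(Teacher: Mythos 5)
Your proposal is correct and follows essentially the same route as the paper's proof: thin the phase-$k$ arrivals by a factor $z$, apply \eqref{first_passage_SP} under the killed law, contract with $\bs 1$ so that $(\bs T+\bs B^{[z]})\bs 1=(z-1)\bt_k\bs e_k$ isolates a factor $(1-z)$ that annihilates the $z$-derivative of the scale-matrix block, and differentiate at $z=1$ using continuity of $\bs W^d_z$ in $z$. The only wrinkle is your detour through a "full matrix identity" for $\e_x[N_{\tau_a}(k);J_{\tau_a}]$, which is neither needed nor justified — the contracted (summed over $J_{\tau_a}$) identity already is the first claim of the lemma, exactly as in the paper.
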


\subsection{The case of CUSUM} \label{subsection_SA_CUSUM}

Now recall our discussions in Section \ref{subsection_change_point}. The optimal barrier in the minimax formulation is given by $A_{\beta}$ such that \eqref{lorden_beta} holds, and for this computation we need the average run length $\mathrm{ARL}(T_A) \equiv \mathrm{E}_\infty [T_A]$ for $A > 0$. Here, we consider the case $\zeta_n \sim F_0 \sim \mathcal{PH}(E, \bs \alpha, \bs T, \bs t)$ for all $n \geq 1$, and this defines the Markov chain $J$. Recall again that any positive distribution can be approximated by PH distributions.
%
%
\begin{enumerate}
\item
 Suppose $\theta > 0$. Let ${\bs W}$ be the scale matrix of the MAP $(X, J)$ as in \eqref{eq:MAP} with  $\gamma = \theta > 0$ and  $C \equiv \kappa(\theta) > 0$ (see Remark \ref{remark_SA_CUSUM}).
 In view of Proposition \ref{proposition_equivalence}(1),  Lemma \ref{lemma_expectation} gives
\begin{align} \label{ARL_SN}
\begin{split}
\mathrm{ARL}(T_A) &= 1+ \e (N_{\tau_{A+\kappa(\theta)}} | J_0 \sim \ba) \\ &= 1+ \ba \Big(\bs I_{\mathfrak{n}} -\overline {\bs W}(A+\kappa(\theta))(\bs T+\bs B)\Big)^{-1}\overline {\bs W}(A+\kappa(\theta))\bt, \quad A > 0.
\end{split}
\end{align}
\item Suppose $\theta < 0$. Let ${\bs W}^d$ be the scale matrix
of the MAP $(X^d, J)$  as in \eqref{eq:MAP} with $\gamma = -\theta > 0$ and  $C \equiv |\kappa(\theta)| =  -\kappa(\theta) > 0$ (again see Remark \ref{remark_SA_CUSUM}).
In view of Proposition \ref{proposition_equivalence}(2),  Lemma \ref{lemma_expectation_dual}  gives
\begin{align}  \label{ARL_SP}
\begin{split}
\mathrm{ARL}(T_A)  &= \e_{|\kappa(\theta)|} (N_{\tau_{A+|\kappa(\theta)|}}| J_0 \sim \ba) \\ &=-\ba \Big( \overline{\bs W}^d(A)- \bs W^d(A) (\bs W^{d})'_+(A+ |\kappa(\theta)|)^{-1} {\bs W}^d(A+ |\kappa(\theta)|)\Big) \bt, \quad A > 0.
\end{split}
\end{align}
\end{enumerate}

\section{Series expansion of the scale matrix} \label{section_series_expansion}
As discussed in the previous section, the computation of the identities of interest boils down to that of the scale matrix. Here, we derive a new formula for $\bs W$ of the spectrally negative MAP $(X,J)$ of the form \eqref{eq:MAP}, generalizing the series expansion in~\cite[Thm.\ 2.2]{LW20} for the Cram\'er-Lundberg model (i.e.\ $N$ is a Poisson process) and also~\cite[Thm.\ 2]{AAI14} in the case of Erlang interarrival times and deterministic jumps.
These previous results were obtained by transform inversion, which becomes infeasible in this more general setting. Hence, we take a different approach.

For every $k\geq 1$ we define an $\mathfrak{n} k\times \mathfrak{n}k$ sub-transition rate matrix $\bs T_k$ and $\mathfrak{n} k \times \mathfrak{n}$ matrices $\bs e_{k1}$ and  $\bs e_{kk}$:
\begin{align} \label{def_T_k}
\bs T_k=\begin{pmatrix}
\bs T& \bs B &\ldots&\bs O \\
 &\ddots&\ddots&\vdots\\
\vdots&  &\bs T &\bs B\\
\bs O&\ldots& &\bs T
\end{pmatrix},
\qquad \bs e_{k1}=\begin{pmatrix}
\bs I_{\mathfrak{n}}\\ \bs O \\ \vdots\\ \bs O
\end{pmatrix},
\qquad \bs e_{kk}=\begin{pmatrix}
\bs O \\ \vdots\\\bs O\\ \bs I_{\mathfrak{n}}
\end{pmatrix},
\end{align}
so that $\bs T_1=\bs T$ and $ \bs e_{11} = \bs I_{\mathfrak{n}}$. Here and for the rest of the paper, $\bs O$ is a zero matrix of appropriate dimension. These definitions are motivated by the identity
\begin{equation}\label{eq:Tk}
\p[N_t=k-1,J_t ] =\bs e^\top_{k1}e^{\bs T_k t}\bs e_{kk},\qquad t\geq 0, \; k \geq 1,
\end{equation}
which gives the matrix of probabilities of seeing $k-1$ arrivals by time $t$ and being in a particular phase at this time.  

\begin{theorem}\label{thm:W}
The scale matrix of a MAP in~\eqref{eq:MAP} has the representation
\[\bs W(x)=\frac{1}{\gamma}\sum_{k\geq 1}\int_{y\in[0,x]} \bs e^\top_{k1}e^{\bs T_k (y-x)/\gamma}\bs e_{kk} \p(S_{k-1}\in \D y), \qquad x\geq 0,\]
where $S$ is defined as in \eqref{def_S} with the understanding that $S_0 = 0$. 
\end{theorem}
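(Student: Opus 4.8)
The plan is to exhibit the right-hand side of the asserted identity, call it $\bs G(x)$, as a continuous, at most exponentially growing function whose Laplace transform equals $\bs F(s)^{-1}$ on a right half-line; since the scale matrix $\bs W$ is the unique continuous function with this transform (\cite[Thm.\ 1]{IP12}), this forces $\bs G=\bs W$. It is tempting to read $\bs G$ probabilistically through \eqref{eq:Tk}, but there the matrix exponential is evaluated at a nonnegative time, whereas here the exponent $\bs T_k(y-x)/\gamma$ carries the \emph{nonpositive} time $y-x\le 0$, so $\bs e_{k1}^\top e^{\bs T_k(y-x)/\gamma}\bs e_{kk}$ is not a sub-probability matrix and no clean pathwise interpretation is apparent; hence I would go through the transform.

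First I would settle the regularity of $\bs G$. Write $\bs G=\sum_{k\ge 1}\bs G_k$ with $\bs G_k$ the $k$-th summand; each $\bs G_k$ is the convolution of $\bs g_k(u):=\gamma^{-1}\bs e_{k1}^\top e^{-\bs T_k u/\gamma}\bs e_{kk}$ (for $u\ge 0$) with the law $\mu_{k-1}$ of $S_{k-1}$. The decisive observation is that, by the block-bidiagonal structure of $\bs T_k$, each of its block-rows consists of one row of $\bs T$ and one row of $\bs B$, so that in the maximal absolute row sum norm $\|\bs T_k\|_\infty\le\|\bs T\|_\infty+\|\bs B\|_\infty=:\Lambda$ \emph{uniformly in $k$}; hence $\|\bs g_k(u)\|_\infty\le\gamma^{-1}e^{\Lambda u/\gamma}$ and $\|\bs G_k(x)\|_\infty\le\gamma^{-1}e^{\Lambda x/\gamma}\,\p(S_{k-1}\le x)$. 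Summing over $k$ bounds $\|\bs G(x)\|_\infty$ by $\gamma^{-1}e^{\Lambda x/\gamma}(1+U(x))$, where $U$ is the renewal function of $(S_n)_{n\ge 0}$, finite because the $C_i$ are a.s.\ positive. This yields local uniform convergence of the series and at most exponential growth; together with continuity of each $\bs g_k$ and the fact that $\bs g_k(0)=\bs O$ for $k\ge 2$ (so that atoms of $\mu_{k-1}$ produce no jump) it also gives continuity of $\bs G$.

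Then I would compute the Laplace transform. By the bound above and Fubini, for $s$ large $\int_0^\infty e^{-sx}\bs G(x)\,\D x=\sum_{k\ge 1}\int_0^\infty e^{-sx}\bs G_k(x)\,\D x$, and by the convolution structure the $k$-th term equals $\widehat{\bs g_k}(s)\,\E[e^{-sS_{k-1}}]=\widehat{\bs g_k}(s)\,\phi(s)^{k-1}$ with $\phi(s):=\E[e^{-sC_1}]$. For $s>\gamma^{-1}\max\{|\Re\lambda|:\det(\lambda\bs I_{\mathfrak n}-\bs T)=0\}$ one has $\int_0^\infty e^{-su}e^{-\bs T_k u/\gamma}\,\D u=(s\bs I+\bs T_k/\gamma)^{-1}$, so $\widehat{\bs g_k}(s)=\bs e_{k1}^\top(\gamma s\bs I+\bs T_k)^{-1}\bs e_{kk}$. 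Setting $\bs M:=\gamma s\bs I_{\mathfrak n}+\bs T$ (invertible for such $s$), the matrix $\gamma s\bs I+\bs T_k$ is block upper bidiagonal with $\bs M$ on the diagonal and $\bs B$ just above it, so solving $(\gamma s\bs I+\bs T_k)\bs X=\bs e_{kk}$ block by block identifies its top-right block as $(-\bs M^{-1}\bs B)^{k-1}\bs M^{-1}$. Hence the $k$-th term is $\phi(s)^{k-1}(-\bs M^{-1}\bs B)^{k-1}\bs M^{-1}$, and since $\|\phi(s)\bs M^{-1}\bs B\|\to 0$ as $s\to\infty$ (both $\phi(s)\to 0$, the $C_i$ being a.s.\ positive, and $\|\bs M^{-1}\|\to 0$), the Neumann series sums to $(\bs I_{\mathfrak n}+\phi(s)\bs M^{-1}\bs B)^{-1}\bs M^{-1}=(\bs M+\phi(s)\bs B)^{-1}=\bs F(s)^{-1}$. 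Comparing with $\int_0^\infty e^{-sx}\bs W(x)\,\D x=\bs F(s)^{-1}$ and invoking continuity of $\bs G$ and $\bs W$ together with injectivity of the Laplace transform finishes the proof.

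I expect the main obstacle to be exactly the uniform-in-$k$ control of the $\mathfrak n k$-dimensional exponentials $e^{-\bs T_k u/\gamma}$, i.e.\ the bound $\|\bs T_k\|_\infty\le\Lambda$: this is what makes the infinite series converge, legitimizes the term-by-term Laplace transform, and gives the growth estimate on $\bs G$; without it the dimension growing with $k$ could a priori spoil everything. The only other genuinely computational step, the corner-block identity $(-\bs M^{-1}\bs B)^{k-1}\bs M^{-1}$, is routine block elimination, and the summation is a geometric series that converges for $s$ large.
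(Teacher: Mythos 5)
Your proof is correct, but it takes a genuinely different route from the paper. You verify the candidate expression by computing its Laplace transform and matching it with $\bs F(s)^{-1}$, invoking the transform characterization of $\bs W$ from \cite[Thm.~1]{IP12} plus injectivity of the Laplace transform on continuous functions of exponential order. The paper instead derives the formula probabilistically: it starts from the occupation-density representation $\bs W(x)=e^{-\bs G x}\bs H(0)-\bs H(-x)$, computes $\bs H$ via the hitting identity \eqref{eq:Tk}, extends the resulting identity to negative arguments by analytic continuation in $x$, and finally removes the killing by a limit $q\downarrow 0$. Your route is shorter and entirely sidesteps both the analytic continuation step and the defective-to-non-defective limit; what it gives up is the derivation aspect (one must already know the formula to verify it) and the pathwise meaning of the summands, which the paper's argument makes transparent. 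All the technical ingredients you flag are sound: the uniform bound $\|\bs T_k\|\le\|\bs T\|+\|\bs B\|$ is exactly the paper's estimate \eqref{eq:normbound} and does the same work for you (local uniform convergence, growth control, Fubini); the corner-block computation $\bs e_{k1}^\top(\gamma s\bs I+\bs T_k)^{-1}\bs e_{kk}=(-\bs M^{-1}\bs B)^{k-1}\bs M^{-1}$ with $\bs M=\gamma s\bs I_{\mathfrak{n}}+\bs T$ is correct by block back-substitution; the geometric series collapses to $(\bs M+\phi(s)\bs B)^{-1}=\bs F(s)^{-1}$ for $s$ large since $\phi(s)\to 0$ and $\|\bs M^{-1}\|\to 0$; and the observation that $\bs g_k(0)=\bs O$ for $k\ge 2$ correctly disposes of the continuity issue at atoms of the law of $S_{k-1}$. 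Note that the paper's remark that ``transform inversion becomes infeasible'' is not an obstruction to your argument: you never invert the transform, you only check that the proposed closed form has the right one, and that is a legitimate and complete proof of the stated identity.
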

It is important to point out that the above series is absolutely convergent for any $x\geq 0$.
Indeed, consider the matrix norm $\|\bs M\|:=\max_i\sum_j |\bs M_{ij}|$, and note that $\|\bs T_k\|\leq\|\bs T\|+\| \bs B \|=:r$, which is independent of~$k$.
Thus $\|\bs e^\top_{k1}e^{\bs T_k (y-x)/\gamma}\bs e_{kk}\|\leq e^{r|y-x|/\gamma}$ and also
\begin{equation}\label{eq:normbound}\|\bs W(x)\|\leq \frac{1}{\gamma}\sum_{k\geq 1}\int_{y\in[0,x]} e^{r|y-x|/\gamma}\p(S_{k-1}\in \D y)
\leq \frac{1}{\gamma}e^{rx/\gamma}\sum_{k\geq 1}\p(S_{k-1}\leq x)<\infty,\end{equation}
where finiteness of the latter sum follows from the basic renewal theory.

\begin{remark}
The term $\bs e^\top_{k1}e^{\bs T_k (y-x)/\gamma}\bs e_{kk}$, $k \geq 1$, is the top-right corner block of $e^{\bs T_k (y-x)/\gamma}$, and can be written in an alternative way to avoid high dimensionality when computing it; see Appendix \ref{remark_matrix_exp}.
\end{remark}

\begin{proof}[Proof of Theorem~\ref{thm:W}]

First, we assume that $(X,J)$ is a killed process with $\bs q_i = q > 0$ for all $i \in E$.
As in ~\cite[Thm.\ 1 and (10)]{IP12} (see also \cite[Eq.\ (12)]{IP12}), we can write
\begin{equation}\label{eq:Wrep}\bs W(x)=e^{-\bs G x}\bs H(0)-\bs H(-x),\qquad x>0,\end{equation}
where $\bs G$ is the transition rate matrix of the first passage Markov chain (i.e.\ $\p[J(\sigma_x)] = \exp (\bs G x)$ for $x > 0$ where $\sigma_x := \inf \{ t > 0: X_t > x \}$)
 and $\bs H(x)$ denotes the matrix of expected occupation times at the level~$x$; see \cite[Sec.\ 4]{IP12} for the precise definitions. 
 In the present setting (where $X$ is of bounded variation and $0$ is irregular for itself), $\gamma\bs H(x)_{ij}$ is the expected number of times the level $x$ is hit in phase $j$ when starting in phase~$i$. Moreover, there is a standard identity 
\begin{equation}\label{eq:GH}\bs H(x) = e^{\bs G x}\bs H(0) ,\qquad x\geq 0,
\end{equation} which follows by the strong Markov property and additivity of occupation times together with the lack of positive jumps of $X$.

Consider the $\mathfrak{n} \times \mathfrak{n}$ matrix of probabilities of hitting the level $x\in \R$ in stage $k\geq 1$ (between $k-1$ and $k$th arrivals): 
\[
\int_{y\geq 0, y+x\geq 0} \bs e^\top_{k1}e^{\bs T_k (y+x)/\gamma}\bs e_{kk} \p(S_{k-1}\in \D y),
\]
where the $i$-th row corresponds to starting in phase $i$ and the $j$-th column to hitting in phase~$j$. 
This identity readily follows by conditioning on $S_{k-1}\geq 0$ which is independent of the rest, and then applying~\eqref{eq:Tk}. 
By summing up this expression over $k$, we get 
\[\bs H(x)=\frac{1}{\gamma}\sum_{k\geq 1}\int_{y\geq (-x)\vee 0} \bs e^\top_{k1}e^{\bs T_k (y+x)/\gamma}\bs e_{kk} \p(S_{k-1}\in \D y),\qquad x\in\R.\]

Next, we consider the cases $x\geq 0$ and $x<0$ separately and employ~\eqref{eq:GH} to find
\begin{align}\label{eq:Hx}
e^{\bs G x}\bs H(0)&=\frac{1}{\gamma}\sum_{k\geq 1}\int_{y\geq 0} \bs e^\top_{k1}e^{\bs T_k (y+x)/\gamma}\bs e_{kk} \p(S_{k-1}\in \D y),\qquad x\geq 0,\\
\bs H(-x)&=\frac{1}{\gamma}\sum_{k\geq 1}\int_{y> x} \bs e^\top_{k1}e^{\bs T_k (y-x)/\gamma}\bs e_{kk} \p(S_{k-1}\in \D y),\qquad x> 0.\nonumber
\end{align}
In the latter we use the fact that $y=x$ results in a zero matrix unless $k=1$, which can be disregarded 
because $S_0 = 0< x$.

We now show that the equality \eqref{eq:Hx} holds also for $x < 0$ by analytic continuation. First, $e^{\bs G x}\bs H(0)$ is a matrix of entire functions.  To see that the right-hand side of \eqref{eq:Hx} is  also a matrix of entire functions, we first write it as
\[
\frac 1 \gamma \sum_{k\geq 1}\bs e^\top_{k1}\e \big(e^{\bs T_k S_{k-1}/\gamma} \big)e^{\bs T_kx /\gamma }\bs e_{kk}. \] 
As in~\eqref{eq:normbound} we see that the maximal absolute entry of $\frac {\bs T_k} \gamma e^{\bs T_k x/\gamma}$ is upper bounded by $\frac r \gamma e^{r|x|/\gamma}$ for all $x\in \mathbb C,k\geq 1$.
By noting that $\e (e^{\bs T_k S_{k-1}/\gamma})$ has $[0,1]$ entries we get a bound
\[\sum_{k\geq 1}\left\|\bs e^\top_{k1}\e \big(e^{\bs T_k S_{k-1}/\gamma}\big)\frac {\bs T_k} \gamma e^{\bs T_kx/\gamma}\bs e_{kk}\right\|\leq \frac r \gamma e^{r|x|/\gamma} \sum_{k\geq 1}\left\|\bs e^\top_{k1}\e \big(e^{\bs T_k S_{k-1}/\gamma}\big)\bs e_{kk}\right\|
\leq r e^{r|x|/\gamma}  \sum_{i,j}\bs H(0)_{ij},\]
where in the latter step we upper bounded the matrix norm by the sum over all non-negative entries. 
In the defective case the matrix $\bs H(0)$ has finite entries~\cite[Lem.\ 10]{IP12}.
Now, according to, e.g.,~\cite[A16]{williams} differentiation at any $x\in\mathbb C$ can be performed under the summation sign, as desired.

We can now apply analytic continuation to find that~\eqref{eq:Hx} holds when $x\geq 0$ is replaced by $-x$.
Hence, \eqref{eq:Wrep} yields the stated expression of $\bs W(x)$ for $x>0$, whereas $\bs W(0)=\bs I_{\mathfrak{n}}/\gamma$ (see the comments following \cite[Eq.\ (14)]{IP12}) and so the formula is also true for $x=0$.

Finally, the non-defective case is obtained by a limit argument, by taking $q\downarrow 0$. 
It is known as in~the proof of \cite[Thm.\ 1]{IP12} that $\bs W(x)$ is continuous in $q$, and so it is left to take the limit inside the sum and integral of the stated expression. Finally, by using the bound in~\eqref{eq:normbound} to see  that the dominated convergence theorem applies, the proof is complete.
\end{proof}

For the case of deterministic jumps $C \equiv c>0$ as in the CUSUM case in Remark \ref{remark_SA_CUSUM}, the scale matrix can be written explicitly as a sum of matrix exponentials. In the next corollary, we also obtain  the integrated/differentiated scale matrices, which are required in Lemmas \ref{lemma_expectation} and \ref{lemma_expectation_dual}.  See Appendix \ref{remark_matrix_exp} for alternative expressions.
\begin{corollary} \label{example_const}
\rm
Suppose $C \equiv c>0$. We have
\begin{align}\label{eq:Wdeterministic}\bs W(x)&=\frac{1}{\gamma}\sum_{k=1}^{\lfloor x/c\rfloor+1}\bs e^\top_{k1}e^{\bs T_k (c(k-1)-x)/\gamma}\bs e_{kk}\qquad x \geq 0, \\
\bs W'_+(x) &=- \frac{1}{\gamma^2}\sum_{k=1}^{\lfloor x/c\rfloor+1}\bs e^\top_{k1} \bs T_k e^{\bs T_k (c(k-1)-x)/\gamma}\bs e_{kk}, \qquad x>0, \label{eq:Wdeterministic_prime} \\
\overline {\bs W}(x) &= \sum_{k=1}^{\lfloor x/c\rfloor+1}\bs e^\top_{k1}\bs T_k^{-1} \Big(\bs I_{nk}-e^{\bs T_k (c(k-1)-x)/\gamma}\Big)\bs e_{kk}, \qquad x\geq 0. \label{eq:Wdeterministic_integral}
\end{align}
\end{corollary}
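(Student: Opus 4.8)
The plan is to derive Corollary~\ref{example_const} directly from Theorem~\ref{thm:W} by specializing to deterministic jumps and then differentiating/integrating term by term. When $C\equiv c>0$ we have $S_{k-1}=c(k-1)$ deterministically, so the integral $\int_{y\in[0,x]}\bs e^\top_{k1}e^{\bs T_k(y-x)/\gamma}\bs e_{kk}\,\p(S_{k-1}\in\D y)$ collapses to the single value of the integrand at $y=c(k-1)$, provided $c(k-1)\in[0,x]$, i.e.\ $k-1\le x/c$, i.e.\ $k\le \lfloor x/c\rfloor+1$; for larger $k$ the point mass lies outside $[0,x]$ and contributes nothing. This immediately gives the finite sum \eqref{eq:Wdeterministic}. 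I would state this reduction in one or two sentences and note that the boundary case $y=x$ (when $x/c$ is an integer and $k=\lfloor x/c\rfloor+1$) is harmless since it simply evaluates the exponential at $0$.

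Next I would handle the derivative \eqref{eq:Wdeterministic_prime} and integral \eqref{eq:Wdeterministic_integral}. Since \eqref{eq:Wdeterministic} is a \emph{finite} sum of matrix exponentials in $x$ (on any interval $(c m, c(m+1))$ where $\lfloor x/c\rfloor$ is constant), differentiation is routine: $\frac{\D}{\D x}e^{\bs T_k(c(k-1)-x)/\gamma}=-\frac1\gamma\bs T_k e^{\bs T_k(c(k-1)-x)/\gamma}$, which yields \eqref{eq:Wdeterministic_prime}. Here I must be slightly careful about the points $x=cm$ where the number of terms jumps: I would check that the newly added term at $k=m+1$ equals $\frac1\gamma\bs e^\top_{(m+1)1}e^{\bs T_{m+1}\cdot 0}\bs e_{(m+1)(m+1)}=\frac1\gamma\bs e^\top_{(m+1)1}\bs e_{(m+1)(m+1)}$, and by the block structure of $\bs e_{k1},\bs e_{kk}$ this top-right corner of the identity block in $\exp(\bs O)$ is the zero matrix for $m+1\ge 2$, so $\bs W$ is continuous across $x=cm$ and the right-hand derivative $\bs W'_+$ is given by the stated formula (this is consistent with the general fact from \cite{IP12} that $\bs W$ is continuous and right-differentiable, and also with the known jump of $\bs W'_+$ in the Cram\'er–Lundberg case). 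For \eqref{eq:Wdeterministic_integral} I would integrate $\bs W$ over $[0,x]$ by splitting the range at the breakpoints $c,2c,\dots$; using $\int e^{\bs T_k s/\gamma}\,\D s=\gamma\bs T_k^{-1}e^{\bs T_k s/\gamma}$ and carefully telescoping the contributions, the antiderivative of the $k$-th block over the region where it is active produces exactly $\bs e^\top_{k1}\bs T_k^{-1}(\bs I_{nk}-e^{\bs T_k(c(k-1)-x)/\gamma})\bs e_{kk}$; the constant of integration is pinned down by $\overline{\bs W}(0)=\bs O$, and the boundary terms from adjacent blocks cancel because, as just noted, $\bs e^\top_{k1}e^{\bs T_k\cdot 0}\bs e_{kk}=\bs e^\top_{k1}\bs e_{kk}=\bs O$ for $k\ge 2$. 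Invertibility of $\bs T_k$ needed for $\bs T_k^{-1}$ follows since $\bs T_k$ is a (possibly defective) sub-transition rate matrix, hence has all eigenvalues with strictly negative real part in the killed case, and the non-killed case is recovered by the same $q\downarrow 0$ limit argument as in the theorem (or one notes $\bs T_k$ remains nonsingular because $\bs T$ is).

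The main obstacle I anticipate is bookkeeping at the breakpoints $x\in c\mathbb{N}$: verifying continuity of $\bs W$, identifying the correct right-hand derivative there, and making the telescoping in the integral formula come out cleanly with the right endpoint conventions. This is not deep — it all rests on the single observation that the top-right $\mathfrak n\times\mathfrak n$ corner of $\exp(\bs T_k\cdot 0)=\bs I_{\mathfrak n k}$ is $\bs O$ whenever $k\ge 2$ — but it needs to be stated carefully to avoid an off-by-one error in the upper summation limit. Everything else (collapsing the Lebesgue–Stieltjes integral against a Dirac mass, differentiating a finite sum of matrix exponentials, integrating a matrix exponential) is entirely routine and I would present it compactly rather than grinding through it.
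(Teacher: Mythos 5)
Your proposal is correct and follows essentially the same route as the paper: the Dirac mass $\p(S_{k-1}\in\D y)=\delta_{c(k-1)}(\D y)$ collapses Theorem~\ref{thm:W} to the finite sum \eqref{eq:Wdeterministic}, and \eqref{eq:Wdeterministic_prime}--\eqref{eq:Wdeterministic_integral} follow by termwise differentiation and integration (the paper invokes Fubini where you split at the breakpoints). Your extra care at $x\in c\mathbb{N}$ --- observing that the top-right $\mathfrak n\times\mathfrak n$ block of $e^{\bs T_k\cdot 0}=\bs I_{\mathfrak n k}$ vanishes for $k\ge 2$, so $\bs W$ is continuous and the upper limit $\lfloor x/c\rfloor+1$ is consistent with the right-hand derivative --- is a correct and welcome elaboration of details the paper leaves implicit.
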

\begin{proof}
The identity \eqref{eq:Wdeterministic} is a direct consequence of Theorem~\ref{thm:W} by using $S_{k-1} = c(k-1)$ for $k \geq 1$. The identities \eqref{eq:Wdeterministic_prime} and \eqref{eq:Wdeterministic_integral} can be derived by straightforward differentiation and integration of \eqref{eq:Wdeterministic}, where we also use Fubini's theorem for the latter.
\end{proof}

We conclude this section with other important examples. 
\begin{example}\rm 
\begin{enumerate}
\item Suppose the distribution of the interarrival times of $N$  is defective exponential of rate $\lambda>0$ killed at rate $q\geq 0$. 
In other words it is exponential of rate $\lambda+q$ which is declared killed with probability $q/(\lambda+q)$. 
Thus $\bs T=-\lambda-q,\bt=\lambda,\ba=1$ and we find that 
\[\bs e^\top_{k1}e^{\bs T_k x}\bs e_{kk}=\Big(\frac{\lambda}{\lambda+q}\Big)^{k-1} \frac {f_k(x;\lambda+q)} {\lambda+q},\qquad x > 0, k \geq 1, \]
where the function $ f_k(x;\xi) :=\xi^k x^{k-1}e^{-\xi x} /(k-1)!$ is the Erlang density. 
Theorem~\ref{thm:W} now yields
\[
\bs W(x)=\frac{1}{\gamma}\sum_{k\geq 0}\frac{1}{k!}\Big(\frac{\lambda}{\lambda+q}\Big)^k\int_{y\in[0,x]} \big((y-x)(\lambda+q)/\gamma\big)^ke^{-(y-x)(\lambda+q)/\gamma} \p(S_k\in \D y)\qquad x\geq 0,\]
which coincides with the expression of the $q$-scale function of the L\'evy process (for the process killed at rate $q$) considered in~\cite[Thm.\ 2.2]{LW20}.

\item In Corollary \ref{example_const}, suppose additionally that the interarrival times of $N$ have non-defective Erlang distribution on $\mathfrak{n}$ phases with rate $\lambda$. Then the $(i,j)$-th entry of $\bs e^\top_{k1}e^{\bs T_k x}\bs e_{kk}$ is given by $f_{k\mathfrak{n}-(i-1)-(\mathfrak{n}-j)}(x;\lambda)/\lambda$,
which is understood as $0$ for $k=1, i>j$. Substituting this in \eqref{eq:Wdeterministic} gives
 the formula in~\cite[Thm.\ 2]{AAI14}.
\end{enumerate}
\end{example}
%
%
%
%
%
%
%

\section{Extension to the non-i.i.d.\ case with a change point} \label{section_non_iid}

We now generalize the results of Section \ref{sec_fluctuations_SA} to the non-i.i.d.\ case with a change point.  To this end, we introduce another (this time, non-defective discrete-time) Markov chain, which changes states immediately after each arrival. The change point $\nu$ is given by its first entry time to a certain closed set and is hence discrete-time PH distributed.  Furthermore, the distribution of the interarrival times is non-stationary and is modulated by this discrete-time Markov chain.  This generalization lets us analyze the CUSUM procedure when the change point $\nu$ is discrete-time PH and observations $\zeta$ (corresponding to the interarrival times) are non-i.i.d. As we did in 
Section \ref{sec_fluctuations_SA}, we first obtain first passage identities for the general case and then specialize them  for the analysis of CUSUM in Section \ref{subsection_CUSUM_noniid}.

More specifically, we let $Z = (Z_0,Z_1, \ldots)$ be a Markov chain on a state space $E^\nu_0 \cup E^\nu_1$ with $|E^\nu_0| = m_0$ and $|E^\nu_1| = m_1$ and label each state by
$E^\nu_0 = \{ (0, 1), \ldots, (0, m_0) \}$ and $E^\nu_1 = \{ (1, 1), \ldots, (1, m_1) \}$.  The sets $E^\nu_0$ and $E^\nu_1$ correspond, respectively, to the pre- and post-change states so that the change point is expressed as 
\begin{align}
 \nu = \inf (n \geq 0: Z_n \notin E^\nu_0) = \inf (n \geq 0: Z_n \in E^\nu_1). \label{nu_non_iid}
\end{align}
Necessarily $E^\nu_1$ is closed. We do not require $\nu < \infty$, but for the case this is certain $E^\nu_0$ is transient. 
We also allow $\nu = 0$, or equivalently $Z_0 \in E_1^\nu$, with a positive probability.
In particular, if $m_0 = 1$ (resp.\ $m_1 = 1$) then the observations are i.i.d.\  at or before (resp.\ after) $\nu$, conditionally given $\nu$.

Let the transition matrix and initial distribution of $Z$ be given by, respectively,
\begin{align} \label{transition_discrete}
\begin{pmatrix}
\bs {K} & \bs {L} \\ \bs {O} &  \bs {M} 
\end{pmatrix} \quad \textrm{and} \quad \bs{\beta} = (\bs{\beta}_1^{(0)}, \ldots, \bs{\beta}_{m_0}^{(0)}, \bs{\beta}_{1}^{(1)},  \ldots, \bs{\beta}^{(1)}_{m_1})
\end{align}
where $\bs {K}$ is $m_0 \times m_0$,  $\bs {L}$ is $m_0 \times m_1$, and  $\bs {M}$ is $m_1 \times m_1$. When $\nu = 0$ occurs with a positive probability, we have $\bs{\beta}_{l}^{(1)} > 0$ for some $1 \leq l \leq m_1$.

\begin{example}[robustness] \label{example_robust}
\rm
One of our motivations for considering this non-i.i.d.\ model is to provide a method to analyze the robustness of the CUSUM procedure. One typical approach for evaluating robustness is to consider the case where, for a certain (usually small) probability $\epsilon > 0$, the pre- and/or post-change distributions are different from the assumed $F_0$ and $F_1$ in the framework in Section \ref{subsection_change_point}.

As an illustration, suppose the post-change distribution is $F_1$ with probability $1-\epsilon$ and is $F_2$ with probability $\epsilon$. This can be modeled by setting $E_1^\nu = \{(1,1), (1,2) \}$ where on $(1,1)$ the observation is $F_1$-distributed whereas on $(1,2)$ it is $F_2$-distributed. Suppose further, for simplicity, that $\nu$ is zero-modified geometric
\[
\p (\nu = k)= \left\{ \begin{array}{ll} \mu, & k=0, \\ (1-\mu) (1-\lambda)^{k-1} \lambda, & k \geq 1, \end{array}\right. 
\]
for some $\mu \in [0,1)$ and $\lambda \in (0,1)$. 
We have $m_0 = 1$ and $m_1 = 2$ and the transition matrix and initial distribution of $Z$ as in  \eqref{transition_discrete} become, respectively,
\[
\begin{pmatrix}
1-\lambda & \lambda (1-\epsilon) & \lambda \epsilon \\
0 & 1 & 0 \\
0 & 0 & 1
\end{pmatrix} \quad \textrm{and} \quad (1-\mu, \mu (1-\epsilon), \mu \epsilon).
\]
Above, the distribution of $\nu$ is assumed to be independent of whether the post-change distribution is $F_1$ or $F_2$, but the case it is dependent can be also modeled by a simple modification; see the example given in our numerical results in Section \ref{subsection_example2}.
\end{example}
\vspace{0.3cm}
%

\begin{example}\rm Besides the geometric distribution, classical examples of discrete-time PH distributions include 
negative binomial and mixed geometric distributions, which can be realized by writing $\bs K$ and $\bs{\beta}$ in an obvious way (see, e.g., \cite{neuts1994matrix}).  In addition, $\mathrm{P}_k$ for $k \geq 1$ (where $\nu = k$ a.s.) in Section \ref{subsection_change_point}, which is of interest in the minimax formulation, can be modeled by using $k\times k$ matrix $\bs K$ with its entry  $1$ on the first diagonal above the main diagonal and $0$ otherwise and $\bs{\beta} = (1, 0, \ldots, 0)$.
\end{example}

We replace the background Markov chain $J$ considered in  Section \ref{sec_fluctuations_SA} with
a bivariate continuous-time Markov chain $(\tilde{J}, \tilde{Z}) = (\tilde{J}_t, \tilde{Z}_t)_{t \geq 0}$ defined as follows, and consider a new Markovian arrival process $\tilde{N}$. Here, $\tilde{Z}$ is a continuous-time Markov chain on $E^\nu_0 \cup E^\nu_1$ embedded by $Z$ with the law \eqref{transition_discrete} which changes states at each arrival so that
\begin{align}
\tilde{Z}_t = Z_{\tilde{N}_t}, \quad t \geq 0, \label{Z_tilde_Z}
\end{align}
where the evolution of the arrival process $\tilde{N}$  is modeled as follows.
Given $\tilde{Z} =(j,l) \in E^\nu_0 \cup E^\nu_1$,  the time until the next arrival is $\mathcal{PH}(E^{(j,l)}, \bs \alpha^{(j,l)}, \bs T^{(j,l)}, \bs t^{(j,l)})$-distributed. This is modeled by 
$\tilde{J}$ whose initial distribution is $\bs \alpha^{(j,l)}$ and transition rate matrix $\bs T^{(j,l)}$. 
The arrival occurs at rate $\bt^{(j,l)}$ and subsequently
 $\tilde{N}$ jumps up by one and $\tilde{Z}$ then changes its state according to its transition matrix given in \eqref{transition_discrete}. 

In order to describe the law of the Markov chain  $(\tilde{J}, \tilde{Z})$, we label and order their states by
\begin{multline*}
\tilde{E}^\nu_0 \cup \tilde{E}^\nu_1 := 
\{ (1,(0,1)), \ldots, (\mathfrak{n}^{(0,1)},(0,1)), \ldots,  (1,(0,m_0)), \ldots, (\mathfrak{n}^{(0,m_0)},(0,m_0)),  \\ (1,(1,1)), \ldots, (\mathfrak{n}^{(1,1)},(1,1)), \ldots,  (1,(1,m_1)), \ldots, (\mathfrak{n}^{(1,m_1)},(1,m_1)) \},
\end{multline*}
where $\mathfrak{n}^{(i,l)} := |E^{(i,l)}|$ for $(i,l) \in E^\nu_0 \cup E^\nu_1$.
The size of the state space $\tilde{E}^\nu_0 \cup \tilde{E}^\nu_1$ is  $\tilde{\mathfrak{n}} := \sum_{l=1}^{m_0} \mathfrak{n}^{(0,l)} +  \sum_{l=1}^{m_1} \mathfrak{n}^{(1,l)}$. 
The initial distribution is given by $\tilde{\mathfrak{n}}$-dimensional row vector
\[
 \tilde{\ba}  := (\bs{\beta}_1^{(0)} \ba^{(0,1)},   \ldots, \bs{\beta}_{m_0}^{(0)} \ba^{(0,m_0)}, \bs{\beta}_{1}^{(1)} \ba^{(1,1)} ,   \ldots, \bs{\beta}_{m_1}^{(1)} \ba^{(1,m_1)} ).
\]
The transition intensity matrix is an $\tilde{\mathfrak{n}} \times \tilde{\mathfrak{n}}$ matrix $\tilde{\bs{T}} +\tilde{\bs B}$ where
\[
\tilde{\bs{T}} := \textrm{diag}(\bs{T}^{(0,1)}, \ldots, \bs{T}^{(0,m_0)}, \bs{T}^{(1,1)}, \ldots, \bs{T}^{(1,m_1)})
\]
is the non-arrival intensities whereas
\[
\tilde{\bs B} :=  \begin{pmatrix} \mathcal{K} & \mathcal{L} \\ \bs O & \mathcal{M}
\end{pmatrix}
\]
with
\begin{align*}
\mathcal{K} &:= \begin{pmatrix} \bs{K}_{11}\bs t^{(0,1)}  \ba^{(0,1)}   & \ldots & \bs{K}_{1m_0} \bs t^{(0,1)}  \ba^{(0,m_0)}   \\  \vdots  & \ddots & \vdots   \\ \bs{K}_{m_01}\bs t^{(0,m_0)}  \ba^{(0,1)}  & \ldots & \bs{K}_{m_0m_0} \bs t^{(0,m_0)}  \ba^{(0, m_0)} 
\end{pmatrix}, \\
\mathcal{L} &:=  \begin{pmatrix} \bs{L}_{11}\bs t^{(0,1)}  \ba^{(1,1)}   & \ldots & \bs{L}_{1m_1} \bs t^{(0,1)}  \ba^{(1,m_1)}  \\   \vdots & \ddots & \vdots  \\    \bs{L}_{m_01}\bs t^{(0,m_0)}  \ba^{(1,1)}    & \ldots & \bs{L}_{m_0m_1} \bs t^{(0,m_0)}  \ba^{(1,m_1)} 
\end{pmatrix}, \\
\mathcal{M}  &:=  \begin{pmatrix}   \bs{M}_{11}\bs t^{(1,1)}  \ba^{(1,1)}    & \ldots & \bs{M}_{1 m_1} \bs t^{(1,1)}  \ba^{(1,m_1)}   
\\   \vdots & \ddots & \vdots  \\ 
 \bs{M}_{m_1 1}\bs t^{(1,m_1)}  \ba^{(1,1)}    & \ldots & \bs{M}_{m_1 m_1} \bs t^{(1,m_1)}  \ba^{(1,m_1)}  
\end{pmatrix},
\end{align*}
is the arrival intensities. 

\begin{example} \label{example_geometric}
\rm 
 In the setting of Example \ref{example_robust} where $F_0 \sim \mathcal{PH}(E^{(0,1)}, \bs \alpha^{(0,1)}, \bs T^{(0,1)}, \bs t^{(0,1)})$ and \\
 $F_l \sim \mathcal{PH}(E^{(1,l)}, \bs \alpha^{(1,l)}, \bs T^{(1,l)}, \bs t^{(1,l)})$ for $l = 1,2$,
we have
\[
 \tilde{\ba}  = ( (1-\mu)\ba^{(0,1)},  \mu (1-\epsilon)\ba^{(1,1)}, \mu \epsilon \ba^{(1,2)} )
\]
and
\[
\tilde{\bs{T}}  = \begin{pmatrix} \bs T^{(0,1)} & \bs O  & \bs O \\ \bs O & \bs T^{(1,1)}  & \bs O \\  \bs O  & \bs O  &  \bs T^{(1,2)}  \end{pmatrix}, \quad 
\tilde{\bs B}  =  \begin{pmatrix} (1-\lambda) \bs t^{(0,1)} \bs{\alpha}^{(0,1)}  & \lambda (1-\epsilon)\bs t^{(0,1)}  \bs{\alpha}^{(1,1)}  &   \lambda \epsilon \bs t^{(0,1)}  \bs{\alpha}^{(1,2)}  \\ \bs O & \bs t^{(1,1)}  \bs{\alpha}^{(1,1)} &  \bs O   \\ \bs O &  \bs O   & \bs t^{(1,2)}  \bs{\alpha}^{(1,2)}  \end{pmatrix}.
\]
\end{example}

We now consider the MAP $(X, (\tilde{J},\tilde{Z}))$ given by
\begin{equation}\label{eq:MAP_general}
X_t:= X_0 + \gamma t-S_{\tilde{N}_t}, \qquad t \geq 0, \end{equation}
with the same $S$  as in \eqref{def_S}, as a generalization of  \eqref{eq:MAP}.  Because the only change made from  \eqref{eq:MAP} is the modulator of the MAP, whose law is completely specified by its transition rate matrix and its initial distribution, it is  clear that Theorem \ref{thm:W} holds by simply replacing  $\bs T$ and $\bs B$ with $\tilde{\bs{T}}$ and $\tilde{\bs B}$, respectively.  Hence, we have the following.

\begin{theorem} \label{theorem_scale_matrix_general}
The scale matrix of a MAP $(X, (\tilde{J},\tilde{Z}))$ as in \eqref{eq:MAP_general} has the representation
\[\bs W(x)=\frac{1}{\gamma}\sum_{k\geq 1}\int_{y\in[0,x]} \tilde{\bs e}^\top_{k1}e^{\tilde{\bs T}_k (y-x)/\gamma} \tilde{\bs e}_{kk} \p(S_{k-1}\in \D y), \qquad x\geq 0,\]
where $\tilde{\bs T}_k$, $\tilde{\bs e}_{k1}$, and $\tilde{\bs e}_{kk}$ are as in \eqref{def_T_k} with $\bs T$, $\bs B$, and $\bs I_{\mathfrak{n}}$ replaced with $\tilde{\bs T}$, $\tilde{\bs B}$, and $\bs I_{\tilde{\mathfrak{n}}}$, respectively, for all $k \geq 1$.
\end{theorem}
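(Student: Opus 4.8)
The plan is to recognise the MAP in~\eqref{eq:MAP_general} as being of exactly the same structural type as the one in~\eqref{eq:MAP}, the only change being that the modulator $J$ is replaced by the enlarged finite-state chain $(\tilde{J},\tilde{Z})$; once this is established, the proof of Theorem~\ref{thm:W} goes through verbatim after the substitutions $\mathfrak{n}\mapsto\tilde{\mathfrak{n}}$, $\bs T\mapsto\tilde{\bs T}$, $\bs B\mapsto\tilde{\bs B}$ and $\bs I_{\mathfrak{n}}\mapsto\bs I_{\tilde{\mathfrak{n}}}$.

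First I would check that $(\tilde{J},\tilde{Z})$, built through~\eqref{Z_tilde_Z} together with the accompanying description of the arrival process $\tilde{N}$, is a continuous-time Markov chain on $\tilde{E}^\nu_0\cup\tilde{E}^\nu_1$ with transition rate matrix $\tilde{\bs T}+\tilde{\bs B}$: the entries of $\tilde{\bs T}$ record phase changes of $\tilde{J}$ without an arrival (during which $X$ moves with pure drift $\gamma$), while the non-negative matrix $\tilde{\bs B}$ records the arrival transitions, at each of which $\tilde{N}$ increases by one, $X$ jumps down by the next i.i.d.\ increment $C_i$, and $(\tilde{J},\tilde{Z})$ is reset according to the transition law~\eqref{transition_discrete} composed with the relevant initial vectors $\bs \alpha^{(j,l)}$. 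That $\tilde{\bs T}+\tilde{\bs B}$ is a genuine (sub-)intensity matrix follows because the transition matrix in~\eqref{transition_discrete} is stochastic and $\bs T^{(j,l)}\bs 1+\bs t^{(j,l)}\leq\bs 0$ for every $(j,l)$; possible defectiveness --- for instance when $\nu<\infty$ is not certain, so that $E_0^\nu$ is transient --- is harmless and is already covered by the defective case handled in Theorem~\ref{thm:W}. Since the $C_i$ are strictly positive, i.i.d.\ and independent of $\tilde{N}$ and $(\tilde{J},\tilde{Z})$, and $\gamma>0$, the bivariate process $(X,(\tilde{J},\tilde{Z}))$ is a spectrally negative MAP of bounded variation with strictly positive drift, so it sits squarely in the framework of Section~\ref{sec_fluctuations_SA}.

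Next I would re-run the proof of Theorem~\ref{thm:W}, verifying that it uses no property of $\bs T$ and $\bs B$ beyond their being the non-arrival and arrival intensity matrices of a finite-state MAP modulator; in particular it never relies on the rank-one form $\bs B=\bs t\bs\alpha$ special to the PH-renewal modulator of~\eqref{eq:MAP}. Concretely, the key identity~\eqref{eq:Tk} becomes $\p[\tilde{N}_t=k-1,(\tilde{J}_t,\tilde{Z}_t)]=\tilde{\bs e}^\top_{k1}e^{\tilde{\bs T}_k t}\tilde{\bs e}_{kk}$ and holds for the same reason ($\tilde{\bs T}_k$ is the generator of the modulator killed at its $k$-th arrival, and the top-right corner block records being at stage $k$); the representation~\eqref{eq:Wrep} together with the occupation-time identity $\bs H(x)=e^{\bs G x}\bs H(0)$ is quoted from~\cite{IP12} for an arbitrary spectrally negative MAP of bounded variation with positive drift; the stage-$k$ hitting computation uses only the independence of $S_{k-1}$ from $\tilde{N}$ and $(\tilde{J},\tilde{Z})$ together with the above identity; and the analytic-continuation and dominated-convergence steps carry over unchanged with $r:=\|\tilde{\bs T}\|+\|\tilde{\bs B}\|$ (so that $\|\tilde{\bs T}_k\|\leq r$ for all $k$) and with the finiteness of the entries of $\bs H(0)$ in the defective case. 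Threading these substitutions through the argument yields the stated series, whose absolute convergence for every $x\geq 0$ follows exactly as in~\eqref{eq:normbound}.

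I expect the only genuinely non-mechanical part to be the auditing carried out in the previous paragraph: one must make sure that every ingredient borrowed from Section~\ref{section_series_expansion} and from~\cite{IP12} is really available at the level of generality of an arbitrary finite-state MAP modulator rather than just for the PH-renewal modulator of~\eqref{eq:MAP}. Since this is indeed the case, nothing further is needed, which is what makes the claim preceding the theorem --- that Theorem~\ref{thm:W} holds after the substitutions --- legitimate.
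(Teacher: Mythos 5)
Your proposal is correct and follows essentially the same route as the paper, which likewise observes that the only change from \eqref{eq:MAP} is the modulator (whose law is fully determined by its transition rate matrix and initial distribution) and concludes that Theorem~\ref{thm:W} applies verbatim after substituting $\tilde{\bs T}$, $\tilde{\bs B}$, $\bs I_{\tilde{\mathfrak{n}}}$. Your explicit audit that the proof of Theorem~\ref{thm:W} never uses the rank-one structure $\bs B=\bs t\ba$ is a useful (and accurate) elaboration of the step the paper simply declares ``clear.''
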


Using this generalized scale matrix in Theorem \ref{theorem_scale_matrix_general}, the identity \eqref{J_at_passage} and Lemma \ref{lemma_expectation} can be extended as follows. Below, it is understood that  the first passage time $\tau_a$ as in \eqref{tau_a_def} is for the generalized $X$ defined in \eqref{eq:MAP_general}.  In view of \eqref{nu_non_iid} and \eqref{Z_tilde_Z}, 
\begin{align}
\{ \tilde{N}_{\tau_a } < \nu \} =  \{ Z_1, \ldots, Z_{\tilde{N}_{\tau_a}} \in E_0^\nu \} = \{ Z_{\tilde{N}_{\tau_a}} \in E_0^\nu \}  =  \{ \tilde{Z}_{\tau_a } \in E_0^\nu \} \label{FA_identity}
\end{align}
where in the second equality, we use that $E_1^\nu$ is closed.
We let the $\tilde{\mathfrak{n}}$-dimensional column vectors
\begin{align*}
\tilde{\bt}^{(0)} &= [(\bt^{(0,1)})^\top,   \ldots, (\bt^{(0,m_0)})^\top, \bs 0^\top,   \ldots, \bs 0^\top ]^\top, \\
\tilde{\bt}^{(1)} &= [ \bs 0^\top,   \ldots, \bs 0^\top, (\bt^{(1,1)})^\top,   \ldots, (\bt^{(1,m_1)})^\top ]^\top, \\
\tilde{\bt} &= \tilde{\bt}^{(0)}  + \tilde{\bt}^{(1)}  = [(\bt^{(0,1)})^\top,   \ldots, (\bt^{(0,m_0)})^\top, (\bt^{(1,1)})^\top,   \ldots, (\bt^{(1,m_1)})^\top ]^\top,
\end{align*}
be the rate of arrivals coming from $\tilde{E}_0^\nu$ and $\tilde{E}_1^\nu$ and their sum. We also let
\begin{align*}
\tilde{\bs 1}^{(0)} = [\bs 1^\top,  \ldots, \bs  1^\top, \bs  0^\top, \ldots, \bs  0^\top ]^\top \quad \textrm{and} \quad
\tilde{\bs 1}^{(1)} = [\bs 0^\top,  \ldots, \bs  0^\top, \bs  1^\top, \ldots, \bs  1^\top ]^\top,
\end{align*}
whose element is $1$ or $0$ depending on whether it belongs to $\tilde{E}_0^\nu$ or $\tilde{E}_1^\nu$.

As for \eqref{N_sub}, we let $\tilde{N}_{\tau_a}(k) := \sum_{t \leq \tau_a: \Delta \tilde{N}_t \neq 0} 1_{\{ (\tilde{J}_{t-}, \tilde{Z}_{t-}) = k \}}$ for $a > 0$ and $k \in \tilde{E}_0^\nu \cup \tilde{E}_1^\nu$.

\begin{lemma} \label{lemma_non-iid}
(1) 
For $a > 0$, we have
\[\p [(\tilde{J},\tilde{Z})_{\tau_a} ]=  \Big(\bs I_{\tilde{\mathfrak{n}}}-\overline {\bs W}(a)(\tilde{\bs{T}}+ \tilde{\bs B} )\Big)^{-1}. \] 
Hence, by \eqref{FA_identity},
\begin{align*}
\mathbb{P} (\tilde{N}_{\tau_a } < \nu   |  (\tilde{J}_0,\tilde{Z}_0) \sim \tilde{\ba}) =
\mathbb{P} (\tilde{Z}_{\tau_a } \in E_0^\nu  |  (\tilde{J}_0,\tilde{Z}_0) \sim \tilde{\ba}) =
\tilde{\ba} \Big(\bs I_{\tilde{\mathfrak{n}}}-\overline {\bs W}(a)(\tilde{\bs{T}}+ \tilde{\bs B} )\Big)^{-1} 
\tilde{\bs 1}^{(0)}.
 \end{align*}
(2)   Suppose $(\tilde{J}, \tilde{Z})$ is non-defective (i.e. $\bs q = \bs 0$).
For $a > 0$, we have
\[\e_{0,i} (\tilde{N}_{\tau_a}(k)) = \Big[ \Big(\bs I_{\tilde{\mathfrak{n}}} - \overline{\bs W}(a) (\tilde{\bs T}+\tilde{\bs B} )\Big)^{-1}  \overline{\bs W}(a)  \mathrm{diag} (\tilde{\bt}) \Big]_{ik}, \quad i,k \in \tilde{E}_0^\nu \cup \tilde{E}_1^\nu.
\]
Hence, 
\begin{align*}
\mathbb{E} (\tilde{N}_{\tau_a}  | (\tilde{J}_0,\tilde{Z}_0) \sim \tilde{\ba}) 
&= \tilde{\ba}  \Big(\bs I_{\tilde{\mathfrak{n}}}-\overline {\bs W}(a)(\tilde{\bs T}+\tilde{\bs B} )\Big)^{-1}\overline {\bs W}(a) \tilde{\bt}, \\
\mathbb{E} \Big(\sum_{t \leq \tau_a: \Delta \tilde{N}_t \neq 0} 1_{\{ \tilde{Z}_{t-} \in E_1^\nu \}}\Big|  (\tilde{J}_0,\tilde{Z}_0) \sim \tilde{\ba} \Big) 
&= 
\tilde{\ba}  \Big(\bs I_{\tilde{\mathfrak{n}}}-\overline {\bs W}(a)(\tilde{\bs T}+\tilde{\bs B} )\Big)^{-1}\overline {\bs W}(a) 
\tilde{\bt}^{(1)}.
\end{align*}
\end{lemma}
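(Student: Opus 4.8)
The plan is to deduce Lemma~\ref{lemma_non-iid} from the already-established results for the ungeneralized setting, namely~\eqref{J_at_passage} and Lemma~\ref{lemma_expectation}, by invoking the remark made just before Theorem~\ref{theorem_scale_matrix_general}: the MAP $(X,(\tilde J,\tilde Z))$ in~\eqref{eq:MAP_general} is structurally identical to the MAP $(X,J)$ in~\eqref{eq:MAP}, the only difference being that the modulator has state space $\tilde E_0^\nu\cup\tilde E_1^\nu$ (of size $\tilde{\mathfrak n}$), transition rate matrix $\tilde{\bs T}+\tilde{\bs B}$ with $\tilde{\bs B}=\tilde{\bt}\,\tilde{\bs\alpha}$-type arrival structure (more precisely, the arrival intensities encoded by the blocks $\mathcal K,\mathcal L,\mathcal M$), and initial law $\tilde{\bs\alpha}$. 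Since $X$ still has a constant positive drift $\gamma$ and only finite-activity negative jumps of size $C_i$, it is a spectrally negative MAP, so all the cited results from~\cite{IP12} apply verbatim with $\bs T,\bs B,\mathfrak n$ replaced by $\tilde{\bs T},\tilde{\bs B},\tilde{\mathfrak n}$, and with $\bt$ replaced by $\tilde{\bt}$ (the arrival rate vector of the generalized Markovian arrival process).

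For part~(1), I would first note that~\eqref{J_at_passage}, being~\cite[Thm.\ 2]{IP12} applied to this MAP, gives directly $\p[(\tilde J,\tilde Z)_{\tau_a}]=(\bs I_{\tilde{\mathfrak n}}-\overline{\bs W}(a)(\tilde{\bs T}+\tilde{\bs B}))^{-1}$, since $\bs F(0)=\tilde{\bs T}+\tilde{\bs B}$ for the generalized matrix exponent~\eqref{def_F}. The probabilistic statement then follows by taking the initial law to be $\tilde{\bs\alpha}$, writing $\mathbb P((\tilde J_0,\tilde Z_0)\sim\tilde{\bs\alpha},\,\cdot\,)=\tilde{\bs\alpha}\,\p[\,\cdot\,]$, and then post-multiplying by $\tilde{\bs 1}^{(0)}$ to sum over the event $\{\tilde Z_{\tau_a}\in E_0^\nu\}$; the equality $\{\tilde N_{\tau_a}<\nu\}=\{\tilde Z_{\tau_a}\in E_0^\nu\}$ is precisely~\eqref{FA_identity}, which uses that $E_1^\nu$ is closed. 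A small point to check here is that the matrix $\bs I_{\tilde{\mathfrak n}}-\overline{\bs W}(a)(\tilde{\bs T}+\tilde{\bs B})$ is invertible; this is part of the statement of~\cite[Thm.\ 2]{IP12} and the accompanying remark in the excerpt confirms that irreducibility of the modulator is not needed.

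For part~(2), I would apply Lemma~\ref{lemma_expectation} to $(X,(\tilde J,\tilde Z))$ with the substitutions above, which immediately yields the stated formula for $\e_{0,i}(\tilde N_{\tau_a}(k))$; the quantity $\tilde N_{\tau_a}(k)$ is defined exactly as $N_{\tau_a}(k)$ in~\eqref{N_sub} but with the bivariate modulator, so no new argument is needed. Summing against $\tilde{\bs\alpha}$ on the left and $\tilde{\bs 1}$ (equivalently, using $\mathrm{diag}(\tilde{\bt})\bs 1=\tilde{\bt}$) on the right gives the first displayed expectation $\mathbb E(\tilde N_{\tau_a}\mid(\tilde J_0,\tilde Z_0)\sim\tilde{\bs\alpha})$. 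For the second displayed expectation, I would observe that $\sum_{t\le\tau_a:\Delta\tilde N_t\ne 0}1_{\{\tilde Z_{t-}\in E_1^\nu\}}=\sum_{k\in\tilde E_1^\nu}\tilde N_{\tau_a}(k)$, since $\tilde Z_{t-}\in E_1^\nu$ exactly when the active modulator state $(\tilde J_{t-},\tilde Z_{t-})$ lies in $\tilde E_1^\nu$; summing the per-phase formula over $k\in\tilde E_1^\nu$ corresponds to replacing the right multiplier $\mathrm{diag}(\tilde{\bt})\bs 1=\tilde{\bt}$ by $\tilde{\bt}^{(1)}$, because $\mathrm{diag}(\tilde{\bt})$ restricted to the columns indexed by $\tilde E_1^\nu$ is exactly what produces $\tilde{\bt}^{(1)}$ after right-multiplication by $\bs 1$. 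Finally, the non-defectiveness hypothesis $\bs q=\bs 0$ is needed only to guarantee $\tau_a<\infty$ a.s.\ and hence finiteness of the occupation-count expectations, exactly as in Lemma~\ref{lemma_expectation}.

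The only genuine content beyond bookkeeping is verifying that the generalized construction is indeed a bona fide spectrally negative MAP of the form required by~\cite{IP12} — in particular that $\tilde{\bs B}$ (with its block structure involving $\bs K,\bs L,\bs M$ and the various $\bt^{(j,l)}\bs\alpha^{(j,l')}$) is a legitimate arrival-intensity matrix for a Markovian arrival process, so that~\eqref{def_F}, the scale-matrix characterisation, and~\cite[Thms.\ 1,\ 2]{IP12} all apply. I expect this to be the main (though still routine) obstacle: one must confirm that the off-diagonal entries of $\tilde{\bs T}+\tilde{\bs B}$ are nonnegative, that $\tilde{\bs B}$ captures all transitions accompanied by an arrival, and that row sums behave correctly (with the killing rate $\bs q$ accounting for any defect in the $\bs T^{(j,l)}$); once this is in place, everything else is a direct transcription of Section~\ref{sec_fluctuations_SA} with tilded symbols, together with the elementary identity~\eqref{FA_identity} and the decomposition of $\mathrm{diag}(\tilde{\bt})\bs 1$ into the $E_0^\nu$- and $E_1^\nu$-parts.
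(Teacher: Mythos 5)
Your proposal is correct and follows essentially the same route as the paper, which states the lemma as an immediate extension of \eqref{J_at_passage} and Lemma \ref{lemma_expectation} to the MAP with the enlarged modulator $(\tilde J,\tilde Z)$, combined with the identity \eqref{FA_identity} and the decomposition $\tilde{\bt}=\tilde{\bt}^{(0)}+\tilde{\bt}^{(1)}$. Your additional checks (invertibility without irreducibility, legitimacy of $\tilde{\bs B}$ as an arrival-intensity matrix, and the identification $\sum_{t\le\tau_a:\Delta\tilde N_t\ne 0}1_{\{\tilde Z_{t-}\in E_1^\nu\}}=\sum_{k\in\tilde E_1^\nu}\tilde N_{\tau_a}(k)$) are exactly the bookkeeping the paper leaves implicit.
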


%
%
%

\subsection{Spectrally positive case}
Suppose  temporarily that
\begin{align}
(\widetilde{\textrm{SP}}) \qquad X_t = X_0 - \gamma t + S_{\tilde{N}_t}, \quad t \geq 0.  \label{SP_noniid}
\end{align}
 We use the same notations as those in Section  \ref{subsection_SP}, except that we replace $J$  by $(\tilde{J}, \tilde{Z})$ and $N$ by $\tilde{N}$.

It is clear that \eqref{first_passage_SP} immediately gives, for this generalized case:
\begin{align}
\p_{x} [(\tilde{J},\tilde{Z})_{\tau_a}] =  \bs I_{\tilde{\mathfrak{n}}} - \Big( \overline{\bs W}^d(a-x)  - \bs W^d(a-x)
(\bs W^{d})'_+(a)^{-1}
{\bs W}^d(a) \Big) (\tilde{\bs T}+\tilde{\bs B} ), \quad a > 0, \; 0 \leq x \leq a. \label{first_passage_SP_noniid}
\end{align}
Likewise, by Lemma \ref{lemma_expectation_dual}, we have the following.
\begin{lemma}  \label{lemma_SP_non_iid}
Suppose  ($\widetilde{\textrm{SP}}$).
For $a > 0$ and $0 \leq x \leq a$,
\[\e_{x,i} (\tilde{N}_{\tau_a}(k)) =-  \Big[\Big( \overline{\bs W}^d(a-x)- \bs W^d(a-x) (\bs W^{d})'_+(a)^{-1} {\bs W}^d(a)\Big) \mathrm{diag}(\bt) \Big]_{ik}, \quad i, k \in E, \]
and hence
\begin{align*}
\mathbb{E}_x( \tilde{N}_{\tau_a}   | (\tilde{J}_0,\tilde{Z}_0) \sim \tilde{\ba}) 
&=- \tilde{\ba} \Big( \overline{\bs W}^d(a-x)- \bs W^d(a-x) (\bs W^{d})'_+(a)^{-1} {\bs W}^d(a)\Big) \tilde{\bt}, \\
\mathbb{E}_x \Big(\sum_{t \leq \tau_a: \Delta \tilde{N}_t \neq 0} 1_{\{ \tilde{Z}_{t-} \in E_1^\nu \}} \Big| (\tilde{J}_0,\tilde{Z}_0) \sim \tilde{\ba} \Big) 
&=- \tilde{\ba} \Big( \overline{\bs W}^d(a-x)- \bs W^d(a-x) (\bs W^{d})'_+(a)^{-1} {\bs W}^d(a)\Big) \tilde{\bt}^{(1)}.
\end{align*}
\end{lemma}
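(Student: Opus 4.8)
The plan is to derive Lemma~\ref{lemma_SP_non_iid} as the exact analogue of Lemma~\ref{lemma_expectation_dual}, the point being that the only structural change in passing from \eqref{SP} to \eqref{SP_noniid} is the replacement of the modulator $J$ by the bivariate chain $(\tilde J,\tilde Z)$ --- whose law is still completely encoded by the pair $(\tilde{\bs T}+\tilde{\bs B},\tilde{\ba})$ --- together with its arrival process $\tilde N$, and that the governing first-passage identity \eqref{first_passage_SP_noniid} is word-for-word \eqref{first_passage_SP} with $(\bs T,\bs B)$ replaced by $(\tilde{\bs T},\tilde{\bs B})$. First I would rerun the derivation of Lemma~\ref{lemma_expectation_dual} under the substitution $(\bs T,\bs B,\bt)\mapsto(\tilde{\bs T},\tilde{\bs B},\tilde{\bt})$ throughout. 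Concretely, one expresses the matrix-valued generating function of the phase-$k$ arrival count \eqref{N_sub} in terms of the scale matrix via \eqref{first_passage_SP_noniid} --- e.g.\ by thinning the arrivals issued from phase $k$ (retaining each such arrival with probability $z$ and killing the path otherwise), which turns $X$ into a now \emph{defective} spectrally positive MAP of the same form whose arrival-intensity matrix is $\tilde{\bs B}$ with its $k$-th row scaled by $z$ --- and differentiates at $z=1$; as in the i.i.d.\ case the resulting expression collapses to the per-phase identity of the statement, $\e_{x,i}(\tilde N_{\tau_a}(k))=-[(\overline{\bs W}^d(a-x)-\bs W^d(a-x)(\bs W^{d})'_+(a)^{-1}\bs W^d(a))\,\mathrm{diag}(\tilde{\bt})]_{ik}$.

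Next I would read off the two aggregate formulas by summation. Averaging this per-phase identity against the initial law $\tilde{\ba}$ on the left and over all terminal phases on the right, and using $\mathrm{diag}(\tilde{\bt})\bs 1=\tilde{\bt}$ together with $\tilde N_{\tau_a}=\sum_k\tilde N_{\tau_a}(k)$, gives the expression for $\mathbb{E}_x(\tilde N_{\tau_a}\mid(\tilde J_0,\tilde Z_0)\sim\tilde{\ba})$. For the post-change count I would note that, by construction, $\tilde E_1^\nu$ is precisely the set of phases whose $\tilde Z$-coordinate lies in the closed set $E_1^\nu$, so that $\sum_{t\le\tau_a:\,\Delta\tilde N_t\ne 0}1_{\{\tilde Z_{t-}\in E_1^\nu\}}=\sum_{k\in\tilde E_1^\nu}\tilde N_{\tau_a}(k)$ (cf.\ \eqref{nu_non_iid}); summing the per-phase identity only over $k\in\tilde E_1^\nu$, equivalently contracting on the right with $\mathrm{diag}(\tilde{\bt})\tilde{\bs 1}^{(1)}=\tilde{\bt}^{(1)}$, then gives the second formula.

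The hard part here is, essentially, that there is nothing hard that is new. The genuine analytic content --- that $\e_{x,i}(\tilde N_{\tau_a}(k))<\infty$ and that the differentiation at $z=1$ is legitimate (which rests on continuity of $\bs W^d$, $\overline{\bs W}^d$ and $(\bs W^{d})'_+$ under the thinning perturbation, a renewal-type domination of the count on $\{\tau_a<\infty\}$, and the vanishing of the count on $\{\tau_a=\infty\}$) --- is already carried out in the proof of Lemma~\ref{lemma_expectation_dual} and is inherited verbatim, since that argument uses only the MAP structure and the estimates of~\cite{IP12}, not irreducibility of the modulator or the rank-one form $\bs B=\bt\ba$. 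The main work is therefore bookkeeping: verifying that $\tilde{\bs B}$, which unlike its i.i.d.\ counterpart is a non-trivial block matrix rather than $\bt\ba$, enters the computation only through the row-sum relation $\tilde{\bs B}\bs 1=\tilde{\bt}$ and the identity $\mathrm{diag}(\tilde{\bt})\tilde{\bs 1}^{(1)}=\tilde{\bt}^{(1)}$, so that the algebra of Lemma~\ref{lemma_expectation_dual} goes through unchanged.
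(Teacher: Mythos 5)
Your proposal is correct and matches the paper's route: the paper proves Lemma~\ref{lemma_SP_non_iid} simply by invoking the argument of Lemma~\ref{lemma_expectation_dual} under the substitution $(\bs T,\bs B,\bt)\mapsto(\tilde{\bs T},\tilde{\bs B},\tilde{\bt})$ via \eqref{first_passage_SP_noniid}, exactly as you describe. Your additional observation that the only properties of $\tilde{\bs B}$ actually used are the row-sum relation $\tilde{\bs B}\bs 1=\tilde{\bt}$ (not the rank-one form $\bt\ba$) and $\mathrm{diag}(\tilde{\bt})\tilde{\bs 1}^{(1)}=\tilde{\bt}^{(1)}$ is precisely the bookkeeping that justifies the paper's ``likewise,'' so nothing is missing.
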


\subsection{The case of CUSUM} \label{subsection_CUSUM_noniid}

With the above results for the non-i.i.d.\ case, more interesting quantities can be computed beyond those obtained in Section \ref{subsection_SA_CUSUM}.
Here, we compute the average detection delay \eqref{def_ADD} and false alarm probability \eqref{def_PFA}, in addition to the average run length \eqref{def_ARL}.

Below, we consider the change point $\nu$ and observation $\zeta$, modeled by $(\tilde{J}, \tilde{Z})$. As we did in Section \ref{subsection_SA_CUSUM}, we shall first consider the case $\theta > 0$ and then the case $\theta < 0$.


%
\subsubsection{For the case $\theta > 0$}
 Let $\bs W$ be the scale matrix of the MAP $(X, (\tilde{J},\tilde{Z}))$ as in \eqref{eq:MAP_general} with $\gamma = \theta$ and  $C \equiv \kappa(\theta) > 0$ (see Remark \ref{remark_SA_CUSUM}).

\begin{corollary} \label{cor_CUSUM_noniid}
Fix $A > 0$.
(1) We have
\begin{align*}
\mathrm{PFA}( T_A )  &=  \mathbb{P} \Big(\tilde{N}_{\tau_{A+\kappa(\theta)}} +1 \leq \nu | (\tilde{J}_0,\tilde{Z}_0) \sim \tilde{\ba} \Big) = 
\tilde{\ba} \Big(\bs I_{\tilde{\mathfrak{n}}}-\overline {\bs W}(A+\kappa(\theta))(\tilde{\bs{T}}+ \tilde{\bs B} )\Big)^{-1} 
\tilde{\bs 1}^{(0)}.
\end{align*}
(2) We have
\begin{align*}
\mathrm{ARL}(T_A ) &=    1 + \e (\tilde{N}_{\tau_{A+ \kappa(\theta)}} |  (\tilde{J}_0,\tilde{Z}_0 ) \sim \tilde{\ba}  ) \\ &= 
 1+ \tilde{\ba} \Big(\bs I_{\tilde{\mathfrak{n}}}-\overline {\bs W}(A+\kappa(\theta))(\tilde{\bs{T}}+\tilde{\bs B} )\Big)^{-1}\overline {\bs W}(A+\kappa(\theta)) \tilde{\bt}, \\
\mathrm{ADD}(T_A ) &= \e \big( (\tilde{N}_{\tau_{A+\kappa(\theta)} } + 1 - \nu)^+  |  (\tilde{J}_0,\tilde{Z}_0 ) \sim \tilde{\ba}  \big) \\ &= 
\tilde{\ba}  \Big(\bs I_{\tilde{\mathfrak{n}}}-\overline {\bs W}(A+\kappa(\theta))(\tilde{\bs T}+\tilde{\bs B} )\Big)^{-1}  \Big( \overline {\bs W}(A+\kappa(\theta))
\tilde{\bt}^{(1)} + \tilde{\bs 1}^{(1)} 
 \Big).
\end{align*}
\end{corollary}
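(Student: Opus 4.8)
The plan is to leverage Proposition~\ref{proposition_equivalence}(1), which reduces everything about $T_A$ to the first-passage data of the reflected MAP $(X,(\tilde J,\tilde Z))$ of \eqref{eq:MAP_general}, and then to read off each of the three performance measures from Lemma~\ref{lemma_non-iid}. The common starting point is the a.s.\ identity $T_A = 1 + \tilde N_{\tau_{A+\kappa(\theta)}}$, valid for \emph{any} strictly positive observation sequence $\zeta$, hence in particular under the law in which $\zeta$ and $\nu$ are driven by $(\tilde J,\tilde Z)$; throughout write $a := A + \kappa(\theta)$ and recall $\kappa(\theta)>0$ when $\theta>0$ so that $a>0$.

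For \textbf{PFA}: by definition \eqref{def_PFA}, $\mathrm{PFA}(T_A) = \mathbb P(T_A \le \nu) = \mathbb P(\tilde N_{\tau_a} + 1 \le \nu)$. Since $\nu$ takes integer values, $\{\tilde N_{\tau_a}+1 \le \nu\} = \{\tilde N_{\tau_a} < \nu\}$, which by the chain of equalities \eqref{FA_identity} equals $\{\tilde Z_{\tau_a} \in E_0^\nu\}$. Starting $(\tilde J_0,\tilde Z_0)\sim\tilde\ba$ and applying Lemma~\ref{lemma_non-iid}(1) gives $\mathbb P(\tilde Z_{\tau_a}\in E_0^\nu \mid (\tilde J_0,\tilde Z_0)\sim\tilde\ba) = \tilde\ba\big(\bs I_{\tilde{\mathfrak n}} - \overline{\bs W}(a)(\tilde{\bs T}+\tilde{\bs B})\big)^{-1}\tilde{\bs 1}^{(0)}$, since summing the matrix $\p[(\tilde J,\tilde Z)_{\tau_a}]$ against $\tilde{\bs 1}^{(0)}$ on the right picks out exactly the probability of terminating in a pre-change phase. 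This is the stated formula. For \textbf{ARL}: by \eqref{def_ARL}, $\mathrm{ARL}(T_A) = \mathbb E(T_A) = 1 + \mathbb E(\tilde N_{\tau_a})$, and the second expression is immediate from the unconditional expectation in Lemma~\ref{lemma_non-iid}(2).

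For \textbf{ADD}, which is the substantive one: by \eqref{def_ADD}, $\mathrm{ADD}(T_A) = \mathbb E\big((T_A-\nu)^+\big) = \mathbb E\big((\tilde N_{\tau_a}+1-\nu)^+\big)$. The key combinatorial observation is that $(\tilde N_{\tau_a}+1-\nu)^+$ counts exactly the number of observation epochs, among the $\tilde N_{\tau_a}+1$ epochs up to and including the alarm-triggering jump, that occur strictly after $\nu$, i.e.\ at which the modulator $\tilde Z$ already sits in $E_1^\nu$. Splitting this count into the $\tilde N_{\tau_a}$ jumps of $\tilde N$ strictly before $\tau_a$ plus the terminal $(\tilde N_{\tau_a}+1)$-st observation, one gets
\[
(\tilde N_{\tau_a}+1-\nu)^+ = \sum_{t\le\tau_a:\,\Delta\tilde N_t\neq 0} 1_{\{\tilde Z_{t-}\in E_1^\nu\}} + 1_{\{\tilde Z_{\tau_a}\in E_1^\nu\}},
\]
using that $\tilde Z$ enters $E_1^\nu$ precisely at the $\nu$-th jump (convention: disorder immediately after the $\nu$-th observation) and that $E_1^\nu$ is closed, so the pre-jump phase at an epoch lies in $E_1^\nu$ iff that epoch index exceeds $\nu$; the terminal indicator $1_{\{\tilde Z_{\tau_a}\in E_1^\nu\}}$ accounts for the $(\tilde N_{\tau_a}+1)$-st observation whose pre-jump phase is $\tilde Z_{\tau_a}$. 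Taking expectations under $(\tilde J_0,\tilde Z_0)\sim\tilde\ba$, the first term is the second displayed expectation in Lemma~\ref{lemma_non-iid}(2), namely $\tilde\ba\big(\bs I_{\tilde{\mathfrak n}}-\overline{\bs W}(a)(\tilde{\bs T}+\tilde{\bs B})\big)^{-1}\overline{\bs W}(a)\tilde{\bt}^{(1)}$, and the second term is $\mathbb P(\tilde Z_{\tau_a}\in E_1^\nu\mid(\tilde J_0,\tilde Z_0)\sim\tilde\ba) = \tilde\ba\big(\bs I_{\tilde{\mathfrak n}}-\overline{\bs W}(a)(\tilde{\bs T}+\tilde{\bs B})\big)^{-1}\tilde{\bs 1}^{(1)}$ by Lemma~\ref{lemma_non-iid}(1). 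Factoring out the common inverse matrix yields the stated expression.

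The main obstacle is the pathwise bookkeeping in the ADD identity: one must be careful about the off-by-one conventions (disorder triggered after the $\nu$-th observation; the alarm counting the extra terminal observation as in Proposition~\ref{proposition_equivalence}(1); the behaviour when $\nu=0$ or $\nu=\infty$, where $(T_A-\nu)^+$ is $T_A$ or $0$ respectively and the decomposition must degenerate correctly), and about the distinction between $\tilde Z_{t-}$ at a jump epoch and $\tilde Z_{\tau_a}$ at the (non-jump) first-passage time. Once the decomposition of $(\tilde N_{\tau_a}+1-\nu)^+$ into the jump-sum plus terminal indicator is justified, everything else is a direct substitution of the already-proven matrix identities.
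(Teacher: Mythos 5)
Your proposal is correct and follows essentially the same route as the paper: reduce via Proposition~\ref{proposition_equivalence}(1) to the reflected MAP, use \eqref{FA_identity} for PFA, and decompose $(\tilde N_{\tau_{A+\kappa(\theta)}}+1-\nu)^+$ into the jump-sum over $\{\tilde Z_{t-}\in E_1^\nu\}$ plus the terminal indicator $1_{\{\tilde Z_{\tau_{A+\kappa(\theta)}}\in E_1^\nu\}}$ before applying Lemma~\ref{lemma_non-iid}. Your bookkeeping justification of that decomposition is merely more explicit than the paper's, which states it without elaboration.
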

\begin{proof}
(1)  By Proposition \ref{proposition_equivalence}(1)   and \eqref{FA_identity},
$\{ T_A \leq \nu \} = \{ \tilde{N}_{\tau_{A+\kappa(\theta)}} +1 \leq \nu \} =  \{ \tilde{N}_{\tau_{A+\kappa(\theta)}} < \nu \} = \{ \tilde{Z}_{\tau_{A+\kappa(\theta)}} \in E_0^\nu \}$.
Hence, Lemma \ref{lemma_non-iid}(1) gives the results.

(2) The first claim is immediate by  Proposition \ref{proposition_equivalence}(1)    and Lemma \ref{lemma_non-iid}(2).
For the second claim, 
\begin{align*}
(T_A - \nu)^+ = (\tilde{N}_{\tau_{A+\kappa(\theta)} } + 1 - \nu)^+ &=  \Big( \sum_{t \leq \tau_{A+\kappa(\theta)}: \Delta \tilde{N}_t \neq 0} 1_{\{ \tilde{Z}_{t-} \in E_1^\nu \}} \Big) + 1_{\{ \tilde{Z}_{\tau_A+\kappa(\theta)}\in E_1^\nu \}}.
\end{align*}
Hence, Lemma \ref{lemma_non-iid}(1) and (2) give the result.
\end{proof}

%

\begin{remark} 
Notice that more variations can be computed. For example, in Examples  \ref{example_robust} and  \ref{example_geometric}, one can for example compute
$\mathrm{P} ( T_A > \nu,  H_l)$ and $\mathrm{E} ((T_A - \nu)^+ ; H_l)$ where $H_l$ is the event that the true post-change distribution is $F_l$ for $l = 1,2$. Indeed, 
\begin{align*}
\{ T_A > \nu, H_l \} = \{ \tilde{N}_{\tau_{A+\kappa(\theta)}} +1 > \nu, H_l \} =  \{ \tilde{N}_{\tau_{A+\kappa(\theta)}} \geq \nu, H_l \} = \{ \tilde{Z}_{\tau_{A+\kappa(\theta)}} = (1,l) \}, \quad l = 1,2,
\end{align*}
whose probability can be computed by Lemma \ref{lemma_non-iid}(1). In addition, for $l = 1,2$, \begin{align*}
(T_A - \nu)^+ 1_{H_l}= (\tilde{N}_{\tau_{A+\kappa(\theta)} } + 1 - \nu)^+ 1_{H_l} &= \Big( \sum_{t \leq \tau_{A+\kappa(\theta)}: \Delta \tilde{N}_t \neq 0} 1_{\{ \tilde{Z}_{t-} = (1,l) \}} \Big) + 1_{\{ \tilde{Z}_{\tau_A+\kappa(\theta)} = (1,l) \}},
\end{align*} 
whose expectation can be computed by Lemma \ref{lemma_non-iid}(1) and (2).
%

\end{remark}

\subsubsection{For the case $\theta < 0$}  Let $\bs W^d$ be the scale matrix of the MAP $(X^d, (\tilde{J},\tilde{Z}))$ as in \eqref{eq:MAP_general} with $\gamma = -\theta > 0$ and  $C \equiv -\kappa(\theta) > 0$ (see Remark \ref{remark_SA_CUSUM}).

We first obtain the average run length and average detection delay, which can be derived easily by
Lemma \ref{lemma_SP_non_iid}. 
\begin{corollary}  \label{cor_CUSUM_noniid_SP} 
Fix $A > 0$. We have
\begin{align*}
\mathrm{ARL} (T_A ) &= \e_{ |\kappa(\theta)|} \big(\tilde{N}_{\tau_{A+ |\kappa(\theta)|}} |  (\tilde{J}_0,\tilde{Z}_0 ) \sim \tilde{\ba}  \big) \\
&= -   \tilde{\ba} \Big( \overline{\bs W}^d(A)- \bs W^d(A) (\bs W^{d})'_+(A+ |\kappa(\theta)|)^{-1} {\bs W}^d(A+ |\kappa(\theta)|)\Big) \tilde{\bt}, \\
\mathrm{ADD} (T_A ) &= \e_{|\kappa(\theta)|} \big((\tilde{N}_{\tau_{A+ |\kappa(\theta)|}} - \nu)^+ |  (\tilde{J}_0,\tilde{Z}_0 ) \sim \tilde{\ba}  \big) \\
&= -   \tilde{\ba} \Big( \overline{\bs W}^d(A)- \bs W^d(A) (\bs W^{d})'_+(A+ |\kappa(\theta)|)^{-1} {\bs W}^d(A+ |\kappa(\theta)|)\Big) \tilde{\bt}^{(1)}. 
\end{align*}
\end{corollary}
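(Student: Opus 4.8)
The plan is to mirror the proof of Corollary~\ref{cor_CUSUM_noniid} in the spectrally positive setting, replacing the use of Proposition~\ref{proposition_equivalence}(1) by Proposition~\ref{proposition_equivalence}(2) and the use of Lemma~\ref{lemma_non-iid} by Lemma~\ref{lemma_SP_non_iid}. Recall that when $\theta<0$ we work with the dual MAP $X^d$ obtained by flipping $X$; the process $X$ in \eqref{eq:MAP_general} with $\gamma=-\theta>0$ and deterministic jumps $C\equiv -\kappa(\theta)>0$ is then of the form ($\widetilde{\textrm{SP}}$), so Lemma~\ref{lemma_SP_non_iid} applies verbatim with the scale objects $\bs W^d, \overline{\bs W}^d, (\bs W^d)'_+$ of $X^d$. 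The relevant starting point is $x=|\kappa(\theta)|$ and the relevant barrier is $a=A+|\kappa(\theta)|$, exactly as in the i.i.d.\ computation \eqref{ARL_SP} in Section~\ref{subsection_SA_CUSUM}.

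First I would invoke Proposition~\ref{proposition_equivalence}(2): for general strictly positive observations $\zeta$ one has $T_A = \tilde N_{\tau^{(|\kappa(\theta)|)}_{A+|\kappa(\theta)|}}$ a.s., where here $\tilde N$ is the arrival counting process of the Markov arrival process driven by $(\tilde J,\tilde Z)$, and $\tau^{(|\kappa(\theta)|)}$ refers to the reflected process started from $|\kappa(\theta)|$. (Proposition~\ref{proposition_equivalence} was stated for $X^{(x)}$ with an arbitrary strictly positive sequence $\zeta$, hence it applies unchanged once the interarrival times of $\tilde N$ are taken to be $\zeta$; the modulating chain $(\tilde J,\tilde Z)$ only governs the law of $\zeta$ and the change point, not the pathwise identity.) Consequently $\mathrm{ARL}(T_A)=\mathrm{E}(T_A)=\e_{|\kappa(\theta)|}(\tilde N_{\tau_{A+|\kappa(\theta)|}}\mid(\tilde J_0,\tilde Z_0)\sim\tilde\ba)$, and applying the first display of Lemma~\ref{lemma_SP_non_iid} with $x=|\kappa(\theta)|$, $a=A+|\kappa(\theta)|$ (so that $a-x=A$) yields the stated formula for $\mathrm{ARL}(T_A)$.

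For $\mathrm{ADD}(T_A)$ I would write, using Proposition~\ref{proposition_equivalence}(2) and the identity $\nu=\inf(n\ge 0:Z_n\in E_1^\nu)$ together with the closedness of $E_1^\nu$ (as in \eqref{FA_identity}), that on $\{\tilde N_{\tau_a}\ge\nu\}$ the overshoot $T_A-\nu=\tilde N_{\tau_a}-\nu$ equals the number of arrivals of $\tilde N$ before $\tau_a$ that occur while $\tilde Z_{t-}\in E_1^\nu$, i.e.
\[
(T_A-\nu)^+ = \sum_{t\le\tau_a:\,\Delta\tilde N_t\neq 0} 1_{\{\tilde Z_{t-}\in E_1^\nu\}},
\qquad a = A+|\kappa(\theta)|.
\]
Taking $\e_{|\kappa(\theta)|}(\,\cdot\mid(\tilde J_0,\tilde Z_0)\sim\tilde\ba)$ and applying the second display of Lemma~\ref{lemma_SP_non_iid} gives the claimed expression with $\tilde{\bt}^{(1)}$. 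Note the structural difference from the $\theta>0$ case in Corollary~\ref{cor_CUSUM_noniid}: there $T_A=\tilde N_{\tau_a}+1$, so an extra $+1$ appears and correspondingly an extra boundary term $1_{\{\tilde Z_{\tau_a}\in E_1^\nu\}}$ (hence the $\tilde{\bs 1}^{(1)}$ term), whereas here $T_A=\tilde N_{\tau_a}$ exactly and no such correction term arises — this is the one point that must be checked carefully.

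The main obstacle is not the algebra but the pathwise bookkeeping: one must verify that the counter $\sum_{t\le\tau_a:\Delta\tilde N_t\neq0}1_{\{\tilde Z_{t-}\in E_1^\nu\}}$ really does equal $(T_A-\nu)^+$ and not, say, $(T_A-\nu-1)^+$ or $(T_A-\nu)^+$ off by a boundary term. This hinges on the convention (stated in Section~\ref{subsection_change_point}) that the disorder is triggered immediately after the $\nu$-th observation, on the correspondence $\tilde Z_t=Z_{\tilde N_t}$ from \eqref{Z_tilde_Z}, and on the precise "before or at" counting in Proposition~\ref{proposition_equivalence}(2); once these are lined up, the arrivals indexed $\nu+1,\dots,T_A$ are exactly those with $\tilde Z_{t-}\in E_1^\nu$, giving $T_A-\nu$ of them when $T_A>\nu$ and none when $T_A\le\nu$, as required. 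Everything else is a direct substitution into Lemma~\ref{lemma_SP_non_iid}.
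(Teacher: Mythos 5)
Your proposal is correct and follows essentially the same route as the paper: both claims are obtained by combining Proposition \ref{proposition_equivalence}(2) with Lemma \ref{lemma_SP_non_iid} at $x=|\kappa(\theta)|$, $a=A+|\kappa(\theta)|$, and the key pathwise identity $(T_A-\nu)^+=\sum_{t\le\tau_a:\Delta\tilde N_t\neq 0}1_{\{\tilde Z_{t-}\in E_1^\nu\}}$ is exactly the one the paper uses. Your careful check that no $+1$ boundary correction arises here (in contrast to the $\theta>0$ case) is the right point to verify and is handled correctly.
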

\begin{proof}
The first claim is immediate by Proposition \ref{proposition_equivalence}(2) and Lemma \ref{lemma_SP_non_iid}.
Regarding the second claim, because given $X_0 =  |\kappa(\theta)|$,
$(T_A - \nu)^+ = (\tilde{N}_{\tau_{A+ |\kappa(\theta)|}} - \nu)^+ =  \sum_{t \leq \tau_{A+|\kappa(\theta)|}: \Delta \tilde{N}_t \neq 0} 1_{\{ \tilde{Z}_{t-} \in E_1^\nu \}}$,
Lemma \ref{lemma_SP_non_iid} shows the result.
\end{proof}

On the other hand, the computation of the false alarm probability is more involved. By Proposition \ref{proposition_equivalence}(2), given $X_0 = |\kappa(\theta)|$,
$\{ T_A \leq \nu \} = \{ \tilde{N}_{\tau_{A+ |\kappa(\theta)|}} \leq \nu \}  = \{ \tilde{N}_{\tau_{A+ |\kappa(\theta)|}} < \nu \} \cup \{ \tilde{N}_{\tau_{A+ |\kappa(\theta)|}} = \nu \}$.
Here  \eqref{FA_identity} holds for ($\widetilde{\textrm{SP}}$)  as well and the probability of
%
$\{ \tilde{N}_{\tau_{A+ |\kappa(\theta)|}} < \nu \} = \{ \tilde{Z}_{\tau_{A+ |\kappa(\theta)|}} \in E_0^\nu \}$ can be computed by \eqref{first_passage_SP_noniid}. On the other hand, it is not clear if the probability of $\{ \tilde{N}_{\tau_{A+ |\kappa(\theta)|}} = \nu \} = \{ \tilde{Z}_{\tau_{A+ |\kappa(\theta)|}-} \in E_0^\nu, \tilde{Z}_{\tau_{A+ |\kappa(\theta)|}} \in E_1^\nu  \}$
can be directly computed.

However, this can be dealt by considering a modification, say $\widehat{Z}$, of $Z$ by doubling the states $E^\nu_1$ to keep track of whether it first entered from $E^\nu_0$ or not.
More precisely, we modify the state space of $Z$ to $E^\nu_0 \cup E^\nu_{1'} \cup E^\nu_1$ where $E^\nu_{1'}$ is a copy of $E^\nu_1$. The Markov chain  $\widehat{Z}$ moves from $E^\nu_0$ to $E^\nu_{1'}$ and then to $E^\nu_1$. In particular, it stays only at a unit time in $E^\nu_{1'}$. The change point is given by $ \nu = \inf (n \geq 0:  \widehat{Z}_n \notin E^\nu_0) = \inf (n \geq 0: \widehat{Z}_n \in E^\nu_{1'} \cup  E^\nu_1)$. 
As a modification of \eqref{transition_discrete}, the transition matrix and initial distribution of $\widehat{Z}$ are given by, respectively,
\begin{align} \label{transition_discrete_SP_modified}
\begin{pmatrix}
\bs {K} & \bs {L} & \bs {O} \\ \bs {O} &  \bs{O} & \bs {M}  \\ \bs {O} &  \bs{O} & \bs {M} 
\end{pmatrix} \quad \textrm{and} \quad \bs{\beta} = (\bs \beta_1^{(0)}, \ldots, \bs \beta_{m_0}^{(0)}, \bs \beta_{1}^{(1)},  \ldots, \bs \beta^{(1)}_{m_1}, 0, \ldots, 0).
\end{align}
By replacing \eqref{transition_discrete} with \eqref{transition_discrete_SP_modified},  $\tilde{E}^\nu_0\cup \tilde{E}^\nu_1$, $\tilde{\ba}$, $\tilde{\mathfrak{n}}$, $\tilde{\bs T}$ and $\tilde{\bs B}$ are modified accordingly to say,   $\tilde{E}^\nu_0 \cup \tilde{E}^\nu_{1'}\cup \tilde{E}^\nu_1$,  $\widehat{\ba}$, $\widehat{\mathfrak{n}}$, $\widehat{\bs T}$ and $\widehat{\bs B}$.
However,  these changes do not alter the law of  $\tilde{J}$ nor the arrivals $\tilde{N}$.
Following the same steps, we can compute \eqref{first_passage_SP_noniid} for this slightly generalized case. Indeed, given $X_0 = |\kappa(\theta)|$, 
\begin{multline*}
\{ T_A \leq \nu \} = \{ \tilde{N}_{\tau_{A+ |\kappa(\theta)|}} \leq \nu \}  = \{ \tilde{N}_{\tau_{A+ |\kappa(\theta)|}} < \nu \} \cup \{ \tilde{N}_{\tau_{A+ |\kappa(\theta)|}} = \nu \} \\
=  \{ \widehat{Z}_{\tau_{A+ |\kappa(\theta)|}} \in E_0^\nu \} \cup  \{ \widehat{Z}_{\tau_{A+ |\kappa(\theta)|}} \in E_{1'}^\nu  \} =  \{ \widehat{Z}_{\tau_{A+ |\kappa(\theta)|}} \in E_0^\nu \cup E_{1'}^\nu  \}.
\end{multline*}
By these and   \eqref{first_passage_SP_noniid}, the following is immediate.
\begin{corollary} \label{cor_CUSUM_noniid_SP2} 
Fix $A > 0$.
Let $\bs W^d$ be the scale function of the MAP $(X^d, (\tilde{J},\widehat{Z}))$ as in \eqref{eq:MAP_general}  with $\gamma = -\theta > 0$ and  $C_i \equiv -\kappa(\theta) > 0$ and $\widehat{Z}$ given by \eqref{transition_discrete_SP_modified}. We have
\begin{align*}
\mathrm{PFA} ( T_A  )  &= \mathbb{P}_{|\kappa(\theta)|} \Big(\tilde{N}_{\tau_{A+|\kappa(\theta)|}}  \leq \nu | (\tilde{J}_0,\widehat{Z}_0) \sim \widehat{\ba} \Big) \\
&= \widehat{\ba}  \Big[ \bs I_{\widehat{n}} - \Big( \overline{\bs W}^d(A)  - \bs W^d(A)
(\bs W^{d})'_+(A+ |\kappa(\theta)|)^{-1}
{\bs W}^d(A+ |\kappa(\theta)|) \Big) (\widehat{\bs T}+\widehat{\bs B}) \Big] \widehat{\bs 1},
\end{align*}
where $\widehat{\bs 1} := [\bs 1^\top,  \ldots, \bs  1^\top, \bs 1^\top,  \ldots, \bs  1^\top, \bs  0^\top, \ldots, \bs  0^\top ]^\top$, whose element is $1$ if it belongs to $\tilde{E}_0^\nu \cup \tilde{E}_{1'}^\nu$ and zero otherwise.
\end{corollary}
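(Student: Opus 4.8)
The plan is to translate $\{T_A\le\nu\}$ into a first-passage event for the modulator of the spectrally positive MAP $(X^d,(\tilde J,\widehat Z))$ and then read it off from~\eqref{first_passage_SP_noniid}.

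First I would recall $\mathrm{PFA}(T_A)=\mathrm{P}(T_A\le\nu)$ and apply Proposition~\ref{proposition_equivalence}(2): started from $X_0=|\kappa(\theta)|$ one has $T_A=\tilde N_{\tau_a}$ with $a:=A+|\kappa(\theta)|$. In the $(\widetilde{\textrm{SP}})$ regime $X$ drifts downward between jumps and has only upward jumps, so the reflected path can upcross the level $a>0$ only at a jump epoch; hence $\tau_a$ is a.s.\ a jump time of $\tilde N$ and $\tilde Z_{\tau_a}=Z_{\tilde N_{\tau_a}}$ is the corresponding post-jump state. Splitting $\{T_A\le\nu\}=\{\tilde N_{\tau_a}<\nu\}\cup\{\tilde N_{\tau_a}=\nu\}$, the first piece is $\{\tilde Z_{\tau_a}\in E_0^\nu\}$ by~\eqref{FA_identity} (which is valid in the $(\widetilde{\textrm{SP}})$ setting as well), while the second is the two-time event $\{\tilde Z_{\tau_a-}\in E_0^\nu,\ \tilde Z_{\tau_a}\in E_1^\nu\}$.

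The obstacle is precisely that this second event is not a marginal event of the modulator at $\tau_a$, so~\eqref{first_passage_SP_noniid} cannot be applied to it directly — this is why the bare chain $Z$ is insufficient and the doubled chain $\widehat Z$ is introduced. Since $\widehat Z$ moves $E_0^\nu\to E_{1'}^\nu\to E_1^\nu$ and dwells exactly one step in $E_{1'}^\nu$, its post-jump state satisfies $\widehat Z_{\tau_a}\in E_{1'}^\nu$ if and only if the jump at $\tau_a$ is the one that takes the chain out of $E_0^\nu$, i.e.\ if and only if $\tilde N_{\tau_a}=\nu$. Combining the two pieces gives $\{T_A\le\nu\}=\{\widehat Z_{\tau_a}\in E_0^\nu\cup E_{1'}^\nu\}$. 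Because passing from $Z$ to $\widehat Z$ does not alter the laws of $\tilde J$ or of the arrival stream $\tilde N$, Theorem~\ref{theorem_scale_matrix_general} and the identity~\eqref{first_passage_SP_noniid} carry over verbatim to $(X^d,(\tilde J,\widehat Z))$ with $\tilde{\bs T},\tilde{\bs B},\tilde{\mathfrak n},\tilde{\ba}$ replaced by $\widehat{\bs T},\widehat{\bs B},\widehat{\mathfrak n},\widehat{\ba}$.

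It then remains to extract the scalar. Conditioning on $(\tilde J_0,\widehat Z_0)\sim\widehat{\ba}$ and summing the $\widehat Z$-marginal of the post-passage state over $E_0^\nu\cup E_{1'}^\nu$ yields
\[
\mathrm{PFA}(T_A)=\widehat{\ba}\,\p_{|\kappa(\theta)|}\!\big[(\tilde J,\widehat Z)_{\tau_a}\big]\,\widehat{\bs 1},
\]
and plugging $a=A+|\kappa(\theta)|$ and $x=|\kappa(\theta)|$ into~\eqref{first_passage_SP_noniid} — admissible since $0\le|\kappa(\theta)|\le A+|\kappa(\theta)|$, with $a-x=A$ — gives the stated formula. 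Beyond this routine linear algebra, the only point requiring care is the bookkeeping at $\tau_a$: that it is a.s.\ a jump epoch, that $\tilde N_{\tau_a}$ counts that jump in the same way as in Proposition~\ref{proposition_equivalence}(2), and that the single-step sojourn of $\widehat Z$ in $E_{1'}^\nu$ picks out exactly $\{\tilde N_{\tau_a}=\nu\}$ with no overlap with $\{\tilde N_{\tau_a}<\nu\}$.
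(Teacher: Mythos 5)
Your proposal is correct and follows essentially the same route as the paper: the decomposition $\{T_A\le\nu\}=\{\tilde N_{\tau_a}<\nu\}\cup\{\tilde N_{\tau_a}=\nu\}$, the observation that the second event is a two-time event requiring the doubled chain $\widehat Z$ with its one-step sojourn in $E_{1'}^\nu$, and the final application of \eqref{first_passage_SP_noniid} with $x=|\kappa(\theta)|$ and $a=A+|\kappa(\theta)|$ are exactly the paper's argument. Your added remarks on why $\tau_a$ is a.s.\ a jump epoch and why the passage from $Z$ to $\widehat Z$ leaves the laws of $\tilde J$ and $\tilde N$ unchanged are consistent with (and slightly more explicit than) the text preceding the corollary.
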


\section{Numerical examples} \label{sec_numerics}

We conclude the paper by confirming the analytical results obtained in the previous sections through numerical experiments. All codes are implemented in Python.  Because the scale matrix grows exponentially fast, high precision is required for accurate results. Hence, we used the mpmath library with 30 digits. In addition, we compute the matrix exponentials in the scale function in an alternative way as described in Appendix \ref{remark_matrix_exp}.
 We also used butools\footnote{Available at http://webspn.hit.bme.hu/$\sim$telek/tools/butools/doc/ph.html}  for randomly selecting the PH distributions used in our experiments.

We let $F_0$ be a PH distribution with initial distribution and transition rate matrix, respectively,
\[
  \bs  \alpha = (0.28,0.35,0.37), \quad \bs T = \begin{pmatrix}-0.51 & 0.12 & 0.12 \\ 0.21 & -0.46 & 0.10 \\ 0.28 & 0.16 & -0.63 \end{pmatrix}
    \]
    and $F_1$  be the PH distribution obtained by the exponential tilting of $F_0$ for \textbf{Case SN}: $\theta = 0.1$ and \textbf{Case SP}: $\theta = -0.1$.  As in Remark \ref{remark_sign}, the corresponding continuous-time process $X$ defined in \eqref{Sparre-andersen_CUSUM} becomes spectrally negative and spectrally positive, respectively.  Below, we focus on the LLR process \eqref{LLR} with $f_0$ and $f_1$ being the densities of $F_0$ and $F_1$, respectively.

For the barrier $A$, 
we set it to be the optimal barrier $A_\beta$ in the minimax formulation that minimizes the Lorden detection measure  \eqref{lorden_measure}
for $\beta = 5$ and $\beta = 10$, 
so that the average run length $\mathrm{ARL}(T_{A_\beta})=\mathrm{E}_\infty (T_{A_\beta})$ equals $\beta$.  
As discussed in Section  \ref{subsection_SA_CUSUM}, 
 $\mathrm{ARL}(T_A)$, for any $A > 0$, is computed via 
 \eqref{ARL_SN} and  \eqref{ARL_SP}
for \textbf{Case SN} and \textbf{Case SP}, respectively, using the scale matrix given in Corollary \ref{example_const}.  Because $A \mapsto \mathrm{ARL}(T_A)$ is monotonically increasing, we apply a classical bisection method with error bound $|\mathrm{ARL}- \beta| < 10^{-4}$. We obtain $A_5 = 0.456177$ and $A_{10} = 1.06076$ for \textbf{Case SN} and $A_5 =  0.994354$ and $A_{10} = 1.92654$ for \textbf{Case SP}.


\subsection{Example 1: Geometric case and robustness}
We first consider a simple example with the  zero-modified geometric distributed change point as in Example \ref{example_robust}. We consider both the case $\epsilon= 0$ where the post-change distribution is certain to be $F_1$ and the case $\epsilon > 0$ where the  post-change distribution is a composite of $F_1$ and $F_2$, where we define $F_2$ to be another PH distribution given by
    \[
\bs \alpha =  (0.20, 0.25, 0.02, 0.18, 0.35) \quad \textrm{and} \quad \bs T= 
\begin{pmatrix} -1.45 & 0.35&0.34&0.34&0.05 \\
                0.01&-1.25&0.34&0.34&0.23 \\
                0.25 & 0.29&-0.70&0.10&0.02\\
                0.06&0.25&0.28&-1.01&0.16\\
                0.27&0.12&0.08&0.21&-0.87\end{pmatrix}.
\]

The corresponding scale matrix for the generalized Sparre-Andersen process is given in Theorem \ref{theorem_scale_matrix_general} (and we use Appendix \ref{remark_matrix_exp}), where   $\tilde{\ba}$, $\tilde{\bs{T}}$, $\tilde{\bs B} $ (of dimension $3+3+5 = 11$) are defined as in Example \ref{example_geometric}. Using this,  the average run length, average detection delay and false alarm probability are computed via Corollary  \ref{cor_CUSUM_noniid} for \textbf{Case SN} and Corollaries \ref{cor_CUSUM_noniid_SP}  and \ref{cor_CUSUM_noniid_SP2}  for  \textbf{Case SP}. 
 In order to confirm the accuracy of the obtained results, we compare them against those approximated by Monte Carlo simulation based on 100,000 sample paths. The results are summarized in Table \ref{Geometric case} for  \textbf{Cases SN} and \textbf{SP} and for $\beta = 5, 10$. Notice that the false alarm probability is invariant to the selection of $\epsilon$, because on $\{ T_A \leq \nu \}$, observations until $T_A$ are all $F_0$-distributed and does not depend on the post-change distribution.

\begin{table}[hbtp]
  \centering
    \begin{tabular}{c|c|cc|cc}
    \hline
     \multirow{2}{*}{$\epsilon$}
    & & \multicolumn{2}{|c}{$\beta = 5$}   & \multicolumn{2}{|c}{$\beta = 10$}  \\
     & & scale matrix &  simulation  & scale matrix &  simulation  \\
    \hline \hline
   \multirow{4}{*}{$0$} & ARL &  7.99071  & 7.99558 [7.93872, 8.05244] &   27.1271 & 26.8662 [26.6579, 27.0746] \\
   &ADD & 5.45165  &  5.45436 [5.39751, 5.51121] & 23.7523&23.4872 [23.2749,  23.6995] \\
    &PFA  &  0.49024  &  0.49163 [0.48850,  0.49476] & 0.28132 &0.28186 [0.27899, 0.28473] \\
    \hline
       \multirow{4}{*}{$0.1$} & ARL& 7.70221 & 7.70868 [7.64808, 7.76928]&25.4470 & 25.3680 [25.1705, 25.5655]\\
   &ADD &5.16316 &  5.16996 [5.11107, 5.22885] & 22.0723 & 21.9988 [21.8009, 22.1967] \\
    &PFA  & 0.49024  &  0.48973 [0.48634, 0.49312] &  0.28132& 0.28235 [0.27944, 0.28526]\\
    \hline
       \multirow{4}{*}{$0.5$} & ARL & 6.54824  & 6.52785 [6.48329, 6.57241] & 18.7267 &18.7092 [18.5514, 18.8670] \\
   &ADD  &  4.00919 & 3.99019 [3.94637, 4.03401] &15.3520 &  15.3208 [15.1653, 15.4763]\\
    &PFA  & 0.49024  & 0.49211 [0.48876, 0.49546] & 0.28132&  0.28155 [0.27878, 0.28432] \\
    \hline
  \end{tabular} \\
  \textbf{Case SN} \\ \vspace{0.3cm}
    \begin{tabular}{c|c|cc|cc}
    \hline
     \multirow{2}{*}{$\epsilon$}
    & & \multicolumn{2}{|c}{$\beta = 5$}   & \multicolumn{2}{|c}{$\beta = 10$}  \\
     & & scale matrix &  simulation  &scale matrix &  simulation  \\
    \hline \hline
   \multirow{4}{*}{$0$} & ARL & 8.77778 & 8.74405  [8.68265, 8.80545] &  37.2390 & 37.3125 [37.0428, 37.5822]\\
   &ADD  &  5.99044  & 5.94159  [5.87837, 6.00481] &  33.4780 &33.5564 [33.2825, 33.8303]  \\
    &PFA & 0.42817  & 0.42919  [0.42601, 0.43237] & 0.18476 & 0.18428 [0.18191, 0.18665] \\
    \hline
       \multirow{4}{*}{$0.1$} & ARL & 8.38546& 8.38532 [8.33715, 8.43349] & 34.4696 &34.4698 [34.2040, 34.7356] \\
   &ADD& 5.59812  & 5.59157 [5.54133, 5.64181] &  30.7086 & 30.6997 [30.4289, 30.9706]\\
    &PFA &  0.42817 & 0.42702 [0.42401, 0.43003] &  0.18476 & 0.18402 [0.18155, 0.18649]\\
    \hline
       \multirow{4}{*}{$0.5$} & ARL & 6.81619 & 6.80107 [6.76191, 6.84023] &23.3923 & 23.2525 [23.0712, 23.4331]\\
   &ADD & 4.02885  &  4.01879 [3.97798, 4.05960] &  19.6313&19.4807 [19.2951, 19.6663]\\
    &PFA & 0.42817 & 0.42727 [0.42409, 0.43045]& 0.18476 &  0.18761 [0.18532, 0.18990]\\
    \hline
  \end{tabular} \\
  \textbf{Case SP}  \vspace{0.3cm}
    \caption{\small (Example 1) The average run length (ARL), average detection delay (ADD), and false alarm probability (PFA) computed via the scale matrix and by Monte Carlo simulation (average and 95\% confidence interval) for the barrier $A_\beta$ for $\beta = 5, 10$ and $\epsilon = 0,0.1,0.5$ for \textbf{Case SN} (top) and \textbf{Case SP} (bottom).
    }
  \label{Geometric case}
\end{table}
  \begin{table}
    \centering
    \begin{tabular}{c|cc|cc}
    \hline
     & \multicolumn{2}{|c}{$\beta = 5$}   & \multicolumn{2}{|c}{$\beta = 10$}  \\
      & scale matrix &  simulation  & scale matrix&  simulation  \\
    \hline \hline
   ARL & 8.52856  &8.51199 [8.46315, 8.56083] & 24.8331 & 24.8246 [24.6512, 24.9980] \\
   ADD  &6.48684  & 6.46128 [6.41219, 6.51037] &   22.4024 &22.3787 [22.2044, 22.5529] \\
  PFA  &  0.30020  &0.30072 [0.29804, 0.30340] & 0.14769 &  0.14834 [0.14613, 0.15055] \\
    \hline
  \end{tabular}  \\
  \textbf{Case SN} \\ \vspace{0.3cm}
    \begin{tabular}{c|cc|cc}
    \hline
     & \multicolumn{2}{|c}{$\beta = 5$}   & \multicolumn{2}{|c}{$\beta = 10$}  \\
      & scale matrix &  simulation  & scale matrix &  simulation  \\
    \hline \hline
   ARL &  6.71767  & 6.70485 [6.64896, 6.76074]&   24.2925 & 24.0729 [23.8398, 24.3060] \\
   ADD  &4.75621  & 4.74120 [4.68268, 4.79972]  &21.6381 & 21.5630 [21.3278, 21.7982] \\
    PFA & 0.33267  & 0.33309 [0.32996, 0.33622]&0.12034 & 0.12215 [0.12020, 0.12410] \\
    \hline
  \end{tabular} 
  \\
  \textbf{Case SP} \vspace{0.3cm}
    \caption{\small (Example 2) The average run length (ARL), average detection delay (ADD), and false alarm probability (PFA) computed via the scale matrix and by Monte Carlo simulation (average and 95\% confidence interval) for the barrier $A_\beta$ for $\beta = 5, \beta = 10$ for \textbf{Case SN} (top) and \textbf{Case SP} (bottom).
    }
  \label{complicated_case}
\end{table}

\subsection{Example 2: More complex case} \label{subsection_example2} In order to confirm the accuracy and efficiency of the proposed method in more complex cases, we consider the following parameter set with extra heterogeneity of the observation distribution. We define $F_3$, $F_4$, $F_5$ obtained by the exponential tilting of $F_2$ with $\theta = 0.1,0.2,-0.05$, respectively.

For the Markov chain $Z$ of \eqref{transition_discrete}, we set $m_0 = 5$ and $m_1 = 3$ with 
\begin{align*}
\bs{\beta} &= (0.344, 0.312, 0.064, 0.056, 0.024, 0.06, 0.04, 0.100), \\
\bs {K} &=
\begin{pmatrix}
0.232  & 0.128 & 0.112 & 0.144 &  0.080 \\
0.080 & 0.352 & 0.112 & 0.112 & 0.056  \\
0.096  &  0.200 & 0.248 & 0.144  &0.016 \\   
0.048 & 0.072 & 0.064 &  0.480 & 0.056 \\  
0.128 &  0.120 & 0.056 & 0.024 & 0.448 
 \end{pmatrix}, \quad
 \bs {L} =
 \begin{pmatrix}
  0.304 &   0.000  &  0.000 \\
  0.288  &  0.000  &  0.000\\
    0.000 & 0.296  &  0.000 \\
 0.000  & 0.280  &  0.000 \\
  0.000  &  0.000 & 0.224
  \end{pmatrix}, \quad
  \bs {M} =
   \begin{pmatrix}
  1.0  &    0.0  &    0.0     \\
  0.0    &  0.3    & 0.7    \\
  0.0    &  0.5    & 0.5    
   \end{pmatrix}. 
\end{align*}
Notice that there are two absorbing classes $\{(1,1)\}$ and $\{(1,2), (1,3)\}$. The distributions of observation at pre-change states $(0,1), \ldots, (0,5)$ are set to be $F_0$, $F_3$, $F_4$, $F_5$, $F_0$, respectively, and those at post-change states $(1,1), (1,2), (1,3)$ are set to be  $F_1$, $F_2$, $F_3$, respectively. Different from Example 1, the absorption probability to each set depends on the underlying state on $E_0^\nu$. In addition, even after the change point, the observation fails to be stationary.

The corresponding scale matrix for the generalized Sparre-Andersen process is again given by Theorem \ref{theorem_scale_matrix_general} (again we use Appendix \ref{remark_matrix_exp}), but  $\tilde{\ba}$, $\tilde{\bs{T}}$, $\tilde{\bs B} $ this time is of dimension $(3+5+5+5+3)+(3+5+5)=34$. Besides the difference that the involved matrix is of higher dimensional, the algorithm remains the same.  Again, we compute  the performance measures via Corollary  \ref{cor_CUSUM_noniid} for \textbf{Cases SN} and Corollaries \ref{cor_CUSUM_noniid_SP}  and \ref{cor_CUSUM_noniid_SP2}  for  \textbf{Case SP}.  In Table \ref{complicated_case}, we compare the obtained results 
 against those approximated by Monte Carlo simulation based on 100,000 paths. 
%

\appendix 

\section{Computation of the matrix exponential}  \label{remark_matrix_exp}


In Corollary \ref{example_const} and Theorem \ref{theorem_scale_matrix_general},
the  matrices $\bs T_k$ and $\tilde{\bs T}_k$ (especially the latter) tend to become large and the computation of the matrix exponentials can become unstable. Although Python and other programming languages (such as MATLAB) have built-in functions for approximating matrix exponentials, 
 we compute them in a different way by taking advantage of the form of  $\bs T_k$ and $\tilde{\bs T}_k$.

Fix $k \geq 1$.  By decomposing $\bs T_k = \bs U_k+ \bs V_k$ where 
\begin{align*}
\bs U_k :=\begin{pmatrix}
\bs T& \bs O &\ldots&\bs O \\
 &\ddots&\ddots&\vdots\\
\vdots&  &\bs T &\bs O \\
\bs O&\ldots& &\bs T
\end{pmatrix} \quad \textrm{and} \quad 
\bs V_k :=\begin{pmatrix}
\bs O& \bs B &\ldots&\bs O \\
 &\ddots&\ddots&\vdots\\
\vdots&  &\bs O &\bs B\\
\bs O&\ldots& &\bs O
\end{pmatrix},
\end{align*}
we have
\[
e^{\bs T_k (y-x)/\gamma} = \sum_{n=0}^\infty \frac {(y-x)^n} {\gamma^n n!} (\bs U_k + \bs V_k)^n, \quad 0 \leq y \leq x.
\]

Note that, by multiplication by $\bs V_k$, the locations of non-zero blocks are shifted  diagonally up-right while these are invariant to multiplication by $\bs U_k$. Hence, when considering the polynomial expansion of $(\bs U_k + \bs V_k)^n$, each term is a (diagonal) translation of a certain diagonal block matrix, where in particular the top right corner block is non-zero if and only if $\bs V_k$ is multiplied exactly $k-1$ times.  Hence, for $\bs e^\top_{k1}e^{\bs T_k (y-x)/\gamma}\bs e_{kk}$, which is the top right corner block of $e^{\bs T_k (y-x)/\gamma}$, only the subset of the terms in the expansion of  $(\bs U_k + \bs V_k)^n$ in which $\bs V_k$ is multiplied $k-1$ times (there are ${n+k} \choose {n}$ such terms), say  $V(\bs T, \bs B, n,k)$, needs to be considered. Thus we can write
\[
\bs e^\top_{k1}e^{\bs T_k (y-x)/\gamma}\bs e_{kk} = \sum_{n=0}^\infty \frac {(y-x)^n} {\gamma^n n!}  V(\bs T, \bs B, n,k), \quad 0 \leq y \leq x.
\]
We need to compute $V(\bs T, \bs B, n,k)$ for a range of $n$ and $k$. For their efficient computation, it is straightforward to construct a table of $(n,k) \mapsto V(\bs T, \bs B, n,k)$ by induction on $n$ and $k$, using elementary combinatorics. 

In particular, for the computation for CUSUM in Corollary \ref{example_const}, 
\begin{align*}\bs W(x)&=\frac{1}{\gamma}\sum_{k=1}^{\lfloor x/c\rfloor+1}
 \sum_{n=0}^\infty \frac {(c(k-1)-x)^n} {\gamma^n n!}  V(\bs T, \bs B, n,k), \quad x \geq 0, \\
 \bs W_+'(x)&= - \frac{1}{\gamma}\sum_{k=1}^{\lfloor x/c\rfloor+1}
 \sum_{n=1}^\infty \frac {(c(k-1)-x)^{n-1}} {\gamma^n (n-1)!}  V(\bs T, \bs B, n,k), \quad x > 0, \\
 \overline{\bs W}(x)
&=- \frac{1}{\gamma}\sum_{k=1}^{\lfloor x/c\rfloor+1}
\sum_{n=0}^\infty \frac {(c(k-1)-x)^{n+1}} {\gamma^{n} (n+1)!}  V(\bs T, \bs B, n,k), \quad x \geq 0, \end{align*}
where the last equality holds by Fubini's theorem. The above identities hold when $\bs T$ and $\bs B$ are replaced with   $\tilde{\bs T}$ and $\tilde{\bs B}$.
In our numerical results, we truncate the infinite series at $100$. Computing such a matrix is a basic question in PH renewal theory. We refer the reader to \cite{bini2016computing, bladt2017matrix} for techniques for computing related matrix exponentials. However, it is out of scope of this paper to evaluate the methods for matrix exponential.

\section{Proofs}

\subsection{Proof of Lemma \ref{lemma_expectation}} \label{proof_lemma_expectation}
Fix $k \in E$ and $a > 0$ throughout this proof.
For $z \in (0,1]$, we let $\p^z$ be the law of $(X,J)$ when $J$ (and hence $X$ as well) is killed upon arrival from state $k$ with probability $1-z$ (and survives with probability $z$).
The respective matrix exponent is given by
\begin{align} \label{F_z}
\bs F_z(s):=\gamma s\bs I_{\mathfrak{n}} +\bs T+\e (e^{-s C_1}) \bs B^{[z]}, \quad s \geq 0,
\end{align}
where for $l,j \in E$
\begin{align} \label{t_alpha_z}
\bs B^{[z]}_{lj} := \left\{ \begin{array}{ll} z\bs B_{lj}, & l=k, \\  \bs B_{lj}, & l \neq k. \end{array} \right.
\end{align}

Let ${\bs W}_z$ be the corresponding scale matrix. In particular, ${\bs W}_1 = {\bs W}$ for the original non-defective MAP and as in the proof of \cite[Thm.\ 1]{IP12} it is known that $\bs W_z \xrightarrow{z \uparrow 1} \bs W$. 
By \eqref{J_at_passage} applied under $\p^z$ and letting $\bs A_z := 
 \overline{\bs W}_z(a) (\bs T+ \bs B^{[z]})
$,
\[\e_{0,i}( z^{N_{\tau_a}(k)} ) = \big[ \p^z [J_{\tau_a}]\bs 1 \big]_i = \big[ (\bs I_{\mathfrak{n}}-  \bs A_z
)^{-1}\bs 1 \big]_i, \quad i \in E, \]
where the first identity follows from the fact that the process must survive at each arrival from phase $k$ until $\tau_a$, each of which is a Bernoulli trial with success probability $z$.
Differentiating in $z$ and letting $z\uparrow 1$ we obtain
\begin{align} \label{N_temp}
\e_{0,i} (N_{\tau_a}(k)) =\Big[ \Big(\bs I_{\mathfrak{n}} - \overline{\bs W}(a) (\bs T+ \bs B)\Big)^{-1}  \overline{\bs W}(a)\Big]_{i k}  \bt_{k}.
\end{align}

To see this, note that differentiability of $\overline{\bs W}_z(x)$ in $z$ follows from the identity~\eqref{J_at_passage} for the killed process,  and then
\[
 \frac \D {\D z} (\bs I_{\mathfrak{n}}- \bs A_z)^{-1} \bs 1 =   (\bs I_{\mathfrak{n}}-\bs A_z)^{-1} \Big( \frac \D {\D z} \bs A_z \Big)  (\bs I_{\mathfrak{n}} -\bs A_z)^{-1} \bs 1. 
\]
We have $(\bs I_{\mathfrak{n}} - \bs A_z)^{-1} \bs 1 
=  \bs 1  + (\bs I_{\mathfrak{n}}- \bs A_z)^{-1} \bs A_z \bs 1 \xrightarrow{z \uparrow 1} \bs 1$  observing  that $ \bs A_z \bs 1 \xrightarrow{z \uparrow 1} \overline{\bs W}(a) (\bs T+\bs B)\bs 1=\bs 0$ (recall our assumption that the original MAP is non-defective and hence $(\bs T+\bs B)\bs 1=\bs 0$). In addition, because again  $(\bs T+\bs B) \bs 1 = \bs 0$,
\begin{multline*}
\frac \D {\D z} \bs A_z \bs 1 = 
 \frac {\D \overline{\bs W}_z(a)} {\D z}
 (\bs T+ \bs B^{[z]}) \bs 1+ \overline{\bs W}_z(a)  \frac \D {\D z}  (\bs T+ \bs B^{[z]}) \bs 1 \xrightarrow{z \uparrow 1}  \overline{\bs W}(a) \frac \D {\D z}  (\bs T+ \bs B^{[z]}) |_{z=1} \bs 1. 
\end{multline*}
Because $\frac \D {\D z}  (\bs T+ \bs B^{[z]}) |_{z=1}$ is the matrix whose $(l,j)$-th entry is  $\bs B_{lj}$ when $l=k$ and zero otherwise,  
$\frac \D {\D z}  (\bs T+ \bs B^{[z]}) |_{z=1} {\bs 1}$ is the vector with $k$-th element $\sum_{j \in E}\bs B_{kj} = \sum_{j \in E}\bs t_{k} \ba_j  = \bt_k$ and zero for others. Hence, \eqref{N_temp}, equivalently the first claim of this lemma, holds. 
The second claim is a direct consequence of the first claim.
%

\subsection{Proof of Lemma \ref{lemma_expectation_dual} } \label{proof_lemma_expectation_dual}
Fix $k \in E$ and $a > 0$ throughout this proof.
Similar to the proof of Lemma \ref{lemma_expectation}, for $z \in (0,1]$, we let $\p^z$ be the law of the killed version with modification  \eqref{t_alpha_z} and ${\bs W}^d_z$ be the corresponding scale matrix of the spectrally negative MAP $(X^d, J)$.
%
By \eqref{first_passage_SP} applied under $\p^z$, for $i \in E$,
\begin{align*}\e_{x, i} (z^{N_{\tau_a}(k)}) &= [\p_{x}^z [J_{\tau_a}]\bs 1 ]_i = \Big( \Big[\bs I_{\mathfrak{n}} - \Big( \overline{\bs W}^d_z(a-x)  - \bs W^d_z(a-x)
(\bs W_z^{d})'_+(a)^{-1}
{\bs W}^d_z(a) \Big) \big(\bs T+ \bs B^{[z]} \big)
 \Big] \bs 1 \Big)_i \\
&=1+(1-z) \Big(\overline{\bs W}^d_z(a-x)    - {\bs W}^d_z(a-x)   (\bs W_z^{d})'_+(a)^{-1}
{\bs W}^d_z(a) \Big)_{ik}\bt_k.
\end{align*}
To see the latter equality, because $(\bs T+\bs B) \bs 1 = \bs 0$, the vector $ \big(\bs T+ \bs B^{[z]} \big)
\bs 1= (\bs B^{[z]} - \bs B ) \bs 1$ has its  $l$-th element equal to $(z-1) \sum_{j \in E}\bs B_{kj} = (z-1) \sum_{j \in E}\bs t_{k} \ba_j  = (z-1) \bt_k$ if $l=k$ and zero otherwise. 

The derivative of the right hand side of the above display at $z=1$ is given by
\[-\lim_{z\uparrow 1}\Big(\overline{\bs W}^d_{z}(a-x)    - {\bs W}^d_{z}(a-x)   (\bs W_{z}^{d})'_+(a)^{-1}
{\bs W}^d_{z}(a) \Big)_{ik}\bt_k,\] and it is left to note that the respective quantities converge,  see~\cite{IP12} (the proof of Thm.\ 1 and the identity in Thm.\ 5). The second claim holds immediately by the first claim.

\bibliographystyle{abbrv}
\bibliography{changempointMAP}

\end{document}